\tikzstyle{Z dot}=[inner sep=0mm, minimum size=2mm, shape=circle, draw=black, fill=zxGreen, tikzit fill={rgb,255: red,216; green,248; blue,216}, outer sep=-0.5mm]
\tikzstyle{X dot}=[shape=circle, draw=black, fill=zxRed, tikzit fill={rgb,255: red,221; green,165; blue,165}, inner sep=0 mm, minimum size=2 mm, outer sep=-0.5mm]
\tikzstyle{Z phase dot}=[draw=black, fill=zxGreen, shape=rectangle, minimum size=4.5mm, rounded corners=1.8mm, inner sep=0.5mm, outer sep=-0.5mm, scale=0.8, tikzit shape=circle, tikzit fill={rgb,255: red,216; green,248; blue,216}, font={\footnotesize\boldmath}]
\tikzstyle{X phase dot}=[Z phase dot, draw=black, fill=zxRed, tikzit fill={rgb,255: red,221; green,165; blue,165}]
\tikzstyle{H box}=[fill=zxHad, draw=black, shape=rectangle, inner sep=0.6mm, minimum height=1.5mm, minimum width=1.5mm, tikzit fill=yellow, font={\footnotesize\boldmath}]
\tikzstyle{box}=[draw=black, shape=rectangle, fill=white, minimum size=1em, inner sep=0.2em, scale=0.85, font={\scriptsize}, outer sep=-0.5mm]
\tikzstyle{black dot}=[fill=black, draw=black, shape=circle, inner sep=1pt]
\tikzstyle{sLabel}=[font={\scriptsize}, tikzit draw=black, auto]
\tikzstyle{fireZ}=[inner sep=0mm, minimum size=2mm, shape=circle, draw=zxFireColour, fill=zxGreen, tikzit fill={rgb,255: red,216; green,248; blue,216}, outer sep=-0.5mm, line width=0.3mm]
\tikzstyle{fireX}=[inner sep=0mm, minimum size=2mm, shape=circle, draw=zxFireColour, fill=zxRed, tikzit fill={rgb,255: red,221; green,165; blue,165}, outer sep=-0.5mm, line width=0.3mm]
\tikzstyle{scalar}=[rounded rectangle, rounded rectangle arc length=120, fill=gray, inner sep=2pt, font={\tiny\boldmath}, label distance=1mm, fill opacity=.25, text opacity=1]
\tikzstyle{dotted}=[-, style=dashed, draw={rgb,255: red,128; green,128; blue,128}]
\tikzstyle{hadamard}=[-, style=dashed, draw=blue]
\tikzstyle{X Web}=[-, preaction={line width=1mm, draw=zxDarkRed}, tikzit draw=red]
\tikzstyle{Z Web}=[-, preaction={line width=1.7mm, draw=zxDarkGreen}, tikzit draw=green]
\tikzstyle{XZ Web}=[-, preaction={line width=1.7mm, draw=zxDarkGreen}, preaction={line width=1mm, draw=zxDarkRed}, tikzit draw=blue]
\tikzstyle{braceedge}=[-, decorate, decoration={brace, amplitude=2mm, raise=-1mm}]
\tikzstyle{fault-free}=[-, draw={rgb,255: red,177; green,98; blue,255}, line width=1pt]
\tikzstyle{new edge style 1}=[->, draw={rgb,255: red,177; green,98; blue,255}, line width=1pt]
\tikzstyle{arrow}=[->]
\tikzstyle{logical}=[-, draw=blue]
\tikzstyle{thick}=[-, line width=0.3mm]
\theoremstyle{definition}
\newtheorem{theorem}{Theorem}[section]
\newtheorem*{theorem*}{Theorem}
\newtheorem{corollary}[theorem]{Corollary}
\newtheorem{proposition}[theorem]{Proposition}
\newtheorem{definition}[theorem]{Definition}
\newtheorem{remark}[theorem]{Remark}
\newtheorem{example}[theorem]{Example}
\newtheorem{example*}[theorem]{Example*}
\newtheorem{examples*}[theorem]{Examples*}
\def\bR{\begin{color}{red}}  
\def\bB{\begin{color}{blue}}
\def\bG{\begin{color}{green}}
\def\bP{\begin{color}{purple}}
\def\bC{\begin{color}{cyan}}
\def\beginChange{\begin{color}{blue}}
\def\endChange{\end{color}}
\def\e{\end{color}}
\newcommand{\interp}[1]{\left\llbracket#1\right\rrbracket}
\newcommand{\code}[1]{\left\llbracket#1\right\rrbracket}
\definecolor{cbpink}{RGB}{214,130,211}
\definecolor{cbyellow}{RGB}{241,131,108}
\definecolor{cbblue}{RGB}{110,148,189}
\definecolor{cbred}{RGB}{162,4,162}
\definecolor{cborange}{RGB}{251,145,10}
\definecolor{cbgreen}{RGB}{5,162,162}
\definecolor{cbcyan}{RGB}{204,234,207}
\newcommand{\autorf}[1]{%
  \begingroup%
  \renewcommand\equationautorefname{Eq.}
  \renewcommand\theoremautorefname{Thm.}
  \renewcommand\propositionautorefname{Prop.}
  \autoref{#1}%
  \endgroup%
}
\newcommand{\TextFusion}{\textsc{Fusion}\xspace}
\newcommand{\TextColour}{\textsc{Colour}\xspace}
\newcommand{\TextZElim}{\textsc{Z-Elim}\xspace}
\newcommand{\TextXElim}{\textsc{X-Elim}\xspace}
\newcommand{\TextPiCommute}{\textsc{$\pi$-Commute}\xspace}
\title{Floquetifying stabiliser codes with distance-preserving rewrites}
\author{Benjamin Rodatz, Boldizsár Poór, Aleks Kissinger}
\affiliation{University of Oxford, Oxford, UK}
\begin{document}
\allowdisplaybreaks
\maketitle
\begin{abstract}
    Stabiliser codes with large weight measurements can be challenging to implement fault-tolerantly.
    To overcome this, we propose a Floquetification procedure which, given a stabiliser code, synthesises a novel Floquet code that only uses single- and two-qubit operations.
    Moreover, this procedure preserves the distance and number of logicals of the original code.
    The new Floquet code requires additional physical qubits.
    The overhead is linear in the weight of the largest measurement of the original code.

    Our method is based on the ZX calculus, a graphical language for representing and rewriting quantum circuits.
    However, a problem arises with the use of ZX in the context of rewriting error-correcting codes: ZX rewrites generally do not preserve code distance.
    Tackling this issue, we define the notion of \emph{distance-preserving rewrite} that enables the transformation of error-correcting codes without changing their distance.

    These distance-preserving rewrites are used to decompose arbitrary weight stabiliser measurements into quantum circuits with single- and two-qubit operations.
    As we only use distance-preserving rewrites, we are guaranteed that a single error in the resulting circuit creates at most a single error on the data qubits.
    These decompositions enable us to generalise the Floquetification procedure of~\textcite{townsend-teagueFloquetifyingColourCode2023} to arbitrary stabiliser codes, provably preserving the distance and number of logicals of the original code.
\end{abstract}

\section{Introduction}
One major challenge of quantum computing is dealing with noise~\parencite{gottesmanStabilizerCodes1997}.
The goal of quantum error correction is to encode data with sufficient redundancy to detect and correct errors.
The most common and well-studied class of error-correcting codes are stabiliser codes~\parencite{gottesmanStabilizerCodes1997}.
These include CSS codes~\parencite{calderbankGoodQuantumErrorcorrecting1996, steaneMultipleparticleInterferenceQuantum1997}, surface codes~\parencite{bravyiQuantumCodesLattice1998}, and colour codes~\parencite{bombinTopologicalQuantumDistillation2006}.
Stabiliser codes are defined by a set of commuting Pauli measurements.
They encode logical qubits into a larger physical space, where the code space is the subspace stabilised by the Pauli measurements.
This defines a static stabilising group through which codes are usually studied.

Subsystem codes are a generalisation of stabiliser codes that allow for anti-commuting measurements~\parencite{kribsUnifiedGeneralized2005}.
This results in measurements that can update the group of stabilisers, leading to different instantaneous stabiliser groups at each time step.
However, subsystem codes require that all stabiliser measurements commute with the logicals, so as to ensure that we do not lose any logical information. 

Generalising this, Floquet codes~\parencite{hastingsDynamicallyGenerated2021,vuillotPlanarFloquet2021} are a recently discovered class of error-correcting codes that allow non-commuting measurements that, when treated as subsystem codes, project onto the logicals.
However, when arranged in a specific order, the measurements of a Floquet code preserve all logical information and instead update the subspace in which the logicals are encoded.
Floquet codes can outperform comparable stabiliser codes in certain architectures~\parencite{hilaireEnhancedFaulttolerance2024} and may implement some logical Clifford gates fault-tolerantly at no extra cost~\parencite{aasenAdiabaticPaths2022}.
As Floquet codes require a specific ordering of the measurements, these codes are often studied via the quantum circuit defined by their measurement schedule.

In this paper, we leverage this circuit-centric view of Floquet codes to construct new error-correcting codes from existing ones.
In particular, given an arbitrary stabiliser code, we derive a new Floquet code that uses operations of weight at most two.
This is done while preserving the distance and number of encoded logical qubits.
The main advantage of this procedure is that, as the resulting Floquet codes have measurements of weight at most two, they are substantially easier to implement. 
To do this, we take the quantum circuits obtained from infinitely repeating the measurement schedules of the original stabiliser codes and manipulate them until they represent measurement circuits of Floquet codes with measurements of weight at most two. 
Leveraging the connection between the code distance and the circuit distance of the measurement circuit, we guarantee that the resulting Floquet code has the same distance as the original stabiliser code by only doing circuit manipulations that preserve the circuit distance. 

To perform the circuit manipulations, we utilise the ZX calculus~\parencite{coeckeInteractingQuantumObservables2008}, a graphical language for representing and rewriting quantum circuits.
The ZX calculus comes equipped with a set of rewrite rules that can be used to prove the equivalence of circuits.
The ZX calculus is increasingly being used to study error correction, with previous work primarily focusing on encoders~\parencite{kissingerPhasefreeZXDiagrams2022,kissingerScalableSpiderNests2024, huangGraphicalCSSCode2023, duncanGraphtheoreticSimplification2020} and lattice surgery~\parencite{debeaudrapZXCalculus2020}.
Following a more recent trend in ZX for quantum error correction~\parencite{townsend-teagueFloquetifyingColourCode2023,bombinUnifyingFlavorsFault2024, mcewenRelaxingHardwareRequirements2023,gidneyPairMeasurement2023}, our work will focus on the measurement circuit of error correction codes.

Our Floquetification procedure first expresses the measurement circuit of the original code as a ZX diagram.
It then uses ZX rewrites to synthesise a new circuit in terms of single- and two-qubit operations.
Finally, it interprets this circuit as a novel Floquet code.
An issue with the use of ZX calculus to manipulate error-correcting codes is that ZX rewrites generally do not preserve code distance.
To tackle this problem, we introduce the notion of \emph{distance-preserving rewrites}.
Such rewrites enable the transformation of error-correcting codes without changing their distance.
This gives us a highly flexible perspective on error-correcting codes, which will likely offer insights beyond the scope of this paper.

To guarantee that the Floquetification procedure preserves the distance of the original code, we restrict ourselves to distance-preserving rewrites.
Using these, we propose an algorithm that transforms any $\code{n, k, d}$ stabiliser code with Pauli measurements of weight at most $w$ into a new $\code{n + \lceil \frac{w}{2} \rceil + \ell, k, d}$ Floquet code with only single- and two-qubit operations where $\ell \leq \log_{2}{w}$.
Our process generalises the Floquetification of~\textcite{townsend-teagueFloquetifyingColourCode2023} to arbitrary stabiliser codes, provably preserving the number of logical qubits and code distance while having a smaller overhead of physical qubits\footnote{\textcite{townsend-teagueFloquetifyingColourCode2023} synthesise a $\code{12, 2, 2}$ Floquet code from the $\code{4, 2, 2}$ stabiliser code while our procedure creates a $\code{6, 2, 2}$ Floquet code. A presentation of the Floquetification of the $\code{4, 2, 2}$ proposed by \textcite{townsend-teagueFloquetifyingColourCode2023} in terms of the language of this paper is given in \autoref{appendix:floquetification-teague}.}.

In contrast to previous works, we start with a local perspective, first focusing on Floquetifying individual Pauli measurements.
A weight-$w$ measurement is expressed as an equivalent circuit with only single- and two-qubit operations using at most $\lceil \frac{w}{2} \rceil + \log_2{w}$ ancillary qubits.
Restricting to distance-preserving rewrites guarantees that a single, undetectable error in the circuit results in at most one error on the data qubits.

Our Floquetification procedure provides a more dynamic perspective on stabiliser codes.
This enables compiling stabiliser codes with specific hardware restrictions in mind.
Simultaneously, distance-preserving rewrites lay the foundation for diagrammatic reasoning about quantum circuits in a noisy setting.

\section{Preliminaries}
\label{sec:prelims}

\subsection{Clifford circuits as quantum error-correcting codes}
Following~\parencite{delfosseSpacetimeCodes2023,townsend-teagueFloquetifyingColourCode2023,bombinUnifyingFlavorsFault2024, mcewenRelaxingHardwareRequirements2023}, we take a circuit-centric approach to quantum error correction.
Similar to~\textcite{townsend-teagueFloquetifyingColourCode2023, blackwellCodeDistance2025}, we define:

\begin{definition}[Floquet code]
 A \textit{Floquet code} on $n$ qubits consists of a measurement schedule $\mathcal M = [M_1, M_2, \dots, M_m]$ of Pauli operators. 
 We call the circuit $C_{\mathcal M}$ obtained by measuring the Pauli operators of the measurement schedule in an infinite loop the \textit{measurement circuit}.
\end{definition}

\noindent The measurement circuit can be seen as infinitely performing the measurement schedule on some unknown input state. 
As introduced by~\textcite{hastingsDynamicallyGenerated2021}, we consider the instantaneous stabiliser group $S_t$ for each step $t$ of the measurement schedule. 
We define $S_t$ as the group of known Pauli stabilisers of the circuit after the first $t$ operations.
As we assume the input state to be unknown, the instantaneous stabiliser group only depends on the measurement circuit. 
Any Pauli measurement of the measurement circuit either preserves the size of the ISG or doubles it~\parencite{hastingsDynamicallyGenerated2021}.
We call the circuit \emph{established} at timestep $T$, when for all $t \geq T$, $|S_t| = |S_T|$.
This is also known as the \emph{steady stage} of the code \parencite{blackwellCodeDistance2025}.
Let $m$ be the size of a minimal generating set for $S_T$. 
Then, at this point, the code can encode $k = n - m$ logical qubits~\parencite{townsend-teagueFloquetifyingColourCode2023}\footnote{We remark that the infinite repetition of the measurement schedule is a common mathematical trick when analysing Floquet codes to avoid boundary conditions (e.g. \parencite{townsend-teagueFloquetifyingColourCode2023,blackwellCodeDistance2025}). Boundary conditions in the beginning are addressed via the idea of establishment. To prevent having to do something similar in the end, we simply never stop.}.

This definition of Floquet codes generalises stabiliser codes and subsystem codes, each of which can be viewed as Floquet codes by taking a generating set of their stabilisers as the measurement schedule of a Floquet code. 
However, the Floquet codes obtained from stabiliser codes and subsystem codes have special properties~\parencite{townsend-teagueFloquetifyingColourCode2023}: 
\begin{definition}[Stabiliser codes]
  Let $\mathcal M = [M_1, M_2, \dots, M_m]$ be a Floquet code and $\mathcal G = M_1, M_2, \dots, M_k$ be the union of all Pauli operations in the measurement schedule. 
  We say $\mathcal M$ is a stabiliser code if $\mathcal G$ is Abelian.
\end{definition}

For subsystem codes, we observe that usually, all measurements are taken together to form the gauge group. 
The logical operators are those Paulis that commute with all the gauges. 
We define:
\begin{definition}[Associated subsystem code of an Floquet code]
  Let $\mathcal M = [M_1, M_2, \dots, M_m]$ be a Floquet code.
  The associated subsystem code of $\mathcal M$ is defined by the gauge group $\mathcal G_{\mathcal M} = \langle i \rangle M_1, M_2, \dots, M_k$. 
  Let $\mathcal S$ be a stabiliser group such that $\langle i \rangle S = Z(\mathcal G_{\mathcal M})$, i.e.\@ the centre of $\mathcal G_{\mathcal M}$.
  Then we say $\mathcal G_{\mathcal M}$ has $k$ logicals when $\mathcal N(\mathcal S) / \mathcal G_{\mathcal M}$ is isomorphic to $\langle X_1, Z_1, \dots X_k, Z_k \rangle$.
\end{definition}

Floquet codes were originally defined as codes that, when read as subsystem codes, encode less information than they actually do. 
We turn this definition around and define subsystem codes as Floquet codes that encode the same amount of information, no matter whether they are read as subsystem codes or Floquet codes:
\begin{definition}[Subsystem code]
  Let $\mathcal M = [M_1, M_2, \dots, M_m]$ be a Floquet code that encodes $k$ logical qubits. 
  We say that $\mathcal M$ is a subsystem code when the associated subsystem code $\mathcal G_{\mathcal M}$ of $\mathcal M$ has $k$ logicals. 
\end{definition}

\noindent \textcite{townsend-teagueFloquetifyingColourCode2023} shows that the measurement circuits of stabiliser codes and subsystem codes satisfy the definitions above.
Floquet codes that do not satisfy these properties cannot be understood as stabiliser codes (as they have anticommuting measurements) or subsystem codes (as they would appear to encode fewer logicals than they actually do).

Finally, we define: 
\begin{definition}[Proper Floquet code]
  We say $\mathcal M = [M_1, M_2, \dots, M_m]$ is a proper Floquet code when it is not a stabiliser code or a subsystem code.
\end{definition}

As such, Floquetification can be trivialised by viewing any stabiliser code as a Floquet code. 
However, in this work, our procedure produces Floquet codes whose measurement weight is at most two, leading to codes that are substantially simpler to implement. 
Furthermore, in~\autoref{sec:proper-floquet}, we show that the codes resulting from our procedure are proper Floquet codes.

An important property of any code is its distance, which quantifies how good a code is at detecting and correcting errors. 
For Floquet codes, we define the distance as a property of their measurement circuit. 
Thus, we first have to define what we mean by errors on the measurement circuit.
We consider our measurement circuits to be implemented fault-tolerantly, meaning that a single error inside a measurement circuit can create at most one measurement flip and a single data error:

\begin{definition}[Errors]
  \label{def:errors}
 Let $\mathcal M$ define a Floquet code. 
 Then the set of atomic errors $\mathcal E$ on $C_{\mathcal M}$ of $\mathcal M$ consists of single-qubit Pauli flips between measurements and measurement-flips which can change the outcome of the measurement and/or create a single-qubit flip on one of the outputs of the measurement. 
 The errors $E \subseteq \mathcal E$ on the measurement circuit of $\mathcal M$ are subsets of atomic errors. 
 The weight of an error $E$ is the number of atomic errors that it consists of, i.e.\@ $weight(E) = |E|$.
 Applying an error to a measurement circuit gives a new circuit $C_{\mathcal M}^E$. 
\end{definition}

Next, we observe that there are certain errors which are trivial in that they do not have any consequence on the circuit.
For example, in one time step of the measurement circuit, an error that is in the ISG of the circuit at that time step corresponds to a stabiliser and is therefore trivial. 
In other work, these errors are called \emph{benign} \parencite{blackwellCodeDistance2025} or \emph{inconsequential} \parencite{fuenteXYZRuby2024,ruschCompletenessFault2025a}, we call them \textit{trivial errors}:
\begin{definition}[Trivial error]
 Let $\mathcal M$ be a Floquet code and $E \subseteq \mathcal E$ be an error on the measurement circuit $C_{\mathcal M}$. 
 Then we say $E$ is trivial if $C_{\mathcal M} = C_{\mathcal M}^E$, i.e.\@ when the error has no impact on the underlying linear map.
\end{definition}
We remark that \textcite{blackwellCodeDistance2025} define the trivial errors in terms of local gauges of the tensor network representing our quantum circuit.
However, \textcite{ruschCompletenessFault2025a} show that this is exactly equivalent to the set of all errors that do not change the underlying linear map represented by the circuit.

Next, there is another special class of errors, namely the detectable ones. 
Following \textcite{townsend-teagueFloquetifyingColourCode2023}, without loss of generality, we can treat the measurement circuit not as a sequence of measurements of the Pauli operators but rather as projections onto the $+1$-eigenspace of the operators, representing one trivial syndrome outcome. 
For more details, see the Appendix of \parencite{townsend-teagueFloquetifyingColourCode2023}.
In the noise-free setting, there is always a non-zero chance of observing this particular measurement outcome. 
However, some errors make it impossible to ever get this expected measurement outcome. 
In particular, we can observe that if the sequence of projections onto the trivial syndrome outcome is $0$, then the probability of ever measuring this outcome is zero. 
In other words, we never obtain the trivial syndrome outcome and therefore detect the error.
\begin{definition}
 Let $\mathcal M$ be a Floquet code and $E \subseteq \mathcal E$ be an error on the measurement circuit $C_{\mathcal M}$. 
 We say $E$ is detectable when $C_{\mathcal M}^E = 0$. 
\end{definition}

Finally, using all of these concepts, we can define code distance. 
For this, we observe that the error correcting capabilities of a code only take effect after establishment, i.e.\@ once the ISG group does not grow any more.
The distance of a code can only be meaningfully defined after its establishment \parencite{blackwellCodeDistance2025}.
We have: 
\begin{definition}[Distance of a Floquet code]
 Let $\mathcal M$ be a Floquet code. 
 Then the distance of $\mathcal M$ is the minimum weight of all non-trivial, undetectable errors that happen after the code is established.
\end{definition}

This definition of distance is a natural generalisation of the distance of stabiliser codes:

\begin{proposition}
  \label{prop:stabiliser-code-distance}
 Let $S_1, \dots, S_m$ be a generating set for the stabiliser group of an $\interp{n, k, d}$ stabiliser code $S$.
 Let $\mathcal{M}_S = [S_1, \dots, S_m]$ denote the Floquet code obtained from $S$ by infinitely repeating its stabiliser generators.
 Then the distance of the Floquet code $\mathcal{M}_S$ is $d$.
\end{proposition}
\begin{proof}
 We first show that $dist(\mathcal{M}_S) \leq dist(S)$.
 Let $E$ be a minimum-weight non-trivial, undetectable error on $S$.
 Placing $E$ between any two measurements of $\mathcal{M}_S$ after establishment produces an error on $C_{\mathcal{M}_S}$ of the same weight as $E$.
 We know $E$ commutes with all stabilisers of $S$. 
 Therefore, $E$ does not flip any of the measurements and is thus undetectable in $\mathcal{M}_S$.
 Furthermore, as $E$ is non-trivial in $S$, it must also be non-trivial in $C_{\mathcal{M}_S}$.
 Hence, $E$ is a non-trivial, undetectable error on $\mathcal{M}_S$, and thus $dist(\mathcal{M}_S) \leq |E| = dist(S)$.

 To show that $dist(S) \leq dist(\mathcal{M}_S)$, let $E$ be a minimum-weight non-trivial, undetectable error on $C_{\mathcal{M}_S}$.
 In general, $E$ may contain both qubit-flip and measurement-flip components distributed over multiple time steps in the spacetime of $\mathcal M_S$.
 We will show that $E$ is equivalent to an error $E'$ of at most the weight of $E$ that consists only of qubit flips occurring in a single time step.
 
 For this, we will first push all qubit flips into a single time step. 
 This will, potentially, introduce additional measurement flips in other time steps.
 Then we will argue that, as $E$ was assumed to be undetectable, these measurement flips must cancel each other out, such that we are left only with qubit flips in a single time step. 
 
  Let $t_0$ be the first time step on which $E$ acts non-trivially.
  We push every qubit-flip component of $E$ backwards to time $t_0$.
  Whenever a qubit flip is moved past an anticommuting measurement, a compensating measurement flip is introduced so that the resulting error remains equivalent.

  This procedure does not increase the number of qubit flips.
  Indeed, qubit flips may combine or cancel, but no additional qubit flips are created.
  Hence we obtain an equivalent error $E'$ consisting of at most $|E|$ qubit flips only at time $t_0$, together with possibly some measurement flips at later times.
  Since $E$ is undetectable, so is $E'$.
  All measurements after the code has been established are predetermined, so any non-trivial measurement flip would be detectable.
  Therefore, all measurement flips in $E'$ must cancel, and $E'$ reduces to an error consisting only of qubit flips in $t_0$.

  Thus $E'$ is a non-trivial undetectable error of weight at most $|E|$, supported on a single time step.
  Interpreting $E'$ as an error on the static code $S$, we obtain a non-trivial undetectable error on $S$ of weight at most $|E|$.
  Therefore $dist(S) \leq dist(\mathcal{M}_S)$.

  Since the reverse inequality holds as well, we conclude that $dist(S) = dist(\mathcal{M}_S)$.
\end{proof}

\subsection{ZX calculus}
\label{subsec:zx-calculus}

This section introduces the ZX calculus~\parencite{coeckeInteractingQuantumObservables2008}, its generators, and its set of axioms.
Our presentation only aims to deliver a minimal understanding of the framework necessary to read this paper.
For a more thorough discussion on the topic, we refer to~\parencite{vandeweteringZXcalculusWorkingQuantum2020, KissingerWetering2024Book}.

The ZX calculus is a graphical language used to represent and reason about quantum computation.
For the purposes of this paper, we restrict to the Clifford fragment of the ZX calculus, which is complete for stabiliser quantum mechanics~\parencite{backensZXcalculusComplete2014}.
Its elementary building blocks are the green \emph{Z-spider} and the red \emph{X-spider} (therefore ZX)\footnote{If you are reading this in black and white or have limited colour vision, green = lightly shaded, red = darkly shaded.}:
\begin{align*}
 \textit{Z spider:} \qquad
 \tikzfig{ZX/generators/z-spider}
 \quad &:= \quad
 \ket{0}^{\otimes n}\! \bra{0}^{\otimes m} + e^{i k\frac{\pi}{2}} \ket{1}^{\otimes n}\! \bra{1}^{\otimes m}\\[8pt]
 \textit{X spider:} \qquad
 \tikzfig{ZX/generators/x-spider}
 \quad &:= \quad
 \ket{+}^{\otimes n}\! \bra{+}^{\otimes m} + e^{i k\frac{\pi}{2}} \ket{-}^{\otimes n}\! \bra{-}^{\otimes m}
\end{align*}
A spider has an arbitrary number of legs, corresponding to qubit ports, and a phase parameter. 
Restricting ourselves to the Clifford fragment of the ZX calculus, phases are restricted to $\alpha = k \frac{\pi}{2}$.
Since the $e^{i k \frac{\pi}{2}}$ function in the interpretation is $2 \pi$ periodic, the parameter of spiders is taken modulo $2 \pi$.
Finally, as a shorthand, we define the Hadamard box:
\[
  \tikzfig{ZX/generators/h-def-scalar-accurate} \ \ =\ \ \ket{0}\bra{+} + \ket{-}\bra{1}
\]
Here, the free floating number in the grey spider represents the global scalar applied to the linear map. 
When the global scalar is one, we omit it.
Throughout most of this paper, we will work up to global scalars, except for one proof where they are necessary. 

Using these building blocks, we can simply represent common quantum maps as well as any unitary; some examples are shown in \autoref{fig:circuit-to-zx}.
\begin{figure}[H]
\begin{minipage}{.33\textwidth}
    \begin{align*}
 \ket{0}
 \quad 
      &\longmapsto \quad
 \tikzfig{ZX/elements/x-state-zero} \\[8pt]
 \ket{+} 
 \quad &\longmapsto \quad
 \tikzfig{ZX/elements/z-state-zero} \\[8pt]
 \tikzfig{QuantumCircuit/not} 
 \quad &\longmapsto \quad
 \tikzfig{ZX/elements/not} \\[8pt]
 \tikzfig{QuantumCircuit/y} 
 \quad &\longmapsto \quad
 \tikzfig{ZX/elements/y} \\[8pt]
 \tikzfig{QuantumCircuit/s} 
 \quad &\longmapsto \quad
 \tikzfig{ZX/elements/s} \\[8pt]
    \end{align*}
  \end{minipage}
  \begin{minipage}{.33\textwidth}
    \begin{align*}
 \tikzfig{QuantumCircuit/cnot}
 \quad \longmapsto& \quad
 \tikzfig{ZX/elements/cnot} \\[8pt]
 \tikzfig{QuantumCircuit/zz-parity-check} 
 \quad \longmapsto& \quad
 \tikzfig{ZX/elements/ParityCheck} \\
 \overset{(k=0)}{=}& \quad
 \tikzfig{ZX/elements/ParityCheck-2} \\
    \end{align*}
  \end{minipage}
  \begin{minipage}{.33\textwidth}
    \begin{align*}
 \tikzfig{QuantumCircuit/zzzz-parity-check}
 \quad &\longmapsto \quad
 \tikzfig{ZX/elements/ParityCheck-4} \\
 \tikzfig{QuantumCircuit/zxyz-parity-check}
 \quad &\longmapsto \quad
 \tikzfig{ZX/elements/ParityCheck-ZXYZ}
    \end{align*} \\
  \end{minipage}
  \caption{Translation between quantum circuits and ZX diagrams. The red $k\pi$ spider indicates the outcome of the Pauli measurements.}
  \label{fig:circuit-to-zx}
\end{figure}

\begin{definition}[ZX diagram]
 We define a ZX diagram $D$ as a graph $G = (V, E)$ with phases $\alpha \colon V \to \{0, \frac{\pi}{2}, \pi, \frac{3\pi}{2}\}$ and vertex types $\tau \colon V \to \{Z, X, In\text{(put), Out\text((put))}\}$. 
 We call the collection of input and output nodes \emph{boundary nodes}, and we restrict boundary nodes to have a single leg.
 We call edges that are connected to an input or an output \textit{input edges} and \textit{output edges}; together, we call them \textit{boundary edges}.
 All other edges are called \textit{internal edges}.
 For a ZX diagram $D$, we denote the linear map it represents by $\interp{D}$.
 This function $\interp{\cdot}$ that maps diagrams to linear maps is called the \emph{interpretation} function.
\end{definition}

The ZX calculus comes equipped with a set of graphical rewrite rules.
We can apply these rewrite rules to equate diagrams that interpret to the same linear map.
In fact, the axioms of the ZX calculus are sufficient to derive any equality between qubit maps~\parencite{ngUniversalCompletion2017,jeandelDiagrammaticReasoning2018}.

\begin{remark}
  \label{rem:scalar-zx}
 There are various axiomatizations of the ZX calculus.
 The version we present is for the stabiliser fragment.
 It has been shown that any two ZX diagrams that represent the same linear map can be rewritten into one another using only these rules~\parencite{backensZXcalculusComplete2014, backensSimplifiedStabilizer2017}.
 Versions of the ZX calculus can be found in~\textcite{jeandelDiagrammaticReasoning2018} for the Clifford+T fragment, and in~\textcite{vilmartMinimalAxiomatisation2019} for the full language.
\end{remark}

\[
  \tikzfig{figures/ZX/scalar-accurate-axioms}
\]
Note that we use explicit scalars, a construction more rigorously characterised in previous works~\parencite{poorUniqueNormalForm2022,poorQupitStabiliser2023}.
For the remainder of this paper, we only show equalities up to some non-zero scalar, unless the scalars play an important role in some derivation.

We can use these rules to show relationships between ZX diagrams.
For example, we can use \TextPiCommute to show that $\pi$-phased spiders commute past each other: 
\[\tikzfig{pi-commute-proof}\]
More generally, if $D_1 = D_2$ for some diagrams $D_1$ and $D_2$, then $D_1$ can be replaced with $D_2$ in all diagrams $D$ in which it occurs:
\[\tikzfig{ZX/rewrites-general}\]

One last important \enquote{meta-rule} of the ZX-calculus is called \emph{Only Connectivity Matters \hypertarget{eq:OCM}{(OCM)}}.
This rule allows us to treat ZX diagrams based solely on their connectivity and disregard relative positions, enabling us to treat ZX diagrams as open graphs.
In practice, this allows us to rearrange ZX diagrams without changing their interpretation, as long as we preserve the connectivity.

\subsection{ZX Errors and ZX Distance}
\label{sec:zx-errors}
Earlier, we defined errors on measurement circuits as Paulis that originate from qubit flips and measurement flips. 
We can similarly define a notion of errors on ZX diagrams. 
As we will later use the ZX diagrams to reason about the measurement circuits of Floquet codes, Pauli errors on these circuits correspond to adding $\pi$-phased spiders to edges of the diagram.
We have:
\begin{definition}[Edge flip]
 Let $D = (G, \alpha, \tau)$ be a ZX diagram with $G=(V,E)$, an edge flip in $D$ is a tuple $(e, t)$ where
    \begin{itemize}
        \item $e \in E$ is an edge in $D$ and
        \item $t \in \{X, Y, Z\}$ is the error type.
    \end{itemize}
 We write $D^{\{(e, t)\}}$ to denote the diagram with edge $e$ flipped with type $t$.
\end{definition}

\noindent For example, we can apply an edge flip ((6,7), X) to the following ZX diagram (annotated with vertex IDs):
\[
 \tikzfig{prerequisites/error-example-base}^{\{(6,7), X\}}
 \qquad = \qquad
 \tikzfig{prerequisites/edge-flip-example}
\]

But then, similar to how we defined the set of errors as all possible combinations of atomic errors, we now define:
\begin{definition}[Error]
 Given a ZX diagram $D$, an error ${E}$ in $D$ is a set of edge flips ${E} = \{(e_1, t_1), \dots, (e_n, t_n)\}$, each in $D$.
 The weight of the error is the number of edge flips in ${E}$, i.e.\@ $wt({E}) = |E|$.
 We write $D^{E}$ to denote the diagram with the error.
 We say that two errors ${E}_1$ and ${E}_2$ in $D$ are \textit{equivalent} when $\interp{D^{E}_1} = \interp{D^{E}_2}$.
 And an error ${E}$ in $D$ is \textit{trivial} if $\interp{D^{E}} = \interp{D}$.
 An error is \emph{detectable} if $\interp{D^E} = 0$.
\end{definition}

We defined the distance of a Floquet code to be the minimum weight of all non-trivial, undetectable errors. 
Similarly, for ZX diagrams, we define:
\begin{definition}[ZX distance]
  \label{def:zx-dist}
 Given a ZX diagram $D$, the ZX distance of $D$ is the minimum weight of all non-trivial, undetectable errors.
\end{definition}
This notion of distance for ZX diagrams is independent of how the ZX diagrams were created. 
Throughout this work, we will consider rewrites that preserve this ZX distance. 

However, in the context of Floquet codes, there is a second notion of distance we are interested in.
Naively, a ZX diagram corresponding to the measurement circuit of a Floquet code will have ZX distance 1 --- any edge flip on an input edge will be non-trivial and undetectable. 
Therefore, to reason about measurement circuits of Floquet codes, we define: 
\begin{definition}[ZX distance after establishement]
  \label{def:zx-dist-establishement}
  Let $\mathcal M$ be a Floquet code with measurement circuit $C_{\mathcal M}$. 
  Let $D_{\mathcal M}$ be the ZX diagram obtained by translating $C_{\mathcal M}$ according to \autoref{fig:circuit-to-zx}.
  The ZX distance of $D_{\mathcal{M}}$ after establishment is defined as the minimum weight of all non-trivial, undetectable errors that only act on edges in $D_{\mathcal{M}}$ that correspond to measurements after $\mathcal M$ established. 
\end{definition}
The fact that we treat the establishment phase as error-free formally corresponds to assuming that we have performed a fault-tolerant state preparation before feeding it to the Floquet code.

We observe: 
\begin{proposition}
  \label{prop:distance-correspondence}
  Let $\mathcal M$ be a Floquet code and $D_{\mathcal M}$ be the ZX diagram corresponding to $\mathcal M$'s measurement circuit. 
  Then, the ZX distance of $D_{\mathcal M}$ after establishment is the same as the distance of $\mathcal M$.
\end{proposition}
\begin{proof}
  We show that for any error on $C_{\mathcal M}$, there exists an error of the same weight on $D_{\mathcal M}$ that has the same effect, and vice versa. 
 On $C_{\mathcal M}$, the possible atomic errors are qubit flips or measurement flips, where the latter may simultaneously induce a qubit flip. 
 Qubit flips correspond directly to edge flips on the corresponding qubit edges. 
 Measurement flips can be modelled as edge flips on one of the edges adjacent to the corresponding parameterised measurement spider. 
 A measurement flip together with a $Z$-type qubit flip on, say, the last qubit of the measurement, corresponds to a $Y$-type edge flip on the corresponding measurement edge:
  \[
 \tikzfig{fault-equivalence-zx-circuit}
  \]
 Similarly, a measurement flip combined with an $X$- or $Y$-type qubit flip corresponds to a qubit flip occurring before the measurement. 
 Thus, all atomic errors in $C_{\mathcal M}$ have corresponding edge flips in $D_{\mathcal M}$.

 Conversely, all edge flips correspond to atomic errors on $C_{\mathcal M}$.
 Edge flips on edges that correspond to qubits map directly to the respective qubit flips.
 An $X$-type edge flip on a measurement edge corresponds to a measurement flip.
 A $Z$-type edge flip can be pushed onto an adjacent qubit edge and thus corresponds to a $Z$-type qubit flip at the corresponding location.
 A $Y$-type edge flip on a measurement edge corresponds to a measurement flip together with a $Z$ qubit flip on that edge.
  Thus, we have shown that all errors on $C_{\mathcal M}$ correspond to edge flips, and vice versa.
 But then, any undetectable, non-trivial errors on $C_{\mathcal M}$ have corresponding undetectable, non-trivial errors on $D_{\mathcal M}$ of the same weight, and vice versa.
 Therefore, the minimum weight of all undetectable, non-trivial errors on each must be the same, and thus they have the same distance. 
\end{proof}
\subsection{Pauli webs}
\label{sec:pauli-webs}
Pauli webs~\parencite{bombinUnifyingFlavorsFault2024} are a decoration for the Clifford fragment of the ZX calculus.
They can be used to diagrammatically track relevant properties of ZX diagrams representing measurement circuits of Floquet codes, including detecting regions, stabilisers and logical operators.

A Pauli web highlights the edges of a ZX diagram in green and/or red (Z and/or X) according to a set of simple rules.
\begin{definition}[Pauli web]
 Given a ZX diagram $D$, a Pauli web $\{(e_1, t_1), \dots, (e_n, t_n)\}$ is a set of pairs containing an edge $e_i \in D$ and the corresponding type $t_i \in \{Z, X\}$, such that:
\begin{itemize}
    \item a spider with $k \pi$ phase can have:
    \begin{itemize}
      \item an even number of legs highlighted in its own colour and all or none of its legs highlighted in the opposite colour
    \end{itemize}
    \item a spider with $\pm \frac{\pi}{2}$ phase can have \textbf{either}:
    \begin{itemize}
      \item an even number of legs highlighted in its own colour and no legs highlighted in the opposite colour, \textbf{or}
      \item an odd number of legs highlighted in its own colour and all legs highlighted in the opposite colour
    \end{itemize}
\end{itemize}
\end{definition}
\noindent Examples of Pauli webs on diagrams with one spider include:
\[
 \tikzfig{prerequisites/pauli-web-examples}
\]

Locally, restricted to the boundary edges of a single spider, the conditions of a Pauli web guarantee that the Pauli web corresponds to a stabiliser of that spider. 
This means that Pauli webs can be understood as tracking how spiders can `fire' $\pi$ phases without changing the semantics of a diagram~\parencite{borghansZXcalculusQuantumStabilizer2019}.
That is, for any spider covered by a Pauli web, we can introduce $\pi$ spiders matching the colour of each highlighted edge without changing the underlying linear map.
For example, we have:
\[\tikzfig{prerequisites/firing-examples}\]
We define the term \enquote{firing a spider $s$ according to a Pauli web $w$} to mean the action of introducing $\pi$ spiders of type $t$ on all edges $((s, s'), t) \in w$ that involve $s$.
For clarity, we draw the spiders that are being fired according to the Pauli web with an orange boundary.

Following~\textcite{townsend-teagueFloquetifyingColourCode2023}, we differentiate detecting regions, stabilising Pauli webs and logical Pauli webs, depending on the types of the covered edges:

\begin{definition}[Detecting region]
 Given a ZX diagram $D$, a detecting region is a Pauli web that highlights no boundary edges. 
\end{definition}

\noindent Detecting regions allow us to reason about which errors are detectable, i.e.\@ which errors make the diagram go to 0.
More precisely, if a detecting region highlights an edge in green, edge flips of type $X$ and $Y$ on that edge are detected.
Similarly, if the edge is highlighted in red, edge flips of type $Y$ and $Z$ are detected, and if it is highlighted in red and green, errors of type $X$ and $Z$ are detected. 
That is, edge flips that are different from the type of the highlighting are detected.
\begin{definition}[Overlap between an error and a detecting region]
 Let $E = \{(e_1, t_1), \dots, (e_n, t_n)\}$ be an error and $w = \{(e'_1, t'_1), \dots, (e'_m, t'_m)\}$ be a detecting region on some diagram $D$.
 The overlap of $E$ and $w$ is defined as
  $\{(e, t) | (e, t) \in E \land (e, t') \in w \text{ such that } t \not= t'\}$.
\end{definition}

\begin{theorem}
  \label{thm:detecting-region}
 Let $E$ be an error and $w$ be a detecting region on some diagram $D$ such that $\interp{D} \neq 0$.
 If $E$ has an odd overlap with $w$, then $\interp{D^E} = 0$.
 We say that $E$ is detectable by $w$.
\end{theorem}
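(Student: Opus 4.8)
The plan is to reason about what an error and a detecting region do when viewed through the lens of ``firing'' $\pi$ spiders, as described in the discussion of Pauli webs. The key observation is that an edge flip $(e, t)$ acting on a diagram $D$ is exactly the same as inserting a $\pi$-spider of type $t$ on the edge $e$: that is, $D + (e,t)$ equals the diagram obtained by placing a type-$t$ $\pi$-node on $e$. So an arbitrary error $E = \{(e_1,t_1),\dots,(e_n,t_n)\}$ corresponds to a collection of inserted $\pi$-nodes.

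First I would recall that a detecting region $w$ is a Pauli web highlighting no boundary edges, and that ``firing'' all spiders of $D$ according to $w$ leaves $\interp{D}$ unchanged — this is the content of the remark that webs track how spiders can fire $\pi$ phases without changing semantics (via \TextXElim, \TextZElim, \TextFusion, and \TextPiCommute). The firing introduces, on each highlighted edge $(s,s')$ with type $t'$, a pair of $\pi$-nodes of type $t'$ (one near each endpoint), which then cancel. Now consider the diagram $D + E$ with its inserted error $\pi$-nodes, and fire every spider of $D$ according to $w$. On an internal edge $e = (s,s')$ that is highlighted by $w$ with type $t'$ and that also carries an error flip $(e,t)$ with $t \neq t'$, the two $\pi$-nodes coming from firing the two endpoints, together with the error $\pi$-node, no longer simply cancel: pushing the error node past one of the firing nodes uses the \TextPiCommute rule, which on a spider of the opposite colour flips the sign of a $\tfrac{\pi}{2}$ phase or, more usefully here, produces a commutation scalar of $-1$ when a $\pi$-node of one colour is pushed through a $\pi$-node of the other colour. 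The net effect is that each element of the overlap of $E$ and $w$ contributes a factor that I want to track carefully.

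The cleanest way to make this precise is: start from $D + E$; fire all spiders according to $w$; simplify. The fired $\pi$-nodes on non-overlap edges cancel as usual, and the error $\pi$-nodes on edges not touched by $w$, or touched by $w$ with the \emph{same} type, can be absorbed/commuted freely (a $\pi$-node commutes trivially through a like-coloured $\pi$-node, and through a spider of its own colour by \TextFusion). On each overlap edge, commuting the error $\pi$-node through the firing $\pi$-nodes of the opposite colour produces a $(-1)$ each time; but more to the point, after all the dust settles one is left with $\interp{D}$ multiplied by $(-1)^{|E \cap w|}$ where $|E\cap w|$ is the overlap count — \emph{except} that this computation was supposed to show $\interp{D+E} = \pm\interp{D}$, which is not what the theorem claims. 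So the actual argument must be subtler: the firing of $w$ on $D$ alone gives $\interp{D}$, but firing $w$ on $D+E$ must be shown to give $0$ when the overlap is odd. The mechanism is that firing according to $w$ is only an identity because the inserted $\pi$-nodes \emph{fuse in pairs on each highlighted edge}; an odd-overlap error breaks exactly one such pairing per overlap edge, and the surviving $\pi$-node(s) get pushed to a place where they annihilate the state. Concretely, I expect the argument to route through a Pauli-web/stabiliser identity: a detecting region with $in(w)$ and $out(w)$ both trivial corresponds to an operator that acts as the identity scalar on $\interp{D}$, and an error anticommuting with it (odd overlap) forces $\interp{D+E}$ into the $-1$ eigenspace of something that $\interp{D}$ lies in the $+1$ eigenspace of — hence $0$. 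I would formalize this by writing the detecting region as a product of Pauli operators that can be slid across the diagram from inputs to outputs (picking up nothing, since both ends are trivial), using the error to introduce a sign, and concluding $\interp{D+E} = -\interp{D+E}$, so $\interp{D+E}=0$.

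The main obstacle, and where I would spend the most care, is bookkeeping the signs and the \TextPiCommute applications so that ``odd overlap'' provably yields exactly one uncancelled sign/eigenvalue mismatch rather than an accidental double-count (recall a $Y$ flip is a single edge carrying both $X$ and $Z$, and a doubly-highlighted edge of $w$ can overlap it in two ways — I must check the overlap definition handles this consistently so that, e.g., a $Y$ error on a $Y$-highlighted edge contributes even, i.e.\ $0$, overlap). I would also need the hypothesis $\interp{D}\neq 0$: it rules out the degenerate case and guarantees that the ``$+1$ eigenspace'' statement is non-vacuous, so that $\interp{D+E} = -\interp{D+E}$ genuinely forces vanishing. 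Induction on $wt(E)$ is a viable fallback structuring device: a single edge flip overlapping $w$ oddly (i.e.\ once) is the base case, handled by firing the two endpoints of that edge and invoking \TextPiCommute once; the inductive step peels off flips one at a time, tracking the parity of the remaining overlap.
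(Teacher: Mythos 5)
You take the same overall approach as the paper — fire all spiders of $D$ according to $w$ and track the $(-1)$ signs that appear when a fired $\pi$-node of one colour is pushed past an error $\pi$-node of the other colour — but there is a concrete wrong step in the middle that you notice is problematic yet never properly repair. You claim that firing $w$ on $D+E$ and simplifying leaves ``$\interp{D}$ multiplied by $(-1)^{|E\cap w|}$.'' It does not: the error $\pi$-nodes are part of $D+E$ and survive the whole firing-and-cancelling process. The pairs of fired nodes on each highlighted edge still cancel with one another (they are the same colour), picking up a $(-1)$ whenever one of them is commuted across an anticommuting error node, and the error nodes stay put. So you recover the diagram $D+E$ itself with a scalar, giving $\interp{D+E} = (-1)^{|\text{overlap}|}\,\interp{D+E}$; for odd overlap this immediately forces $\interp{D+E}=0$, with no eigenspace argument needed. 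Your proposed replacement mechanism — that an odd-overlap error breaks a pairing and leaves a surviving unpaired $\pi$-node that annihilates the state — is not what happens: the fired pairs still cancel, and only the accumulated sign changes. Your stabiliser/eigenspace sketch does eventually land on the right equation $\interp{D+E}=-\interp{D+E}$, but only as an unjustified guess rather than a derivation.

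You also misattribute the role of $\interp{D}\neq 0$. The identity $\interp{D+E}=-\interp{D+E}$ forces $\interp{D+E}=0$ unconditionally, so no ``non-vacuous eigenspace'' issue arises at that stage. The hypothesis is needed earlier: firing $w$ on the error-free $D$ yields $\interp{D}=c\,\interp{D}$ for some scalar $c$ arising from rewrite rules that hold only up to global phase, and $\interp{D}\neq 0$ is precisely what pins down $c=1$. Without that, firing $D+E$ would carry an unknown global factor $c$ in addition to the sign from the overlap, and the argument would not close. Your concern about $Y$ flips and overlap parity is legitimate and is indeed handled by the paper's overlap definition: a $Y$ flip on a $Y$-highlighted edge contributes two elements to the overlap set, hence even parity, as you expect.
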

\noindent The proof is similar to the one given by~\textcite[Lemma 5]{fuenteXYZRuby2024}, here adapted for Pauli webs on ZX diagrams.
\begin{proof}
  Note that this proof relies on global scalars. 
  Therefore, in contrast to the rest of the paper, for this proof, we will use the scalar-accurate version of the ZX calculus.\\ 
 Let $D$ be a ZX diagram with a detecting region $w$ and an error $E$.
 First, we fire all the spiders in $D$ according to $w$, i.e.\@ without the error.
 We observe that this might change the global phase of $D$ by some factor $c$.
 Every edge in $D$ now either has no $\pi$ spiders (if it is not in $w$) or two $\pi$ spiders of the type determined by $w$.
 We can remove these spiders without affecting the global phase using \TextFusion, \TextXElim and \TextZElim, bringing us back to a diagram of the original shape.
 This means that $\interp{D} = \interp{c * D}$, implying that $c = 1$ in the error-free case, as we assumed $\interp{D} \neq 0$.

 Now, we repeat the procedure on $D^E$, i.e.\@ the diagram that includes the errors.
 Once again, we fire the spiders of $D$ in $D^E$ according to $w$. 
 So we do not fire the spiders that were introduced by $E$.
 As we have just shown, this does not affect the global phase.
 However, for an edge that has an error on it, $-1$ scalars are introduced when we commute the fired spiders through the errors:
  \[
 \tikzfig{prerequisites/detecting-region-proof}
  \]
 Once the fired spiders are commuted past the errors, they cancel each other out to give back the original diagram (up to the global phase).
 If we have an odd overlap, we commuted an odd number of fired spiders past anti-commuting errors, and therefore we have $\interp{D^E} = \interp{-(D^E)}$.
 This implies that $\interp{D^E} = 0$, concluding the proof.
\end{proof}

As an example, we can show that two consecutive weight-two Pauli Z measurements form a detecting region.
That is, in the error-free case, the two measurements should give the same outcome; however, if an error occurs, the following happens:
\[
 \tikzfig{prerequisites/detecting-region-example}
\]
This means that the probability of the above diagram is zero, proving that we cannot get the same measurement outcome when a single error happens between them.

We can also use Pauli webs to track stabilisers and logicals.
\begin{definition}[(Co-) stabilising Pauli web]
 Given a ZX diagram $D$, a  Pauli web that highlights no input edges is a stabilising Pauli web, and one that highlights no output edges is a co-stabilising Pauli web. 
\end{definition}

\noindent For example, for a stabilising Pauli web, we have:
\[\tikzfig{prerequisites/stabiliser-example}\]
In the first step, we fire all spiders in the diagram according to the web.
All internal edges in the diagram have either no (if they are not highlighted) or two spiders with a $\pi$ phase introduced on them.
Therefore, in the second step, the spiders on the internal edges cancel out.
Only the introduced spiders on the boundary edges remain.
As such, we have shown that this circuit stabilises the corresponding Pauli $Z_1Z_3Z_4$.

Given a ZX diagram for a quantum circuit, there is an exact correspondence between stabilising Pauli webs and the stabilisers of the circuit as tracked using group theory. For more details, see \autoref{prop:webBijectiveStabs}.

\noindent Similarly, we can use Pauli webs to track how logical operators propagate through a circuit:
\[\tikzfig{prerequisites/logical-example}\]
where we push the $\pi$ spiders through the diagram by firing the spiders according to the web.

\noindent To identify Pauli webs corresponding to non-trivial logical operators, we can define:
\begin{definition}[Logical Pauli webs]
 Given a ZX diagram $D$, a (non-trivial) logical Pauli web is a Pauli web that is not a combination of stabilising and co-stabilising Pauli webs.
\end{definition}
This definition guarantees that logical Pauli webs highlight input and output edges, as usually required of logical Pauli webs~\parencite{townsend-teagueFloquetifyingColourCode2023}.

Once again, the relationship between logical Pauli webs and logicals of the circuit, i.e.\@ $N(S_t) / S_t$ for some instantaneous stabilising group $S_t$, can be formalised; see \autoref{prop:logicals}.
As we mostly care about ZX diagrams up to global phase, except for when proving the properties of Pauli webs, from now on, we will omit global phases.
\section{Distance-preserving rewrites}
\label{sec:distance-preservation}
\subsection{Definition}
The ZX calculus has a complete set of rewrite rules which preserve the semantics of the diagrams.
However, these rules can change other attributes, including the ZX distance.
As the distance of a ZX diagram corresponds to the number of errors the diagram can detect, reducing the distance is undesirable in the context of quantum error correction.
Therefore, we propose a new notion we call \textit{distance-preserving rewrites}.
These are a restriction of the allowed rewrites to the ones that are guaranteed to preserve the distance of the ZX diagram.
We present a family of distance-preserving rewrites, which are sufficient to rewrite any Pauli measurement in terms of spiders of degree at most three.

Before providing a formal definition of distance-preservation, let us consider the rewrite~\eqref{naive-rewrite} that is not distance-preserving: the naive implementation of a four-qubit Pauli-Z measurement in terms of one auxiliary qubit and four CNOTs.
\begin{gather}
\tag{$r_{\text{naive}}$}\label{naive-rewrite}\refstepcounter{equation}
\tikzfig{figures/rewrites/naive-rewrite}
\end{gather}
\noindent This is a sound rewrite, consisting of a set of spider unfusions. Therefore, we are guaranteed that the two circuits implement the same linear map.
However, it decreases the distance.
A weight-one error in the new diagram propagates through the circuit to become a weight-two error in the original circuit:
\[\tikzfig{figures/rewrites/naive-rewrite-error}\]
\noindent Thus, if we have a distance-two code, for example, the $\code{4, 2, 2}$ code, yet use this implementation of the measurement, there exist single errors that are not detectable — we have decreased the error-detecting capability of our code.

To prevent this, we restrict the set of allowed ZX rewrites:

\begin{definition}[Distance preserving]
Let $D_1, D_2$ be diagrams that are semantically equivalent, i.e.\@ $\interp{D_1} = \interp{D_2}$.
Then we say the rewrite $r: D_1 \to D_2$ is \emph{distance-preserving} if for all diagrams $D$ we have:
\[\tikzfig{rewrites/dist-preservation}\]
\end{definition}

Distance preservation states that if we replace $D_1$ with $D_2$, no matter in what context, the distance of the overall diagram is preserved.
This means that observing that two diagrams, $D_1$ and $D_2$, can distance-preservingly be rewritten into one another is a non-trivial statement, even if they have distance 1.
This is due to the fact that we can consider a larger diagram $D$ in which $D_1$ occurs that has a higher distance $d > 1$.
Now, rewriting $D_1$ into $D_2$ in this context, we are guaranteed that the resulting diagram will still have distance $d$.
In turn, this means that, as distance preservation quantifies over all possible contexts, formulated like this, it is difficult to verify.
Therefore, we observe the following:

\begin{proposition}
\label{prop:dist-preservation}
A semantic preserving rewrite $r : D_1 \to D_2$ is distance-preserving if, for any error $E_1$ in $D_1$, either
\begin{itemize}
\item $E_1$ is detectable in $D_1$, or
\item there exists an error $E_2$ in $D_2$ such that $|E_2| \leq |E_1|$ and $\interp{D_1^{E_1}} = \interp{D_2^{E_2}}$.
\end{itemize}
Similarly, for any error $E_2$ in $D_2$ the same condition holds.
\end{proposition}
\begin{proof}
Let $D$ be some context in which $D_1$ occurs.
Let $E_1$ be a non-trivial, non-detectable error in $D$ with $D_1$.
 Then we consider the error $E'_1$, which is the restriction of $E_1$ to internal edges of $D_1$.
 By assumption, there must exist some error $E'_2$ in $D_2$ with $|E'_2| \leq |E'_1|$ such that $\interp{D_1^{E'_1}} = \interp{D_2^{E'_2}}$. 
 But then we can construct a non-trivial, non-detectable error $E_2$ in $D$ with $D_2$ by replacing $E'_1$ with $E'_2$. We have $|E_2| \leq |E_1|$. As this holds for all non-detectable errors in $D$ with $D_1$, we must have that the distance of $D$ with $D_1$ is at least the distance of $D$ with $D_2$.
The same argument holds in the other direction; therefore, for all $D$ the distance of $D$ with $D_1$ must be the same as the distance of $D$ with $D_2$.
\end{proof}

\begin{remark}
In follow-up work \autocite{rodatzFaultTolerance2025}, we focus on this sufficient condition for distance preservation and call it \textit{fault-equivalent rewrites}.
In \autoref{appendix:fe-equals-dist-pres}, we show that the two definitions are, in fact, equivalent.
However, they have very different intuitions: in this paper, we leverage the distance-preserving properties of fault equivalence to manipulate quantum error correction protocols.
In \autocite{rodatzFaultTolerance2025}, we use fault equivalence to reason about fault-tolerant circuits, observing that fault-tolerant gadgets, such as state preparation or syndrome extraction, are correct if and only if they are fault equivalent to some idealised specification.
\end{remark}

To further simplify checking of distance preservation, we propose another sufficient condition for distance preservation that is even easier to check.
Instead of checking that all undetectable errors have an equivalent error on the other diagram, we only have to check that all undetectable errors can be pushed to the boundary of the diagram without increasing in size.
This will immediately give us a corresponding error in the other diagram:
\begin{proposition}
\label{prop:pushing-out}
A semantic preserving rewrite $r : D_1 \to D_2$ is distance-preserving if, for any error $E_1$ in $D_1$, either
\begin{itemize}
\item $E_1$ is detectable in $D_1$, or
        \item there exists an error $E_1'$ in $D_1$ that only acts on the boundary edges of $D_1$ such that $|E_1'| \leq |E_1|$ and $\interp{D_1^{E_1}} = \interp{D_1^{E_1'}}$.
\end{itemize}
Similarly, for any error $E_2$ in $D_2$ the same condition holds.
\end{proposition}
\begin{proof}
Let $r: D_1 \to D_2$ satisfy the condition above that any undetectable error in $D_1$ can be pushed to the boundary without increasing in size.
 Then, for any error $E_1$ in $D_1$, we can find an equivalent error $E_1'$ of at most the same weight on the boundary. 
 But this gives us an error $E_2$ on $D_2$ by simply taking an error that acts on exactly the same boundary edges as $E_1'$ on $D_1$. 
 This error satisfies $|E_2| = |E_1'| \leq |E_1|$ and $\interp{D_2^{E_2}} = \interp{D_1^{E_1'}} = \interp{D_1^{E_1}}$.
Analogously, we can find equivalent errors for all errors $E_2$ in $D_2$.
But then, by \autoref{prop:dist-preservation}, $r$ is distance-preserving.
\end{proof}

\begin{corollary}
\label{corr:internal-edges}
Let $r: D_1 \to D_2$ be a semantics-preserving rewrite, where neither $D_1$ nor $D_2$ have internal edges.
Then $r$ is distance-preserving.
\end{corollary}
\begin{proof}
As neither $D_1$ nor $D_2$ have internal edges, all errors can trivially be pushed to the boundary.
Therefore, by \autoref{prop:pushing-out}, they must be distance-preserving.
\end{proof}
\subsection{Basic distance-preserving rewrites}
To explore whether a rewrite $r: D_1 \to D_2$ is distance-preserving, we consider all possible errors in $D_2$ and check whether they are either detectable or have an equivalent error of at most the same weight in $D_1$ and vice versa.
By \autoref{prop:pushing-out}, we can simplify this process by showing that all errors can be pushed to the boundary without increasing in size, in which case they naturally have a correspondence on the other diagram:

\begin{theorem}
\label{thm:elim-rewrite}
The following rewrite is distance-preserving:
\begin{gather}
\tag{$r_{\text{elim}}$}\label{elim-rewrite}\refstepcounter{equation}
\tikzfig{figures/rewrites/elim-rewrite}
\end{gather}
\end{theorem}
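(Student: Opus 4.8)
The plan is to verify the sufficient condition of the preceding proposition for both $r_{\text{elim}}$ and its inverse $r_{\text{elim}}^{-1}$, using the pushing-out strategy described just above: a flip on a boundary edge of one side of \eqref{elim-rewrite} is, by construction, a flip of the same weight on a boundary edge of the other side, so after discarding the boundary part of an error we only ever have to reason about errors supported on \emph{internal} edges. If one of the two sides of \eqref{elim-rewrite} has no internal edges, that direction of the argument is immediate once we recall $\interp{D_1} = \interp{D_2} \neq 0$; the work concerns the side that does have internal edges, which I will call $D_2$.

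For $D_2$ I would enumerate the finitely many errors $E_2$ supported on its internal edges — it suffices to consider, for each internal edge $e$, the flips $(e,X)$, $(e,Z)$ and the weight-one $Y$-flip $\{(e,X),(e,Z)\}$, together with their combinations across the few internal edges. For each such $E_2$ I would ``fire'' the spiders adjacent to the flipped edges, i.e. introduce the matching $\pi$-spiders and commute them through the diagram using \TextPiCommute, \TextFusion, \TextXElim and \TextZElim, exactly as in the proof of \autoref{thm:detecting-region}. I expect precisely two outcomes, and the argument is to show one always occurs: either the $\pi$-spiders can be pushed all the way onto boundary edges of $D_2$ without increasing the weight (any leftover phase change of an internal spider being absorbed), in which case undoing the rewrite yields an error $E_1$ in $D_1$ with $wt(E_1) \leq wt(E_2)$ and $\interp{D_1 + E_1} = \interp{D_2 + E_2}$, so $E_2$ is harmless; or propagating the flips forces a $\pi$-spider through an anti-commuting leg of a detecting region of $D_2$ — the natural candidate being the Pauli web certifying the (repeated) measurement — so that $E_2$ has odd overlap with that region and \autoref{thm:detecting-region} gives $\interp{D_2 + E_2} = 0$, i.e. $E_2$ is detectable. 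I would display the relevant detecting region(s) explicitly so that the overlap count is transparent.

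The step I expect to be the main obstacle is exactly the failure mode that disqualifies the naive rewrite \eqref{naive-rewrite}: an $X$-flip adjacent to a $Z$-spider is copied onto \emph{every} leg of that spider when it is fired (and dually a $Z$-flip through an $X$-spider), so a single internal flip can fan out into a weight-$\geq 2$ boundary error and break distance. The heart of the proof is therefore the verification that in \eqref{elim-rewrite} each such fan-out is either confined to a single boundary edge — the remaining copies landing on internal edges that fuse away, or cancelling against a detecting region — or is intercepted by a detecting region before it can spread. Making this case analysis exhaustive over all internal edges, both flip types together with $Y$, and both directions of the rewrite is the bulk of the work; the remainder is the routine $\pi$-spider bookkeeping already developed for \autoref{thm:detecting-region}.
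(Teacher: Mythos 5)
Your overall strategy — reduce to internal edges via the pushing‑out observation, then enumerate internal‑edge errors and show each is either detectable or pushable — is exactly the template the paper uses for the more substantial rewrites such as $r_{\text{fuse}}$, $r_5$ and $r_{n^+}$. But for $r_{\text{elim}}$ specifically you have misread the situation: \emph{neither} side of \eqref{elim-rewrite} has any internal edges at all. The paper's entire proof is the one‑line observation ``Neither diagram has internal edges, therefore, by the sufficient condition above, the rewrite is distance non-decreasing in both directions and therefore distance-preserving.'' Your plan opens by saying ``the work concerns the side that does have internal edges,'' and then devotes the bulk of the write‑up to a case analysis over internal edges of $D_2$, firing $\pi$‑spiders, locating detecting regions, and worrying about the fan‑out failure mode that breaks \eqref{naive-rewrite}. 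None of that analysis has anything to apply to here, because the set of internal edges on both sides is empty; the sufficient condition is satisfied vacuously.

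A signal you missed is the worked example immediately following the theorem, which considers ``a single X flip on the right \emph{boundary} leg of the spider'' — not an internal leg — precisely because there are no internal legs to flip. That example is illustrating the undo‑the‑rewrite step of the pushing‑out strategy, not a genuine internal case. So your proposal is not wrong in method, and the machinery you describe is the right machinery for the harder theorems that follow, but as a proof of \autoref{thm:elim-rewrite} it substitutes a lengthy hypothetical analysis for a trivial observation and, as written, would never terminate in a proof because it presupposes a nonempty set of internal edges that does not exist. Before launching the case analysis, you should first check whether there are internal edges at all; when there are none, stop.
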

\begin{proof}
Neither diagram has internal edges; therefore, by \autoref{corr:internal-edges}, the rewrite is distance-preserving.
\end{proof}

\noindent To see \autoref{prop:pushing-out} in action, consider the following error on $D_2$:
\[\tikzfig{figures/rewrites/elim-rewrite-proof}\]
Here, we have a single X flip on the right boundary leg of the spider.
We can undo the rewrite $r_{elim}: D_1 \to D_2$ to obtain an equivalent error on $D_1$, fulfilling our distance-preservation condition.
However, in order to undo $r$, no errors can be inside $D_2$.
Therefore, we first have to push them out.

\begin{theorem}
\label{thm:fuse-rewrite}
The following rewrite of any green spider with more than one edge is distance-preserving:
\begin{gather}
\tag{$r_{\text{fuse}}$}\label{fuse-rewrite}\refstepcounter{equation}
\tikzfig{figures/rewrites/fuse-rewrite}
\end{gather}
\end{theorem}
\begin{proof}
The diagram on the RHS only has one internal edge.
Therefore, we have to consider three possible errors on this edge: an X flip, a Y flip and a Z flip.
We have:
  \[\tikzfig{figures/rewrites/fuse-rewrite-proof}\]
The X error is trivial as it can be absorbed by the newly introduced spider using \TextPiCommute.
The Y error consists of an X error and a Z error.
Due to \TextPiCommute and since we ignore global phases, it does not matter in which order we add them to the edge.
The X flip can be absorbed like the X error, and the Z flip can be pushed out without increasing in weight using \TextFusion.
The last error can similarly be pushed out without increasing in weight.
As above, after having pushed out the errors, we can undo the rewrite to obtain an equivalent error on the original diagram with at most the same weight.
Therefore, in this direction, the rewrite satisfies the sufficient condition of \autoref{prop:pushing-out}.

For the inverse, we observe that the LHS has no internal edges; therefore, the inverse rewrite also satisfies \autoref{prop:pushing-out}, and so the rewrite is distance-preserving.
\end{proof}

Next, we introduce a rewrite to express four-legged spiders in terms of three-legged spiders.

\begin{theorem}
\label{thm:five-legged-spider}
The following rewrite for the five-legged spider is distance-preserving:
\begin{gather}
\tag{$r_{5}$}\label{five-legged}\refstepcounter{equation}
\tikzfig{figures/rewrites/five-legged}
\end{gather}
\end{theorem}
\begin{proof}
To prove that~\eqref{five-legged} is distance-preserving, we first consider errors with only X flips and errors with only Z flips before considering more general errors.

 For X flips, we observe that the diagram has one green detecting region that can detect an odd number of flips of type X (i.e.\@ the opposite colour of the detecting region).
  \[\tikzfig{figures/rewrites/five-legged-detecting-region}\]
\noindent Therefore, we only have to consider errors with an even number of type X flips.
Up to symmetry, these are:
  \[\tikzfig{figures/rewrites/five-legged-X-flips}\]
That is, errors with an even number of flips of type X can be pushed to the boundary without increasing their weight.

For Z flips, we observe that the diagram only consists of Z spiders.
Therefore, the flips can be pushed around freely, cancelling each other out.
Eventually, we end with at most one Z flip, which we can push to any boundary edge.

For errors of weight $w$ that consist of X, Y, and Z flips, we can decompose the Y flips into X flips and Z flips.
As the number of X, Y, and Z flips initially totals $w$, after decomposition, we have at most $w$ X flips.
We have shown that these can be pushed out without increasing in weight.
The remaining Z flips can first be combined to leave at most one flip, which can be pushed to an outer edge that already has an X flip, not increasing the weight of the errors on the outer edges.
Therefore, the error on the outer edges is at most weight $w$.
This way, any non-detectable combination of flips can be pushed out without increasing their weight.
As the LHS has no internal edges, the rewrite preserves distance.
\end{proof}

\begin{theorem}
\label{thm:four-legged-spider}
The following rewrite for the four-legged spider is distance-preserving:
\begin{gather}
\tag{$r_{4}$}\label{four-legged}\refstepcounter{equation}
\tikzfig{figures/rewrites/four-legged}
\end{gather}
\end{theorem}
\begin{proof}
The proof that~\eqref{four-legged} is distance-preserving follows from the following derivation that uses solely distance-preserving rewrites:
  \[
\tikzfig{rewrites/four-legged-proof}
  \]
\end{proof}

\subsection{A family of distance-preserving rewrites}
Next, we utilise the implementation of four-legged spiders to create distance-preserving rewrites of arbitrary even-legged spiders.

While the intuition behind the proofs remains unchanged, we present a more general approach that is based on linear algebra over $\mathbb F_2$.
This is similar to~\textcite{derksDesigningFaulttolerant2024}, which employs this approach to study fault-tolerant circuits at various levels of abstraction.

\begin{definition}[Edge-flip-vector for ZX diagrams]
 Given a ZX diagram $D$ with internal edges $e_1, \dots, e_n$, an error vector $\vec{v} \in (\mathbb{F}_2)^{2n}$ for an error $E$ on $D$ is defined as:
    \[
\vec{v}[i] =
\begin{dcases}
 1 & \text{if } (i \leq n \land (e_i, X) \in E) \lor (i > n \land (e_{i - n}, Z) \in E) \\
0 & \text{otherwise}
\end{dcases}
    \]
\end{definition}
\noindent The first $n$ entries of an error vector correspond to $X$ flips on the internal edges and the second $n$ entries correspond to $Z$ flips.
To capture whether an error is detectable, we furthermore define:

\begin{definition}[Detector error matrix]
Given a ZX diagram $D$ with $n$ internal edges and $m$ independent detecting regions $D_1, \dots, D_m$, the detector error matrix $P$ is an $m{\times}2n$-matrix over $\mathbb F_2$, where each row corresponds to a detection region.
We have:
  \[
P[i][j] =
\begin{dcases}
 1 & \text{if edge flip } e_i \text{ is detectable by detector } D_j \\
0 & \text{otherwise}
\end{dcases}
  \]
\end{definition}

\noindent By \autoref{thm:detecting-region}, we know edge flips are detectable if they have odd overlap with at least one of the detecting regions of the diagram, i.e.\@ if $P \vec{v}$ has at least one non-zero entry.
As such, to explore all undetectable edge flips of a distance-preserving rewrite, we have to consider the null space of $P$.
In particular, we give a basis for the null space and show that any linear combination of basis states can be pushed out without increasing their weight.
\begin{restatable}{theorem}{inductivePlusDistRewrite}
\label{thm:inductive-plus-dist-rewrite}
The following rewrites for $(2n \plus 1)$-legged spiders is distance-preserving:
\begin{gather}
\tag{$r_{2n^+}$}\label{n-plus-legged}\refstepcounter{equation}
\tikzfig{rewrites/recursive-spiders/recursive-spiders-plus}
\end{gather}
where we construct a $(2n \plus 1)$-legged spider out of an $n$-legged spider, an $(n + 1)$-legged spider, and $n$ four-legged spiders.
\end{restatable}
\begin{proof}
See \autoref{appendix:distance-preserving-rewrites}.
\end{proof}

\begin{restatable}{theorem}{inductiveDistRewrite}
\label{thm:inductive-dist-rewrite}
The following rewrite for a $2n$-legged spider is distance-preserving:
\begin{gather}
\tag{$r_{2n}$}\label{n-legged}\refstepcounter{equation}
\tikzfig{rewrites/recursive-spiders/recursive-spiders}
\end{gather}
where we construct a $2n$-legged spider out of two $n$-legged spiders and $n$ four-legged spiders.
\end{restatable}
\begin{proof}
Similar to the proof of \autoref{thm:four-legged-spider}, we unfuse a Z spider with~\eqref{fuse-rewrite}, apply~\eqref{n-plus-legged}, and fuse the spider back with~\eqref{fuse-rewrite}.
Since each of these rewrites is distance-preserving, so is the rewrite in question.
\end{proof}
\section{Measurement-circuit flow}
\label{sec:impl-interp}

To complete the Floquetification procedure, we formalise and extend the world lines drawn by~\textcite{townsend-teagueFloquetifyingColourCode2023}, defining a property for ZX diagrams we call \textit{measurement-circuit flow}.
In this section, we show that we can extract a quantum circuit from any diagram that has this property using only distance-preserving rewrites.

First, we define the family of quantum circuits that are of interest to us:
\begin{definition}[Measurement Circuit with local Cliffords form]
  \label{def:mclc-form}
 A circuit with state preparations, measurements, local Clifford gates, and Pauli measurements of weight one and two is generated from the following set of gates:
  \begin{gather*}
 \tikzfig{QuantumCircuit/ket0}
 \qquad
 \tikzfig{QuantumCircuit/measurez}
 \qquad
 \tikzfig{QuantumCircuit/clifford}
 \qquad
 \tikzfig{QuantumCircuit/z-parity-check}
 \qquad
 \tikzfig{QuantumCircuit/zz-parity-check}\\
 \tikzfig{QuantumCircuit/ketplus}
 \qquad
 \tikzfig{QuantumCircuit/measurex}
 \qquad
 \tikzfig{QuantumCircuit/x-parity-check}
 \qquad
 \tikzfig{QuantumCircuit/xx-parity-check}
  \end{gather*}
  where $C$ is an arbitrary single-qubit Clifford unitary.
 We say a quantum circuit is in \emph{Measurement Circuit with local Cliffords (MCLC) form} if it is generated from the gate set above.
\end{definition}

\noindent The first row of gates above respectively corresponds to the following ZX diagrams:
\begin{gather}
 \tikzfig{flow/flow-examples}
  \label{eq:flow-examples}\\
 \tikzfig{flow/flow-examples-2}
\end{gather}
where $C$ denotes an arbitrary single-qubit Clifford unitary and measurements are post-selected on the +1 outcome.
The second row corresponds to the same diagrams with the colours reversed.

\begin{definition}[Measurement Circuit with Cliffords]
  \label{def:mcc-form}
 If a circuit only includes gates of \autoref{def:mclc-form} as well as CNOT gates, we say it is in \emph{Measurement Circuit with Cliffords (MCC) form}.
\end{definition}

We now define a notion of flow for ZX diagrams that guarantees that for any ZX diagram, we can extract a quantum circuit in MCLC or MCC form using only distance-preserving rewrites.

Flow is usually defined as a pair of a partial order and a successor function that satisfies certain conditions.
However, this is not adequate for our purposes, as we are restricted to distance-preserving operations.
To address this issue, we modify an equivalent definition of causal flow that uses vertex-disjoint paths~\parencite{debeaudrapFindingFlows2008}.
But before that, let us define:

\begin{definition}[Directed path]
 Let $G = (V, E)$ be a graph.
 A \emph{directed path} $\mathcal{P}$ is a sequence of directed edges $(e_1, \dots, e_{n-1})$ which joins a sequence of vertices $(v_1, \dots, v_n)$ such that all vertices are distinct and $(v_i, v_{i + 1}) \in E$.
 For vertices $v, w \in V$ or an edge $e \in E$, we write $v \in \mathcal{P}$ and $e \in \mathcal{P}$ to denote that the vertex or the edge is covered by the path $\mathcal{P}$, and $v \to w \in \mathcal{P}$ if the directed edge is in $\mathcal{P}$.
\end{definition}
\begin{definition}[Partial order]
 Let $\leq$ be a binary relation on a set $P$.
 Then $\leq$ is a \emph{partial order} if it satisfies:
  \begin{enumerate}
    \item Reflexivity: $a \leq a$ for all $a \in P$,
    \item Transitivity: if $a \leq b$ and $b \leq c$ then $a \leq c$ for all $a, b, c \in P$, and
    \item Antisymmetry: if $a \leq b$ and $b \leq a$ then $a = b$ for all $a, b \in P$.
  \end{enumerate}
\end{definition}
\begin{definition}
 We define the set of neighbours of a node $u$ as $N_G(u) \coloneqq \{w \in V \ |\ (w,u) \in E\}$.
\end{definition}
\noindent We can now define the following:
\begin{definition}[Measurement-circuit flow]
 Let $D$ be a ZX diagram with spiders $V$ and edges $E$.
 A \emph{measurement-circuit flow} or \emph{MC flow} on $D$ is a partial order $\leq$ on $V$ and a set of directed paths $\{\mathcal{P}_{k}\}_{k = 1}^{q}$ such that
  \begin{description}
    \item[(O1)] the directed paths respect the partial order $\leq$ on the vertices, i.e.\@ if $x \to y \in \mathcal{P}_i$, then $x \leq y$,
    \item [(O2)] for each $x \to y \in \mathcal{P}_i$ and $z \in N_G(y)$, $x \leq z$ or $\exists \mathcal{P}_{j} \in \{\mathcal{P}_{k}\}_{k = 1}^{q}$ such that $z \to y \in \mathcal{P}_{j}$,
    \item[(P1)] for every boundary and Hadamard node there is a path that fully covers it,
    \item[(P2)] each Z and X spider has at most one edge not covered by any path,
    \item[(P3)] no edge is covered by more than one path.
  \end{description}
\end{definition}

\noindent This definition of flow permits single-legged spiders to be part of the diagram as well as paths to overlap on vertices.
Therefore, unlike causal flow~\parencite{danosDeterminismOneway2006}, it can also represent measurements, preparations, and single-qubit Pauli measurements, and it admits a different circuit extraction procedure.

\begin{remark}
 While we use a partial order and a set of paths with certain properties to express flow, it is possible to define a similar notion using a successor function instead of the set of paths.
 In that case, the successor function would map vertices $V$ to the powerset of $V$, but this function would not be surjective, as some nodes are the single-legged part of a one-qubit Pauli measurement.
 It is worth remembering that the set of paths carries some extra information that tells us exactly where each qubit is.
 Therefore, these two definitions would be equivalent up to some permutation of spiders.
\end{remark}

\begin{definition}[Strict MC flow]
 We say an MC flow $(\leq, \{\mathcal{P}_{k}\}_{k = 1}^{q})$ on a ZX diagram is \emph{strict} if it also satisfies:
  \begin{description}
    \item[(PS)] any two connected spiders with at least 3 legs each either have the same colour, or there is a path $\mathcal{P}_j$ that covers the edge between them.
  \end{description}
\end{definition}

\begin{definition}[Well-covered MC flow]
 We say an MC flow $(\leq, \{\mathcal{P}_{k}\}_{k = 1}^{q})$ on a ZX diagram is \emph{well-covered}, if it also satisfies:
  \begin{description}
    \item[(PWC)] no path starts or finishes at a Z or X spider with degree more than 1
  \end{description}
\end{definition}

\begin{proposition}
  \label{prop:to-well-covered}
 Given a ZX-diagram with MC flow, we can obtain a new ZX-diagram with a well-covered MC flow using only distance-preserving rewrites.
\end{proposition}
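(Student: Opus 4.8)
The plan is to eliminate the violations of (P4) one by one, by local distance-preserving unfusions, inducting on the number of offending endpoints. Call a pair $(\mathcal{P}_i, v)$ \emph{bad} if $v$ is a $Z$ or $X$ spider with $\deg(v) > 1$ and $\mathcal{P}_i$ starts or finishes at $v$. The diagram has finitely many bad pairs, and an MC flow is well-covered precisely when it has none, so it suffices to show that whenever a bad pair exists we can rewrite the diagram distance-preservingly and adjust the flow so that the number of bad pairs strictly decreases, and then iterate.

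Fix a bad pair $(\mathcal{P}_i, v)$, and, using the reverse-diagram symmetry, assume $\mathcal{P}_i = (v = v_1, v_2, \dots, v_n)$ \emph{starts} at $v$. By \autoref{thm:fuse-rewrite} together with the \hyperlink{eq:OCM}{(OCM)} rule, unfuse an identity leaf off of $v$: introduce a fresh spider $v_0$ of the same colour as $v$, with zero phase and a single leg, joined to $v$ by one new internal edge, keeping all of $v$'s original legs on $v$. This leaves the interpretation unchanged and, since \eqref{fuse-rewrite} is distance-preserving, leaves the ZX distance unchanged; hence a sequence of such steps preserves the distance of the whole diagram.

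Now update the flow. Keep each $\mathcal{P}_j$ with $j \neq i$ unchanged, and replace $\mathcal{P}_i$ by $\mathcal{P}_i' = (v_0, v, v_2, \dots, v_n)$, prepending the new directed edge $v_0 \to v$. Extend $\leq$ by making $v_0$ smaller than $v$ and than every neighbour of $v$, then close transitively; since no element other than $v_0$ itself is placed below $v_0$, this is still a partial order. The MC flow axioms now hold: (P1) and (P3) are immediate, as no boundary or Hadamard node was touched and the new edge lies on $\mathcal{P}_i'$ only; (P2) holds at $v_0$, whose single edge is covered, and at $v$, where we only added a covered edge, so the count of uncovered edges at $v$ is unchanged; (O1) holds for the new edge since $v_0 \leq v$, and elsewhere since neither the order nor the paths changed on the old vertices; and for (O2) the only new path edge is $v_0 \to v$, for which every $z \in N_G(v)$ is either $v_0$ (covered by $\mathcal{P}_i'$) or an original vertex with $v_0 \leq z$, while enlarging $\leq$ cannot break (O2) for any old edge, and each old path edge pointing into $v$ only gains the neighbour $v_0$, which is covered by $\mathcal{P}_i'$. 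Finally, the pair $(\mathcal{P}_i, v)$ is gone --- $\mathcal{P}_i'$ now starts at the degree-one spider $v_0$ --- and no new bad pair is created, since the only new spider $v_0$ has degree one and $v$ is no longer a path start. The case where $\mathcal{P}_i$ finishes at $v$ is dual: append a leaf $v_0$ after $v$ along a new edge $v \to v_0$ and place $v_0$ above $v$ and above every neighbour of $v$.

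The main obstacle is the flow bookkeeping when several path endpoints, together with pass-through paths, meet at the same spider $v$: unfusing a leaf increases $\deg(v)$, and $v$ must still satisfy (P2), (O2) and the acyclicity of $\leq$ while being simultaneously a pass-through of the rerouted path and an endpoint of the others. I would handle this by always pulling \emph{all} of $v$'s pre-existing legs onto the retained copy of $v$ (so that the flow restricted to the old vertices is literally unchanged), attaching each new leaf along its own fresh internal edge, and ordering the leaves so that start-leaves sit below and finish-leaves above all original vertices, with distinct leaves left mutually incomparable except where a rerouted path through $v$ already forces a relation; this keeps $\leq$ a partial order and makes the new (O2) obligations trivial. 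Termination is then immediate, as each step strictly decreases the finite number of bad pairs.
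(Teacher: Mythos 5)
Your proof is correct and takes essentially the same approach as the paper: both repeatedly invoke \autoref{thm:fuse-rewrite} to unfuse a phaseless degree-one spider off each offending path endpoint, reroute the path to terminate at the new leaf, and iterate until (P4) holds. Your write-up is merely more explicit about the flow bookkeeping --- re-verifying (O1), (O2), (P1)--(P3) and the termination measure --- where the paper presents this pictorially and leaves the axiom-checking implicit.
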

\begin{proof}
 For all spiders $s$ with $\deg(s) > 1$ where a path finishes (or starts), we can perform one of the following rewrites (or its colour-flipped version) until $n = k$:
  \[
 \tikzfig{flow/well-covered-transformation}
  \]
 Both of these rewrites are distance-preserving by \autoref{thm:fuse-rewrite}, and applying either introduces an edge connected to a new single-legged phaseless spider.
 After the rewrite, we can add the new edge to a directed path that finishes (or starts) in $s$, therefore decreasing the number of places where (PWC) is violated.
 Once $n=k$ for all spiders in the diagram, we obtain a ZX diagram with no violations of (PWC), and thus, a ZX diagram with a well-covered MC flow.
\end{proof}

\subsection{Circuit extraction of arbitrary diagrams}
If a diagram with a maximum spider degree less than $4$ has a well-covered MC flow, its structure is close to a circuit in MCC form.
In fact, two simple distance-preserving rewrites enable us to extract such a circuit:
\begin{proposition}
  \label{prop:flow-correct-3}
 If a ZX diagram $D$ has a well-covered MC flow $(\leq, \{\mathcal{P}_{k}\}_{k = 1}^{q})$ and the maximum vertex degree in $D$ is not larger than $3$, then we can distance-preservingly extract a circuit in MCC form (\autoref{def:mcc-form}).
\end{proposition}
The proof is similar to the circuit extraction procedure for causal flow~\parencite[Theorem 5.8]{duncanGraphicalApproach2013}, but no unfusions are necessary because of the maximum degree condition and the different gate set.
\begin{proof}
 First, by (PWC) and the maximum vertex degree condition, every spider $s$ with $\deg(s) > 1$ has exactly one path going through it.
 In particular, a spider with degree 3 must have two covered edges and an uncovered one, while a degree 2 spider has both its legs covered; higher arity spiders are not allowed by the maximum degree condition.
 Because of this, the conditions on the partial order (O1) and (O2) are equivalent to the one given by causal flow.
 This implies that the paths $\{\mathcal{P}_{k}\}_{k = 1}^{q}$ define the qubits of our system and there are no causal loops in $D$, i.e.\@ it is circuit-like~\parencite[Definition 3.14]{duncanGraphicalApproach2013}.
 By (P2), spiders have only a single uncovered edge, so unfusion is not necessary when extracting a circuit.

 One-legged spiders that are covered by a path can be extracted as a measurement or preparation.
 Any spider with a degree of $2$ is extracted as a single-qubit gate.
 Let $s$ be a spider with degree $3$ that is connected to $q$ via its uncovered edge.
 In this context, $q$ cannot be a boundary or have degree $2$ by (P1) and (PWC); the degree of $q$ is $1$ or $3$. 
 Thus, $q$ and $s$ can only be one of the following (or their colour-flipped version):
  \begin{equation}
 \tikzfig{flow/flow-preserving-extraction-3}
      \label{eq:flow-preserving-extraction-3}
  \end{equation}
 The leftmost diagram can be rewritten with~\eqref{fuse-rewrite}, which is distance-preserving by \autoref{thm:elim-rewrite} and can be extracted as the identity.
 The remaining diagram already corresponds to a circuit in MCC form.
 This covers every spider that can appear in the diagram and concludes our proof.
\end{proof}

\begin{example}
 The diagram on the LHS has a well-covered MC flow, and it is built up from spiders with three legs; therefore, we can extract the circuit on the RHS:
  \[
 \tikzfig{flow/four-legged-spider}
  \]
\end{example}

\begin{proposition}
  \label{prop:strict-flow-correct}
 If a ZX diagram $D$ has a strict and well-covered MC flow $(\leq, \{\mathcal{P}_{k}\}_{k = 1}^{q})$ and the maximum vertex degree in $D$ is not larger than $3$, then we can distance-preservingly extract a circuit in MCLC form (\autoref{def:mclc-form}).
\end{proposition}
\begin{proof}
 The same procedure applies as in \autoref{prop:flow-correct-3}; the only thing to show is that this procedure will not result in extracted CNOT gates if the MC flow is strict.
  \autoref{prop:flow-correct-3} only extracts two-qubit gates if it finds two connected 3-legged spiders ($s$ and $q$) that are connected via an uncovered edge; see \autoref{eq:flow-preserving-extraction-3}.
 By (PS), two connected degree-3 spiders either have the same colour or there is a path $\mathcal{P}_j$ that covers the edge between them.
 Since $s$ and $q$ are explicitly connected via their uncovered edge, they must have the same colour, and thus, are never extracted as a CNOT gate.
\end{proof}
\subsection{Circuit extraction with spider of high degree}
Now, we direct our attention to the distance-preserving extraction of circuits from larger diagrams with no bound on maximum spider degree:
\begin{proposition}
  \label{prop:flow-correct}
 If a ZX diagram $D$ has a well-covered MC flow $(\leq, \{\mathcal{P}_{k}\}_{k = 1}^{q})$, we can rewrite it to an equivalent ZX diagram $D'$ with a well-covered MC flow using only distance-preserving rewrites such that all spiders in $D'$ have degree at most three.
  Moreover, if the MC flow is strict, the resulting MC flow will also be strict.
\end{proposition}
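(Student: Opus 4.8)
The plan is to reduce the high-degree case to the degree-$\leq 3$ case of Proposition \ref{prop:flow-correct-3} by "unfolding" every high-degree spider into a tree of three-legged spiders, while carefully extending the MC flow through the new subgraph so that it remains well-covered and distance-preserving. The key observation is that the spider-splitting rewrites \eqref{n-plus-legged} and \eqref{n-legged} from Theorems \ref{thm:inductive-plus-dist-rewrite} and \ref{thm:inductive-dist-rewrite} are exactly the distance-preserving moves that replace an $n$-legged spider with a network of $\tfrac n2$-legged spiders glued together through four-legged spiders, and that the earlier rewrites \eqref{four-legged} and \eqref{five-legged} break those four-legged spiders down to three-legged ones. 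So semantically the target diagram $D'$ already exists; the work is entirely in bookkeeping the flow.

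First I would handle a single high-degree spider $s$ in isolation. Since the flow is well-covered, by (P4) no path starts or ends at $s$, and by (P3) at most one path passes through $s$ (using two of its legs); by (P2) at most one of its remaining legs is uncovered, and every other leg carries its own path. After applying \eqref{n-plus-legged} (or \eqref{n-legged} in the odd case) followed by \eqref{four-legged}/\eqref{five-legged}, the single vertex $s$ becomes a connected tree $T_s$ of three-legged Z-spiders (plus some two-legged spiders and Hadamards from the four-legged gadget). I would then define the new paths on $T_s$: the path that passed through $s$ gets routed along the unique tree-path between the two corresponding leaf edges; each leg that previously hosted a path gets connected, through the tree, to the internal "spine" created by the spider-splitting; the at-most-one uncovered leg of $s$ remains the at-most-one uncovered leg of the spine spider it attaches to. Because the gadget introduced by \eqref{four-legged} is built from three-legged spiders each of which (as shown in its proof, via \eqref{fuse-rewrite} and \eqref{elim-rewrite}) can be equipped with a local flow, I would check that conditions (O1), (O2), (P1)–(P4) all hold on $T_s$ by a direct but finite case analysis on the gadget — this is routine once the path-routing convention above is fixed.

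The global argument is then an induction on the number of spiders of degree $\geq 4$, or equivalently on $\sum_{\deg(s)\geq 4}(\deg(s)-3)$. At each step I pick a maximal-degree spider, apply the local transformation above, and invoke \autoref{prop:to-well-covered} if necessary to restore (P4) (the spider-splitting can create a spider where a path ends, exactly as in the proof of \autoref{prop:to-well-covered}, and the fix there is distance-preserving). Since \eqref{n-plus-legged}, \eqref{n-legged}, \eqref{four-legged}, \eqref{five-legged}, \eqref{fuse-rewrite}, and \eqref{elim-rewrite} are all distance-preserving, and distance-preservation composes, the whole rewrite sequence from $D$ to $D'$ is distance-preserving, and $D'$ has maximum degree three with a well-covered MC flow, as required.

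The main obstacle I anticipate is verifying (O2) — the "successor" condition — across the newly created subgraph: when a spider is split into a tree, a neighbour $z$ of the original $s$ becomes a neighbour of some specific internal spider of $T_s$, and I must ensure that either $z$ lies on the correct incoming path of that internal spider or $z$ is $\leq$-above the relevant predecessor. This requires choosing the partial order on $T_s$ coherently with the original order on $D$ (essentially: all internal vertices of $T_s$ inherit an interleaving consistent with $s$'s position, ordered along the spine in the direction the passing-through path travels) and then checking that the four-legged gadget's internal wiring doesn't violate it. Getting this ordering convention right — and confirming it is genuinely a partial order, i.e. acyclic, after all the splittings — is the crux; the rest is mechanical.
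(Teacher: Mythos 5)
Your high-level strategy --- iteratively decompose high-degree spiders using the rewrites from Theorems \ref{thm:inductive-plus-dist-rewrite}, \ref{thm:inductive-dist-rewrite}, \ref{thm:four-legged-spider}, show the MC flow extends across each step, invoke \autoref{prop:to-well-covered} if (P4) needs repairing, and terminate by an induction on excess degree --- matches the paper's proof, which applies exactly those rewrites and argues termination by noting that each step reduces the maximum degree (the fuse step raises it by two only to enable a subsequent halving). Where you diverge is in modelling the result of a full decomposition as a ``tree of three-legged spiders'' and routing paths along unique tree-paths.

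The concrete gap is in your local analysis of the flow around $s$. You write that ``by (P3) at most one path passes through $s$ (using two of its legs)''; this does not follow. (P3) forbids two paths sharing an \emph{edge}, not a \emph{vertex}, and combined with (P4) (no path may start or finish at $s$ since $\deg(s)>1$) and (P2) (at most one uncovered leg), a well-covered MC flow forces exactly $\lfloor \deg(s)/2\rfloor$ distinct paths to pass through $s$, each consuming two of its legs. This breaks the routing scheme you propose: if $s$ is replaced by an arbitrary tree of degree-$3$ spiders and you send each of those $\lfloor\deg(s)/2\rfloor$ through-paths along the unique tree-path between its two boundary legs, a generic matching on the legs forces two such tree-paths to share an internal tree edge, violating (P3). (For instance, a caterpillar tree on six leaves attached in pairs to a four-node spine cannot accommodate the matching $\{1,3\},\{2,5\},\{4,6\}$ without reusing a spine edge.) The paper's construction avoids this because the gadget that \eqref{n-plus-legged} produces is not a tree but a two-layer network of two $\tfrac{n}{2}$-legged spiders glued by $\tfrac{n}{2}$ four-legged spiders, arranged so that each original through-path re-enters and re-exits on edge-disjoint routes while the newly created internal edges are covered by fresh auxiliary paths --- precisely the additional qubits counted by $f(n)$ in \autoref{sec:classifying-overhead}. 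So the remaining work is to verify your (O1)/(O2)/partial-order concerns on that specific gadget rather than on an abstract tree; granting that, the rest of your outline (termination, distance-preservation by composition, (P4) repair) is sound.
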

\begin{proof}
 To prove the proposition, we iteratively decompose all spiders with a degree of more than three using distance-preserving rewrites.
 Additionally, we show that at every step the existence of a well-covered MC flow is preserved.

 For each spider $s$ in $D$ with $n \coloneqq \deg(s) > 3$, we can apply the following rewrites:
  \begin{equation}
 \tikzfig{flow/flow-preserving-rewrites}
    \label{eq:iterative-decomposition}
  \end{equation}
 The rewrites in the first two rows strictly reduce the maximum degree.
 The rewrites in the last row increase the maximum degree by two; however, they allow one rewrite of the second row to be applied after.
 Thus, the maximum degree is reduced to $\frac{n + 2}{2}$, which, for $n > 4$, is less than $n$.
 Thus, iteratively applying the appropriate rewrite terminates, and the final diagram is expressed as a composition of spiders of degree at most three.

 Each rewrite we apply in \autoref{eq:iterative-decomposition} is distance-preserving by \autoref{thm:four-legged-spider}, \autoref{thm:five-legged-spider}, \autoref{thm:inductive-dist-rewrite}, \autoref{thm:inductive-plus-dist-rewrite}, and \autoref{thm:fuse-rewrite}.
 The rewrites preserve the existence of a well-covered MC flow, as indicated by the paths shown in \autoref{eq:iterative-decomposition}.
  Moreover, the only new edges that are not part of any path are introduced between spiders of the same colour.
  Thus, if the original MC flow was strict, the final diagram will also have a strict MC flow, completing the proof.
\end{proof}

\begin{corollary}
 Any diagram with an MC flow can be extracted into a corresponding quantum circuit using only distance-preserving rewrites.
\end{corollary}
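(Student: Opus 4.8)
The plan is to simply chain together the three results established in this section, observing that distance-preserving rewrites compose. First I would note the composition principle: if $r_1 \colon D_1 \to D_2$ and $r_2 \colon D_2 \to D_3$ are both distance non-decreasing, then for any context $D$ in which $D_1$ occurs we have that the distance of $D$ with $D_1$ is at most the distance of $D$ with $D_2$, which is at most the distance of $D$ with $D_3$; applying the same reasoning to the inverses $r_2^{-1}$ and $r_1^{-1}$ shows that $r_2 \circ r_1$ is distance-preserving whenever $r_1$ and $r_2$ are. Hence a finite sequence of distance-preserving rewrites is itself distance-preserving, and it suffices to realise the extraction as such a sequence.

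Next I would assemble the pipeline. Start with an arbitrary ZX diagram $D$ equipped with an MC flow. By \autoref{prop:to-well-covered}, a sequence of distance-preserving rewrites (instances of \eqref{fuse-rewrite}, distance-preserving by \autoref{thm:fuse-rewrite}) transforms $D$ into an equivalent diagram carrying a \emph{well-covered} MC flow. Then, by \autoref{prop:flow-correct}, a further sequence of distance-preserving rewrites (built from \autoref{thm:four-legged-spider}, \autoref{thm:inductive-dist-rewrite}, and \autoref{thm:fuse-rewrite}) rewrites this to an equivalent diagram $D'$ that still has a well-covered MC flow and in which every spider has degree at most three. Finally, \autoref{prop:flow-correct-3} applies to $D'$: since it has a well-covered MC flow and maximum vertex degree at most three, it can be distance-preservingly rewritten into a circuit in the low-weight Clifford and measurement form of \autoref{def:code-form-zx}.

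Concatenating these three stages yields one distance-preserving rewrite sequence taking $D$ to a circuit of the desired form, which is exactly the claim. I do not expect any real obstacle here, since all the substantive work is in the cited propositions; the only point that needs to be stated explicitly is the composition principle above, and its verification is immediate from the definition of distance non-decreasing rewrite.
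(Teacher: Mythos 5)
Your proposal is correct and follows exactly the route the paper intends: the corollary follows immediately by chaining \autoref{prop:to-well-covered}, \autoref{prop:flow-correct}, and \autoref{prop:flow-correct-3}, each of which produces a distance-preserving rewrite sequence, and the paper leaves the proof implicit for precisely this reason. Your explicit statement of the composition principle for distance non-decreasing rewrites is a helpful addition that the paper takes for granted.
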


\subsubsection{Resource overhead}
\label{sec:classifying-overhead}
Given a ZX diagram with a well-covered MC flow and spiders of degree at most three, by \autoref{prop:flow-correct-3}, we can extract a circuit in MCC form.
This circuit uses at most as many qubits as there are directed paths in the MC flow, fewer if we have paths that have no temporal overlap and thus allow for qubit reuse.

However, given an arbitrary diagram with MC flow, creating a diagram with a well-covered MC flow and spiders of degree at most three can introduce additional paths and therefore qubits; see \autoref{prop:to-well-covered} and \autoref{prop:flow-correct}.
In this section, we focus on \autoref{prop:flow-correct}.
We characterise the number of gates and additional qubits a well-covered spider of high degree requires to be expressed as a distance-preserving diagram with spiders of degree at most three.

\begin{remark}
 The same resource overhead calculations apply to extracting circuits in MCLC form using a strict MC flow.
\end{remark}

\paragraph{Number of additional qubits}
To characterise the number of additional qubits required to implement a spider of high degree, we can look at the inductive decomposition given by \autoref{prop:flow-correct}.
For each inductive step, we can characterise how many additional paths it introduces:
\[
 f(n) =
    \begin{dcases}
 0 & \text{if } n = 4\\
 f\! \left(\frac{n}{2}\right) & \text{if } n \bmod 4 = 0\\
 1 + f\left(n + 2\right) & \text{if } n \bmod 4 = 2\\
    \end{dcases}
\]
We observe that for an even $n > 4$ this function always terminates at the base case $n = 4$.
In \autoref{prop:flow-correct}, we argued that the decomposition always finishes.
So, it remains to be shown that it always finishes by decomposing a spider of degree four.
We observe that the only other rewrite to reduce the degree is $r_n$, i.e.\@ the second case.
This rewrite halves $n$ assuming that $n \bmod 4 = 0$.
But then, to get to a case of $n \leq 4$, previously $n$ had to be $8$.
Thus, if the decomposition terminates, it terminates by decomposing a spider of degree $4$.

This function is bounded by $\log_2(n)$; see~\autoref{appendix:proof-f-bound}.
For $n = 2^k$, the number of additional qubits is $0$.

\paragraph{Number of gates}
Similarly, we can characterise the number of weight-two Paulis required to implement a high-degree spider. 
Once again, considering the inductive definition, we get: 
\[
 g(n) =
    \begin{dcases}
 2 & \text{if } n = 4\\
 n + 2g\!\left(\frac{n}{2}\right) & \text{if } n \bmod 4 = 0\\
 g\left(n + 2\right) & \text{if } n \bmod 4 = 2\\
    \end{dcases}
\]
This function is bounded by $2 n\log(n)$; see~\autoref{appendix:proof-g-bound}.

\section{Decomposing high-weight measurements}
\label{sec:rewriting-measurements}
Recent works have explored how to express high-weight measurements in terms of more easily implementable, low-weight measurements at the cost of using additional qubits~\parencite{gidneyPairMeasurement2023, grans-samuelssonFaulttolerantPairwise2024, fuenteDynamicalWeight2024, moflicConstantDepth2024}, often using the rewrites offered by the ZX calculus.
In this line, we will present a procedure to express arbitrary Pauli measurements as quantum circuits in low-weight Clifford and measurement form. 
However, by restricting ourselves to distance-preserving rewrites, we guarantee that single errors in the implementation do not propagate to become multiple data errors. 

Given an arbitrary Pauli measurement expressed as a ZX diagram, we will use distance-preserving rewrites to bring it into a form that has a strict and well-covered MC flow.
By \autoref{prop:flow-correct}, this means that the diagram can be rewritten into an equivalent diagram with a strict and well-covered MC flow and spiders of degree at most three using only distance-preserving rewrites.
But then, by \autoref{prop:flow-correct-3} we know that we can extract a corresponding quantum circuit of the desired shape. 

We have:
\begin{proposition}[Decomposing Pauli measurements]
    \label{prop:decomposing-measurements}
 Any Pauli measurement can be distance-preservingly rewritten into an equivalent diagram with a strict and well-covered MC flow.
\end{proposition}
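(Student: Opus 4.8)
The plan is to show that the ZX diagram of the measurement already carries an MC flow, and then invoke \autoref{prop:to-well-covered} to upgrade it to a well-covered one using only distance-preserving rewrites. First I would reduce to a $Z^{\otimes n}$ measurement. For a general Pauli $P = P_1 \otimes \cdots \otimes P_n$ (discarding identity tensor factors, which are not involved), each non-$Z$ factor is $Z$ conjugated by a fixed single-qubit Clifford, so in the ZX diagram the $i$-th wire runs through a short chain of degree-at-most-three spiders and Hadamard boxes before and after the central parity-check spider. These decorations lie on the wire and will be covered by that wire's path, so they are irrelevant to the flow argument, and it suffices to treat $P = Z^{\otimes n}$. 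For $n \le 2$ the claim is immediate: a weight-one measurement is a broken wire capped by two single-legged spiders, and a weight-two measurement is a single degree-four Z-spider; in both cases the obvious ``one path per wire'' assignment, ordered (inputs) $<$ (spiders) $<$ (outputs), is already a well-covered MC flow. So assume $n \ge 3$.

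For $n \ge 3$ the diagram $D$ consists of $n$ degree-three Z-spiders $c_1, \dots, c_n$ (each joined to an input $b_i^{\mathrm{in}}$, an output $b_i^{\mathrm{out}}$ and the centre) together with the central X-spider $z$ of degree $n$ and phase $0$ (the syndrome-free parity-check spider). I would pair the caps up and, for each pair $(c_{2j-1}, c_{2j})$, take the through-path $\mathcal{P}_j : b_{2j-1}^{\mathrm{in}} \to c_{2j-1} \to z \to c_{2j} \to b_{2j}^{\mathrm{out}}$; if $n$ is odd, send the leftover wire straight through as $b_n^{\mathrm{in}} \to c_n \to b_n^{\mathrm{out}}$, leaving the single edge between $c_n$ and $z$ uncovered. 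Every boundary edge not yet used is then covered by a length-one stub path, namely $c_{2j-1} \to b_{2j-1}^{\mathrm{out}}$ and $b_{2j}^{\mathrm{in}} \to c_{2j}$ (and, in the general-Pauli case, such a stub absorbs any Hadamard box on that half of the wire). As partial order I would take the one generated by placing all inputs below the odd-indexed caps, those below $z$, $z$ below the even-indexed caps (and below $c_n$ in the odd case), and those below all outputs. With this data, (P1) holds since every boundary (and every Hadamard) is a path endpoint (resp.\ fully covered), and (P2), (P3) hold since no edge is covered twice and since the through-paths consume all legs of $z$ except at most one, so every spider has at most one uncovered leg.

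The step I expect to be the main obstacle is verifying (O1) and especially (O2) and confirming that the relation above really is a partial order (no cycles). Condition (O1) is immediate from the chosen order. For (O2) one must check, at each cap $c_i$ and at the central spider $z$, that every graph-neighbour $w$ of the head of a path-edge is itself the tail of a path-edge into that head or lies above the corresponding tail; the only delicate case is the neighbours of $z$, where routing several through-paths across the same spider could in principle force a cyclic constraint, but it does not, since the constraints produced only ask that every odd-indexed cap sit below $z$ and below the even-indexed caps, which is exactly what the order provides. This yields an MC flow on $D$; by \autoref{prop:to-well-covered} we may then rewrite $D$, using only \eqref{fuse-rewrite} and hence distance-preservingly by \autoref{thm:fuse-rewrite}, into an equivalent diagram with a well-covered MC flow, which proves the proposition. (It is worth noting that the subsequent reduction to spiders of degree at most three via \autoref{prop:flow-correct} and the circuit extraction via \autoref{prop:flow-correct-3} are where the distance-preserving decomposition \eqref{n-plus-legged} — rather than the naive unfusion \eqref{naive-rewrite} — is actually invoked, so that the extracted low-weight circuit inherits the distance.)
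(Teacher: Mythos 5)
Your proof is correct and reaches the result by a genuinely different route. The paper's proof constructs the well-covered flow directly: it alternately applies \eqref{elim-rewrite} and \eqref{fuse-rewrite} to insert auxiliary spiders before and after the central parity-check spider, exhibits a well-covered MC flow on the resulting \enquote{gadgeted} diagram, and for odd weight additionally unfuses a degree-one red spider from the central X-spider so that all of its legs can be path-covered. You instead observe that the \emph{bare} parity-check diagram already carries an MC flow --- by pairing the caps, routing through-paths across the central spider, covering the leftover boundary edges with stub paths, and (in the odd case) tolerating a single uncovered central leg via (P2) --- and then discharge well-coveredness entirely to \autoref{prop:to-well-covered}, which internally applies the same distance-preserving rewrite \eqref{fuse-rewrite}. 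Your version is more modular and sidesteps the explicit odd-weight padding; the paper's explicit construction instead makes the resulting diagram and the path count $n + \lceil n/2 \rceil$ used in the subsequent resource accounting directly visible. Both ultimately invoke the same distance-preserving rewrites, so the two arguments are equivalent in substance.
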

\begin{proof}
 Any Pauli measurement can be written as a Pauli-Z measurement preceded and followed by local Cliffords~\parencite{KissingerWetering2024Book}.
 We will represent these as $lc$. 
    
 We differentiate whether the Pauli measurement has even or odd weight.
 If the Pauli has an even weight, we have:
    \[\tikzfig{projectors/general-case-even/general-case-1} \overset{\eqref{elim-rewrite},}{\overset{\eqref{fuse-rewrite}}{=}} \tikzfig{projectors/general-case-even/general-case-2} \rightsquigarrow \tikzfig{projectors/general-case-even/general-case-3} \overset{(\hyperlink{eq:OCM}{OCM})}{=} \tikzfig{projectors/general-case-even/general-case-4}\]\
 In the first step, we alternate between introducing spiders before and after the measurement using \autoref{thm:elim-rewrite} and \autoref{thm:fuse-rewrite}.
 The next step provides a strict and well-covered MC flow of the expanded diagram, and the final step rearranges the diagram using OCM to match the preorder of the MC flow.

 If the Pauli has an odd weight, we have:
    \[\tikzfig{projectors/general-case-odd/general-case-1} \overset{\eqref{elim-rewrite},}{\overset{\eqref{fuse-rewrite}}{=}} \tikzfig{projectors/general-case-odd/general-case-2} \rightsquigarrow \tikzfig{projectors/general-case-odd/general-case-3} \overset{(\hyperlink{eq:OCM}{OCM})}{=} \tikzfig{projectors/general-case-odd/general-case-4}\]
 Once again, we alternate between introducing spiders before and after the measurement. 
 The measurement is of odd degree, thus, we introduce one more set of spiders after the measurement than before. 
 As a well-covered MC flow does not allow paths to end in high-degree spiders, we have to make sure that the red spider has an even degree.
 Therefore, we use \autoref{thm:fuse-rewrite} to introduce a red spider of degree one. 
 This allows us to provide a well-covered and strict MC flow.

 Thus, we have shown that any Pauli measurement can be brought into a form with MC flow using only distance-preserving rewrites. 
\end{proof}

\noindent But then, by \autoref{prop:flow-correct} and \autoref{prop:flow-correct-3}, we know that for any Pauli measurement, we can extract a quantum circuit in low-weight Clifford and measurement form using only distance-preserving rewrites. 

To showcase the proposition, we apply it to a Pauli measurement with weight four and one with weight nine
\footnote{
 The weight-four rewrite is similar to the rewrite introduced by~\textcite{townsend-teagueFloquetifyingColourCode2023}.
 The only difference is that~\textcite{townsend-teagueFloquetifyingColourCode2023} do not use a distance-preserving implementation of the four-legged spider; as such, they only have one weight-two Pauli-X measurement.
 If one assumes this measurement to be fault-free, i.e.\@ no errors are possible on its measurement edge, this is distance-preserving. 
 Since, in this work, we do not, a second Pauli-X measurement is necessary.
 For more details, see \autoref{appendix:floquetification-teague}.}.
\begin{example}[Distance-preserving implementation of a weight-four Pauli measurement]
    \label{ex:4-qubit-proj}
    \[\tikzfig{projectors/weight-four}\]
 From which, using \autoref{prop:flow-correct-3}, we can extract the following quantum circuit:
    \[\tikzfig{projectors/weight-four-circuit}\]
\end{example}
\noindent We observe that by the translation table in \autoref{subsec:zx-calculus}, we would expect a $XX$ measurement to consist of two green spiders conjugated by Hadamards. 
However, via the \TextColour-rule, we have: 
\[\tikzfig{XX-alternative}\]
We use this alternative representation of an $XX$ gate throughout this work to reduce clutter. 

\noindent For $n = 9$, we get:
\begin{example}[Distance-preserving implementation of a weight-nine Pauli measurement]
    \[\tikzfig{projectors/weight-nine-1} \quad \overset{\autorf{prop:decomposing-measurements}}{\rightsquigarrow} \quad \tikzfig{projectors/weight-nine-2}\]
 Applying \autoref{prop:flow-correct}, we can expand the definition of the 9-legged spider to get:
    \[\tikzfig{projectors/weight-nine-3}\]
\end{example}

As such, we have provided a procedure to implement arbitrary weight Pauli measurements in terms of single-qubit local unitaries, preparations and measurements and weight-two Pauli measurements. 
By only using distance-preserving rewrites, we can guarantee that a single, undetectable error in the new circuit creates at most a single data error.
However, this comes at the cost of needing auxiliary qubits.

As observed in \autoref{sec:classifying-overhead}, the number of qubits required to implement the ZX diagram is upper-bounded by the number of paths in the MC flow. 
The number of qubits required by this implementation is equal to the number of paths introduced in \autoref{prop:decomposing-measurements} plus the number of additional paths required to implement the high-weight measurement spider. 
The number of paths introduced in \autoref{prop:decomposing-measurements} for a weight-$w$ measurement is 
$w + \lceil \frac{w}{2} \rceil$; one for each of the $w$ inputs and one for each of the $\lceil \frac{w}{2} \rceil$ spider gadgets introduced after the measurement. 
The number of paths to implement the degree-$w$ measurement spider is given by $f(w) \leq \log_2(w)$; see \autoref{sec:classifying-overhead}.
Thus, overall, to implement a weight-$w$ Pauli measurement, this construction needs at most $w + \lceil \frac{w}{2} \rceil + \log_2(w)$ qubits.

\section{Floquetification of stabiliser codes}
\label{sec:rewriting}
Abstractly, the Floquetification procedure introduced by~\textcite{townsend-teagueFloquetifyingColourCode2023} consists of three steps: (1) writing out the measurement circuit of a code as a ZX diagram, (2) using ZX rewrites to rearrange the diagram, (3) extracting a new code. 

We propose a generalised Floquetification procedure for arbitrary stabiliser codes. 
Instead of simultaneously considering the entire measurement circuit, we perform step (2) --- the rearranging of the ZX diagram --- in two substeps: (2.1) rearranging individual Pauli measurements using the distance-preserving decompositions from \autoref{sec:rewriting-measurements}, (2.2) composing the Pauli measurements. 
The proposed procedure can be described by \autoref{alg:transformation}.

\begin{algorithm}[!ht]
   \caption{Algorithm for Transforming Stabiliser Codes}
   \label{alg:transformation}
   \begin{algorithmic}[1]
       \State \textbf{Input:} Stabiliser code $S$
       \State Write out the measurement circuit in ZX notation according to \autoref{fig:circuit-to-zx} 
       \Comment{Step 1}
       \For{Pauli measurement $S_i$ of $S$}
       \Comment{Step 2.1}
           \State Replace $S_i$ by its distance-preserving implementation
       \EndFor
       \State Remove mid-circuit state preparations and measurements using distance-preserving rewrites
       \Comment{Step 2.2}
       \State Extract the new code
       \Comment{Step 3}
   \end{algorithmic}
\end{algorithm}

\subsection{Infinity notation}
\label{sec:infinitely-repeating}
To reason about the infinite measurement circuits of Floquet codes, we introduce a new notation which captures the idea of infinitely performing an operation on a given state.

\begin{definition}[Infinite post-application]
 Let $f: X \to X$ be a process. Then, we inductively define:
   \[\tikzfig{prerequisites/infty-definition}\]
 meaning that $f$ is performed infinitely many times on the input state.
\end{definition}

\noindent We observe that it is possible to place processes before the infinite post-application gadget, i.e.\@ perform operations on the input state before passing it to the infinitely repeating circuit.
However, it is not possible to place anything afterwards. 

This notation comes with some equalities:
\refstepcounter{equation}
\begin{gather}
 \tag{reorder}\label{infty-rewrite-reorder}\refstepcounter{equation}
 \tikzfig{prerequisites/infty-rewrite-reorder} \\
 \tikzfig{prerequisites/infty-rewrite-unroll}
 \tag{unroll}\label{infty-rewrite-unroll}\refstepcounter{equation}
\end{gather}

\noindent The former corresponds to the fact that infinitely repeating $g \circ f$ corresponds to first doing one $f$ and then infinitely repeating $f \circ g$.
The latter corresponds to the fact that infinitely repeating $f$ is the same as infinitely repeating $k \in \mathbb{N}$ many consecutive instances of $f$.

\subsection{Example: The \texorpdfstring{$\code{4, 2, 2}$}{{[4,2,2]}} code}
In this section, we will explain the details of the proposed Floquetification procedure using the example of the $\code{4, 2, 2}$ code.
However, this procedure applies to arbitrary stabiliser codes.
\subsubsection{Step 1 --- Writing out the measurement circuit}
Using the newly introduced notation, we can write out the measurement circuit of the code as a ZX diagram.
For the $\code{4, 2, 2}$ code we get: 
\[\tikzfig{floquetification-2/422-operation-circuit}\]
By \autoref{prop:distance-correspondence}, after establishment, this ZX diagram has ZX distance $2$. 

\subsubsection{Step 2.1 --- Rewriting measurements}
As the first part of the second step, we rewrite the measurements using their distance-preserving implementation provided in \autoref{sec:rewriting-measurements}.
Applying this procedure to our example, we get:
\setlength{\jot}{18pt}
\begin{align*}
 \tikzfig{floquetification-2/422-dist-rewrite/422-dist-rewrite-1}
  \ \ \overset{\autorf{prop:decomposing-measurements}}{=}\ \ &\tikzfig{floquetification-2/422-dist-rewrite/422-dist-rewrite-2}\\
 \overset{(\hyperlink{eq:OCM}{OCM})}{=}\ \ &\tikzfig{floquetification-2/422-dist-rewrite/422-dist-rewrite-3}
\end{align*}
where we use the dotted lines to indicate the two Pauli measurements of the $\code{4, 2, 2}$ code.
In the first step, we expand the definition of the Pauli measurement. In the next step, we use \textit{Only Connectivity Matters} to rearrange the diagram by pulling down the two single-qubit preparations.

As the highest number of qubits at any point in time is six, we can read this as an implementation of the $\code{4, 2, 2}$ code using two auxiliary qubits. 
However, the resulting circuit has single-qubit preparations and destructive measurements and is therefore not a Floquet code. 
To complete the Floquetification procedure, we need to remove them.

\subsubsection{Step 2.2 --- Composing measurements}
To remove the mid-circuit measurements and preparations, we will introduce one more distance-preserving rewrite:

\begin{restatable}{theorem}{singleQubitPauli}
  \label{thm:single-qubit-pauli}
 The following rewrite is distance-preserving:
  \begin{gather}
 \tag{$r_{\text{Pauli-1}}$}\label{one-pauli}\refstepcounter{equation}
 \tikzfig{floquetification/rewrite}
  \end{gather}
\end{restatable}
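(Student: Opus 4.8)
The plan is to apply the sufficient condition established after Theorem~\ref{thm:fuse-rewrite}: to show a rewrite $r : D_1 \to D_2$ is distance-preserving, it suffices to check that every error on an internal edge of $D_2$ is either detectable in $D_2$ or can be pushed to the boundary edges without increasing its weight (and symmetrically for $r^{-1}$), since boundary errors can be converted into errors on $D_1$ by undoing the rewrite. The rewrite $\eqref{one-pauli}$ should, on one side, present a single-qubit Pauli measurement directly, and on the other side present the same measurement realized via a mid-circuit measurement and preparation (or an ancilla gadget) with some number of internal edges; so I first need to enumerate precisely which edges of the right-hand diagram are internal, and the statement of the theorem fixes that.

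First I would catalogue the internal edges of $D_2$ and, for each one, the three possible edge flips ($X$, $Y$, $Z$). For any flip whose type is the opposite colour to a detecting region passing through that edge, Theorem~\ref{thm:detecting-region} immediately gives detectability (odd overlap), so those cases are discharged at once; concretely I expect the measurement-annotating $\pi$-spider together with its surrounding structure to furnish a detecting region that catches the flips anti-commuting with the measured Pauli. For the remaining flips — those commuting with the local structure — I would push them through the diagram using $\TextPiCommute$ (to absorb a flip matching a spider's colour into that spider), $\TextFusion$ (to slide a flip along a wire of like-coloured spiders), and the colour rules for Pauli webs, exactly as in the proofs of Theorems~\ref{thm:fuse-rewrite} and~\ref{thm:five-legged-spider}. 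A $Y$ flip is handled by decomposing it into an $X$ flip and a $Z$ flip (order irrelevant up to global phase by $\TextPiCommute$) and treating each component separately; since the decomposition adds only one to the weight of that edge, and each component either gets absorbed or migrates to a boundary edge without creating a fresh affected edge, the total weight on the boundary does not exceed the original weight. Once all surviving flips sit on boundary edges, undoing $r$ yields an equivalent error $E_1$ on $D_1$ with $|E_1| \le |E_2|$, so $r$ is distance non-decreasing.

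For the reverse direction I would check whether $D_1$ (the bare single-qubit measurement side) has any internal edges; if it does not — which is the typical situation for these small rewrites, cf. Theorems~\ref{thm:elim-rewrite} and~\ref{thm:fuse-rewrite} — then $r^{-1}$ is distance non-decreasing vacuously by the sufficient condition, and combining the two directions gives that $\eqref{one-pauli}$ is distance-preserving. If instead $D_1$ carries an internal edge, I would run the same case analysis on it, which should be strictly easier given its smaller size.

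The main obstacle I anticipate is bookkeeping rather than conceptual: correctly identifying the detecting region(s) of $D_2$ so that the anti-commuting flips are provably detectable, and being careful in the $Y$-flip case that pushing the $X$- and $Z$-components to the boundary never forces them onto two distinct fresh edges in a way that would inflate the weight — the resolution, as in Theorem~\ref{thm:five-legged-spider}, is to route the leftover $Z$-component onto a boundary edge that already carries the $X$-component so the affected-edge count is unchanged.
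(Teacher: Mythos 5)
Your proposal matches the paper's proof in both structure and substance: identify a detecting region on the right-hand diagram to discharge detectable flips, push the remaining flips to the boundary using $\pi$-commutation and fusion, decompose $Y$ flips into $X$ and $Z$ components routed onto shared boundary edges, and observe that the left-hand side has no internal edges so the reverse direction is vacuous. The paper is more concrete — it notes that the RHS carries a single red detecting region covering all internal edges, which immediately reduces the analysis to even-weight $Z$ errors, with $X$ errors absorbed by the $Z$ spiders — but your plan, executed with the multi-edge bookkeeping you rightly borrow from the $r_5$ argument rather than only single-flip cases, would reproduce exactly this proof.
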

\begin{proof}
 See \autoref{appendix:distance-preserving-rewrites}
\end{proof}

\noindent This rewrite has the following measurement-circuit flow.
It can be read as two consecutive weight-one Pauli measurements of type Z:
\[\tikzfig{floquetification/rewrite-interpretation}\]

\noindent Using this implementation, we can merge single-qubit measurements with single-qubit preparations:
\begin{align*}
 \tikzfig{floquetification-2/422-remove-mid-circuit/422-remove-mid-circuit-1}
 \hspace{-2cm}&\hspace{2cm}
 \overset{(\hyperlink{eq:OCM}{OCM})}{=}
 \tikzfig{floquetification-2/422-remove-mid-circuit/422-remove-mid-circuit-2}\\
  &\overset{\eqref{infty-rewrite-reorder}}{=} \ \ \tikzfig{floquetification-2/422-remove-mid-circuit/422-remove-mid-circuit-3} \\
  &\overset{\eqref{one-pauli}}{=} \ \ \tikzfig{floquetification-2/422-remove-mid-circuit/422-remove-mid-circuit-4}
\end{align*}
After removing the mid-circuit measurements, the only preparations left are at the beginning.

The resulting circuit has swaps. 
To get rid of these, we observe that repeating the swaps three times gives the identity. Therefore, repeating the entire circuit three times lets us cancel out the swaps at the cost of having a slightly longer circuit. 
We get: 
\[\tikzfig{floquetification-2/422-final}\]
\noindent For the exact details, see~\autoref{prop:removing-swaps-422}.

\subsubsection{Step 3 --- Extracting the new code}
The goal of this Floquetification procedure is to derive a new Floquet code from an existing stabiliser code while preserving desirable properties such as distance and number of encoded qubits. We observe that, ignoring the two preparations in the beginning, the remaining measurement circuit of the $\code{4, 2, 2}$ is a circuit consisting of Pauli measurements and single-qubit Clifford gates on six qubits:
\[\tikzfig{floquetification-2/622}\]

We define: 
\begin{definition}[Dynamical circuit]
 We say a ZX diagram is a \textit{dynamical circuit} on $n$ qubits if it consists of an infinitely repeating operation schedule $\mathcal O = [O_1, O_2, \dots, O_k]$ of Pauli measurements and single-qubit Clifford unitaries. 
\end{definition}

A dynamical circuit is an infinitely repeating circuit that only consists of operations allowed in the MCLC form.
By our decomposition procedure of the measurements, we are guaranteed to always have a dynamical circuit at this step of the algorithm. 

We can now observe: 
\begin{proposition}
  \label{prop:dynamical-circuit-to-floquet-circuit}
 For any dynamical circuit on $n$ qubits, we can create an analogous Floquet code on $n$ qubits of the same ZX distance. 
\end{proposition}
\begin{proof}
 To turn a dynamical circuit into a Floquet code, we will take the single-qubit unitaries and push them towards the end of the circuit until, eventually, they all cancel out. 
 For this, we first observe that the following rewrite for any single-qubit Clifford unitary $C$ is distance-preserving:
  \[\tikzfig{Clifford-intro}\]
 This is clear from the fact that any Pauli error between the two Clifford unitaries will push out to become a different Pauli error. 

 Then, we can distance-preservingly push all the Clifford unitaries towards the end of the operation schedule as follows:
  \[\tikzfig{moving-cliffords}\]
 where the red spiders can be connected to zero or more other green spiders representing a corresponding high-weight measurement.
 This rewrite can be interpreted as updating the measurement: instead of doing the local Clifford before the measurement, we perform a conjugated measurement and do the Clifford afterwards.

Using this process, we can push all the single-qubit unitaries to the end of the operation schedule.
 We are now left with only Pauli measurements, except for some Clifford unitaries at the end.
 If all the Clifford unitaries cancel out, then we are done.
 If there are some left, we can use $\eqref{infty-rewrite-unroll}$ to consider a second iteration of the operation schedule. 
 In this second iteration, we can, once again, push through all the single-qubit Clifford unitaries. 
 By the end, we have obtained the same unitaries as we got the first time. 
 As we also push through the leftovers of the first round, we now have each Clifford unitary twice.
 This will cancel out all Hadamards, but might leave some gates, such as $S^2 = Z$.
 We can repeat this process until all Clifford unitaries are removed (at most 12 times) and we are left with an operation circuit that only consists of Pauli measurements and, therefore, describes a Floquet code.
\end{proof}

Observing that the circuit above is a dynamical circuit, we can apply \autoref{prop:dynamical-circuit-to-floquet-circuit} to get:

\[\tikzfig{floquetification-2/622-final}\]
where we read the red two-qubit measurements as $XX$ measurements, the green ones as $ZZ$ and the green ones where some edges are conjugated by Hadamards as $ZX$ or $XZ$ measurements.
We have, therefore, obtained a circuit solely consisting of infinitely repeating Pauli measurements, which, by definition, is a Floquet code. 

\noindent While we are now guaranteed to get a new Floquet code, derived from the original stabiliser code, it remains to be shown that this new code has the same number of logicals and the same distance as the original stabiliser code.

\subsection{Characterising the Floquetification procedure}
In general, this procedure can take any $\code{n, k, d}$ stabiliser code and transform it into an $\code{n', k', d'}$ Floquet code that uses only one-qubit and two-qubit Pauli measurements. 
In the remainder of this section, we will characterise characterise $n', k'$ and $d'$ and then present a slightly adapted procedure that guarantees that the resulting Floquet code is always a proper Floquet code.

\subsubsection{Number of physical qubits}
The procedure requires additional qubits. 
To calculate the number of physical qubits required by the new code, we observe that the individual Pauli measurements free up the auxiliary qubits they require in their implementation immediately after being completed.
Therefore, these auxiliary qubits can be reused for the next measurement, and thus the number of additional qubits is solely dependent on the largest weight measurement in the measurement schedule of the original code. 
We have previously shown that for a measurement of weight $w$, we require $\lceil \frac{w}{2} \rceil + f(w)$ auxiliary qubits where $f(w) \leq \log_2(w)$.
Therefore, the number of additional qubits required for this procedure is $\lceil \frac{w}{2} \rceil + f(w)$ where $w$ is the weight of the highest-weight Pauli measurement.
For code families where $w$ is constant, the overhead of auxiliary qubits is also constant. 
Overall, we have $n' = n + \lceil \frac{w}{2} \rceil + f(w)$.

\subsubsection{Number of logical qubits}
\label{sec:logical-qubits}
The number of logicals the circuit encodes remains the same. 
We will argue this by considering the three steps of the Floquetification procedure individually: writing out the measurement schedule, rewriting the measurements and extracting the new code. 

The first step expressed the original code as a ZX diagram, which does not affect the number of logicals. The second step only performs ZX rewrites on the measurement circuit.
This means that the circuit at the end of step two is equivalent to the original circuit. But then, after any number of iterations of the measurement schedule, the stabilisers, co-stabilisers and logicals of the ZX diagram must be the same, and thus the two circuits encode the same amount of information.

Finally, we argue that the last step of removing the single-qubit preparation does not change the number of logicals either.
We observe: 
\begin{proposition}
 These two circuits have the same number of logical qubits: 
    \[\tikzfig{floquetification/logicals-setup}\]
    \begin{proof}
 Both diagrams have the same stabilisers: 
      \[\tikzfig{floquetification/stabilisers}\]

      \noindent Additionally, both diagrams have the same number of logicals with representatives that highlight the same output edges.
        \[\tikzfig{floquetification/logicals}\]

    \end{proof}
\end{proposition}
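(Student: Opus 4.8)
The plan is to argue entirely with Pauli webs, leaning on the correspondences already established in the paper: by \autoref{prop:webBijectiveStabs} the stabilising Pauli webs of a circuit are in bijection with its stabilisers, and by \autoref{prop:logicals} the logical Pauli webs (those that are not combinations of stabilising and co-stabilising webs) track the logical operators $N(S_t)/S_t$. Hence it suffices to exhibit a correspondence between the Pauli webs of the two diagrams that preserves the highlighting on the shared edges — in particular on the boundary edges, so that $out(w)$ and $in(w)$ are preserved. The only structural difference between the two diagrams is the single-legged spider coming from the $\ket{0}$ (resp.\ $\ket{+}$) preparation at the beginning, together with its unique incident edge; everything else is common to both.

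The first key observation is that this single-legged preparation spider is \emph{web-transparent}: by the Pauli-web rules, a one-legged spider with a $k\pi$ phase may have its unique leg highlighted in the opposite colour or left unhighlighted, but never highlighted in its own colour, and both permitted choices are locally consistent. Consequently every Pauli web of the diagram containing the preparation restricts to a Pauli web of the diagram without it that agrees on every shared edge, and conversely. Applying this to stabilising webs (those highlighting no input edge) and reading off the highlighted outputs shows that the two diagrams realise the same set of stabilisers — this is exactly the explicit computation displayed for the $\code{4, 2, 2}$ example, where one checks that a common generating set of stabilisers is available in both diagrams. The same transparency argument applied to co-stabilising webs, and to combinations of the two, shows those classes also match up with identical output highlighting.

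For the logicals, since a logical Pauli web is by definition one that is not a combination of stabilising and co-stabilising webs, and the classes of stabilising webs, co-stabilising webs and their combinations coincide across the two diagrams, the quotient data agree as well: every logical representative of one diagram transports to a logical representative of the other highlighting the same output edges, as shown in the accompanying figure. With the stabiliser groups equal and a full set of logical representatives agreeing on the boundary, $N(S)/S$ is the same for both diagrams, so the two circuits encode the same logical qubits; this is what licenses ``ignoring the preparations at the beginning'' in the extraction step.

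The main obstacle is the book-keeping around the single Pauli degree of freedom that the preparation does pin down relative to a bare wire: one must be sure this fixed Pauli sits inside the stabiliser group of both diagrams rather than trading against a logical, i.e.\ that no logical representative of the preparation-free diagram silently becomes a stabiliser once the preparation is reinstated. Abstractly this follows from web-transparency — the preparation edge is never forced to be highlighted in its own colour, so reinstating the preparation cannot introduce an anticommutation that annihilates a would-be logical — but making it fully airtight is cleanest via the finite case analysis of stabilising and logical webs carried out for the concrete diagrams, which is the only genuine computation in the proof.
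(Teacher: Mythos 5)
Your abstract argument hinges on a ``web-transparency'' bijection between Pauli webs of the two diagrams, but the claim fails in one direction. You correctly observe that the one-legged $\ket{0}$ (resp.\ $\ket{+}$) spider can only have its leg highlighted in the opposite colour or not at all; hence every Pauli web of the prepared diagram restricts to a valid web on the prep-free diagram. The converse, however, is false: the prep-free diagram has an additional \emph{input} edge in place of the prep-spider's leg, and since boundary nodes impose no constraint, a Pauli web there may legitimately highlight that edge in the preparation's \emph{own} colour (e.g.\ green for a $\ket{0}$ prep). Such a web cannot be extended across the reinstated one-legged spider. So there are genuinely webs on one side with no counterpart on the other, and the ``same classes of stabilising/co-stabilising/logical webs'' conclusion does not follow from transparency alone. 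Moreover, the classification of a given edge as internal versus input changes between the two diagrams, so even when highlightings agree on shared edges, the partition into stabilising, co-stabilising, and logical webs can shift.

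What actually rescues the statement --- and what the paper invokes immediately after the proposition --- is the structural guarantee that every preparation to be removed is followed by a \emph{measurement of the same Pauli type}. This is the fact that ensures the Pauli degree of freedom fixed by the prep is already pinned down by the subsequent measurement, so the stabiliser group on the outputs does not shrink when the preparation is dropped, and no D2-logical trades against a D1-stabiliser. You flag exactly this worry in your final paragraph, but then assert it ``follows from web-transparency,'' which it does not; without invoking the matched prep--measurement pair, the abstract argument has a genuine gap. The paper's own proof sidesteps all of this by simply exhibiting generating stabilising webs and logical representatives for both concrete diagrams, checking by inspection that they produce the same outputs, and concluding that if the stabiliser groups coincide and the logical representatives coincide on outputs, then $N(S)/S$ coincides. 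You end up deferring to the same concrete check, so both routes converge, but your framing should either drop the false bijection claim or explicitly rely on the prep-followed-by-matching-measurement structure to establish that the discarded webs do not carry new stabilisers or logicals.
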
 

This Floquetification procedure guarantees that all preparations at the beginning of the circuit that we might want to remove will always be followed by a measurement in this form.
This is because they are either created to achieve measurement-circuit flow on the original measurement, in which case both the preparation and the measurement are green, or during the expansion of the high-weight measurement spider, in which case both are red.
As ZX is colour-symmetric, either way, removing the preparations one by one does not change the number of logicals.
Therefore, we have $k' = k$. 

\subsubsection{Code distance}
Given a $\interp{n, k, d}$ stabiliser code, the outlined algorithm starts with the measurement circuit of that code. 
By \autoref{prop:distance-correspondence}, we know that after establishment, the corresponding ZX diagram has a ZX distance of~$d$. 
Then the procedure applies distance-preserving rewrites to manipulate the measurement circuit into a new shape. 
As we use distance-preserving rewrites, by the end of the second step of the Floquetification procedure, the distance of the rewritten ZX diagram after establishment is still $d$\footnote{
 In practice, the Floquet code might establish slightly earlier than the original stabiliser code. However, in our Floquetification procedure, this difference can be at most one step of local Cliffords, which we can ignore when calculating the code distance. }. 

Next, the algorithm removes the preparations. 
As this changes the underlying linear map, this is not a distance-preserving rewrite. 
Thus, we have to show that this step, nonetheless, preserves the distance of the ZX diagram after establishment. 
This rewrite of removing the preparations happens before the diagram establishes. 
As the distance of the diagram after establihsment only cares about errors after establishment, this means it does not introduce new errors.
The other way that this step could affect the distance after establishment would be by affecting when the diagram establishes or ISGs after establishment. 
However, as we have shown above, this is not the case. 
Therefore, the resulting measurement circuit of a Floquet code has the same distance after establishment as the original stabiliser code. 
But then, by \autoref{prop:distance-correspondence}, we have $d' = d$.

\subsubsection{Proper Floquet code}
\label{sec:proper-floquet}
Our definition of Floquet codes generalises both stabiliser and subsystem codes.
We distinguish these as special cases: for stabiliser codes, the instantaneous stabilising group never changes;
for subsystem codes, the associated subsystem code should have the same number of logicals.
Clearly, the measurement schedule resulting from our Floquetification procedure contains anti-commuting measurements and therefore cannot be understood as a stabiliser code. 
We will now show that for any stabiliser code, there exists a Floquetification such that the associated subsystem code of the Floquetified code does not have any logicals.
For this, we make use of the fact that our Floquetification procedure does not specify the exact measurement schedule of the stabilising code, we start out with simply requiring the schedule to generate all stabilisers of the original code. 
In general, it is inconvenient to argue about all possible choices of schedules and their composition. 
Instead, we argue that for any stabilising code, there exists an associated measurement schedule such that the associated subsystem code of the resulting Floquet code has no logicals.
As such, if one wants to guarantee a proper Floquet code as the result of this procedure, one should start with a measurement schedule of this shape. 

\begin{proposition}
  Let $S_1, S_2, \dots S_m$ be a generating set for the stabilisers of some $\interp{n, k, d}$ stabiliser code with $d > 1$.
  Let $\mathcal M$ be the Floquet code obtained by Floquetifying the measurement schedule $[S_1, S_1, S_2, S_2, \dots S_m, S_m]$, i.e.\@ obtained by repeating each stabiliser measurement twice.
  The associated subsystem code $G_{\mathcal M}$ has no logical qubits.
\end{proposition} 
\begin{proof}
   Consider the circuit fragment we get when Floquetifying two consecutive $ZZZZ$ measurements: 
  \[\tikzfig{proper-floquet-proof}\]
  Clearly, every qubit is, at some point in time, measured by a single-qubit $X$ measurement. 
  More generally, when repeating two consecutive Floquetifications of arbitrary-weight $Z$ measurements, every qubit is measured by a single-qubit Pauli $X$. 
  But then, as we assumed the stabiliser code we started with to have distance more than 1, every qubit must be measured by a $Z$ or $Y$ measurement at least once, otherwise an $X$ flip of that qubit would be undetectable. 
  Therefore, in the Floquetified code, every qubit must, at some point in time, be measured by at least one single-qubit $X$ or a single-qubit $Y$ measurement. 
  But then, the subsystem code associated with this Floquetified code cannot have any non-trivial logical representatives that act as $Z$ on any of the qubits, as those would anticommute with the corresponding single-qubit measurement. 
  By a similar line of reasoning, the associated subsystem code can also not have any non-trivial logical representatives that act as $X$ on any of the qubits.
  Therefore, the associated subsystem code cannot have any logicals.
\end{proof}

As we have previously shown that the Floquetification procedure preserves the number of logical qubits, we know that the Floquet codes resulting from the procedure above are proper Floquet codes; while the Floquet code has $k$ logical qubits, the associated subsystem code has none.

\section{Conclusion}
In this work, we defined a notion of distance on ZX diagrams as the weight of the smallest non-trivial, non-detectable error.
This corresponds to the usual code distance for the measurement circuits of stabiliser codes.
Based on this definition, we introduced distance-preserving rewrites.
These are the subset of all ZX rewrites that provably preserve the ZX distance of any diagram they are applied to.

Next, we introduced the notion of measurement-circuit (MC) flow on ZX diagrams.
We showed that any diagram with MC flow can be synthesised into an equivalent circuit-like diagram.
In particular, we identified a \emph{strict} variant of MC flow that guarantees extraction into a circuit consisting solely of local Cliffords and Pauli measurements (MCLC form).
Moreover, this procedure provably preserves the ZX distance.

Using MC flow, we decomposed arbitrary-weight Pauli measurements into an equivalent quantum circuit consisting of weight-one and weight-two operations.
Since we only use distance-preserving rewrites, we are guaranteed that any error in the decomposed quantum circuit corresponds to a data error of at most the same weight.
These decompositions enable us to generalise the Floquetification procedure of~\textcite{townsend-teagueFloquetifyingColourCode2023} to arbitrary stabiliser codes,
provably preserving the distance and number of logicals of the original code.
Crucially, we showed that the resulting codes are `proper' Floquet codes, distinct from subsystem codes, as they possess a dynamic logical structure and non-commuting measurements.
The qubit overhead scales linearly with the weight of the largest measurement and is, therefore, constant for LDPC code families.

Several aspects of our work can be extended as future work.
For example, one may optimise the Floquetification procedures with respect to various criteria such as qubit and gate count, qubit connectivity and the length of the measurement schedule~\parencite{fuenteDynamicalWeight2024}, or utilise alternative gate sets~\parencite{hilaireEnhancedFaulttolerance2024}.
Additionally, one may consider how our approach could be combined with the detector error model~\parencite{derksDesigningFaulttolerant2024}.
This may lead to methods that leverage efficient decoders of the original stabiliser code to design decoders for the Floquetified code.

More generally, this paper lays a mathematical foundation for reasoning about quantum circuits in a noisy setting.
It invites an alternative perspective on Floquetification, viewing it as a procedure that compiles circuits for memory experiments with stabiliser codes into forms involving only measurements of weight at most two. 
The fact that the resulting circuit can be interpreted as a Floquet code is useful for analytical reasons, yet ultimately incidental. 
Building on this intuition, we believe that distance-preserving rewrites can be extended to other areas in fault-tolerant computation, such as more general circuit synthesis and circuit verification tasks.
Building this bridge requires formalising other noise models in the ZX calculus and extending the distance-preserving rewrite system.
We believe that this can lead to novel techniques in fault-tolerant circuit synthesis, optimisation and verification.

\section*{Acknowledgements}
We would like to thank Alex Townsend-Teague for the guidance he provided during numerous discussions, for pointing out an easier proof of \autoref{thm:detecting-region}, and for his detailed feedback on the paper.
We thank Razin Shaikh for his helpful feedback on our draft and Julio Magdalena de la Fuente, Peter-Jan Derks, and Clemens Schumann for the insightful discussions.
We thank Linnea Grans-Samuelsson and Andrey Boris Khesin for their input on the proof that fault equivalence is the same as distance preservation. 
BR thanks Simon Harrison for his generous support for the Wolfson Harrison UK Research Council Quantum Foundation Scholarship.
BP and AK are supported by the Engineering and Physical Sciences Research Council grant number EP/Z002230/1, \enquote{(De)constructing quantum software (DeQS)}.

\section*{Author contribution statement}
All three authors contributed to proving the main results.
B.P. wrote the first draft of \autoref{sec:impl-interp} and B.R. wrote the first draft of the remaining sections.
All authors contributed towards the final version.

\printbibliography

\newpage
\appendix
\section{Pauli webs --- Deferred proofs}
\label{sec:pauli-webs-proofs}

In this section, we prove some unproven properties of Pauli webs.
Note that we use unsigned Pauli webs, so the statements only hold up to a phase of $\pm 1$.

\subsection{Stabilisers}
\label{subsec:stabilisers}

Pauli webs have a formal correspondence with stabilisers and logicals as tracked by group theory. In this section, we formalise that correspondence. 

We define: 
\begin{definition}[Boundary effect]
 Given a Pauli web $w$, $out(w)$ denotes the Pauli string given by the highlighted output edges of $w$.
 We define $in(w)$ similarly for the highlighted input edges.
\end{definition}

We observe that we can multiply Pauli webs to create new Pauli webs.
\begin{definition}[Product of Pauli webs]
  Let $w_1, w_2$ be two Pauli webs on some diagram $D$.
  Then, we define:
  \[
    w_1 \Delta w_2 \coloneqq w_1 \cup w_2 \setminus (w_1 \cap w_2)
  \]
  i.e.\@ the set of edge and colour pairs that are in exactly one of the webs.
\end{definition}
\noindent The product of two Pauli webs is a valid Pauli web~\parencite{bombinUnifyingFlavorsFault2024}.
Furthermore, we define:
\begin{definition}[Quotient equivalence classes]
  Let $G$ and $N$ be two sets of Pauli webs on some diagram $D$ such that $\forall g \in G, n \in N$ such that $g \Delta n \in G$ then the \textit{quotient equivalence classes} of $G / N$ is defined as:
  \[
    G / N \coloneqq \{g \Delta N : g \in G\}
  \]
  where $g \Delta N \coloneqq \{g \Delta n: n \in N\}$.
\end{definition}

Using this, we can state:

\begin{restatable}{proposition}{webOne}
  \label{prop:webBijectiveStabs}
  Let $C_t$ be the measurement circuit of the first $t$ time steps of a Floquet code with detecting Pauli webs $D$ and stabilising Pauli webs $S$.
  Then the equivalence classes $S / D$ are in bijective correspondence with the instantaneous stabiliser group $S_t$ observed by the map $b: S / D \to S_t \dblcolon E \in S / D \mapsto out(e) \text{ for some } e \in E$.
\end{restatable}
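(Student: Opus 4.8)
The plan is to prove this in two stages: first, establish that the map $b$ is well-defined (i.e.\ independent of the choice of representative $e \in E$ and lands in $S_t$), and second, show that it is a bijection by exhibiting an inverse. The key conceptual fact underlying everything is the interpretation of Pauli webs via ``firing'': a stabilising Pauli web $w$, when fired, pushes $\pi$-spiders through the diagram so that all internal edges cancel and only boundary $\pi$-spiders on the output edges remain, witnessing that $C_t$ stabilises the Pauli string $out(w)$ (as illustrated in the stabiliser example after \autoref{thm:detecting-region}). Conversely, a detecting region $d$ highlights no boundary edges at all, so firing it produces a trivial rewrite --- it witnesses that a certain propagation closes up internally. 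Thus for $w \in S$ and $d \in D$ with $w \Delta d \in S$, the string $out(w \Delta d) = out(w)$ since $d$ contributes nothing on the boundary. This immediately gives well-definedness of $b$: any two representatives of a class $g \Delta D$ differ by a detecting region, hence have the same output string; and that string is a genuine stabiliser of $C_t$ because firing the web is a valid sequence of ZX rewrites (\TextFusion, \TextXElim, \TextZElim, \TextPiCommute).

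For surjectivity, I would argue that any element $P \in S_t$ --- a Pauli operator stabilising the state prepared by $C_t$ --- can be realised as $out(w)$ for some stabilising Pauli web $w$. Placing the $\pi$-spiders corresponding to $P$ on the output edges and using completeness of the Clifford ZX calculus, the resulting diagram equals $C_t$; tracking which edges carry $\pi$-spiders through this equality (equivalently, running the argument of \autoref{thm:detecting-region} in reverse) yields a consistent assignment of highlighted edges that satisfies the Pauli web conditions at every spider, and highlights no input edges --- i.e.\ a stabilising web $w$ with $out(w) = P$. For injectivity, suppose $w_1, w_2 \in S$ have $out(w_1) = out(w_2)$. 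Then $w_1 \Delta w_2 \in S$ is a stabilising web with empty output, hence also empty (it highlights no boundary edges on the input side by definition of $S$, and now none on the output side either), so it is a detecting region; therefore $w_1$ and $w_2$ lie in the same class of $S/D$. Combined with $b$ being a group homomorphism with respect to $\Delta$ and Pauli multiplication (which follows because firing is additive over $\mathbb{F}_2$ on the edge assignments), this makes $b$ a group isomorphism.

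The main obstacle I anticipate is the surjectivity step: making precise that ``every stabiliser of the circuit arises as the output of some stabilising web.'' This requires either invoking completeness of the Clifford fragment together with a careful reverse-engineering of the web from the rewrite sequence, or a more hands-on inductive argument over the structure of $C_t$ (boundary spiders, internal spiders of each colour and phase) showing that the stabiliser group is generated exactly by the ``local'' webs around each spider. I would lean on the former, citing the firing characterisation of Pauli webs from~\textcite{borghansZXcalculusQuantumStabilizer2019} and the detecting-region machinery already developed, so that the bulk of the work is bookkeeping: checking that the edge assignment extracted from the rewrite respects the spider conditions in the definition of a Pauli web, and that the counting of independent webs modulo detecting regions matches $\dim S_t = n - k$ as tracked by the group-theoretic formalism in \autoref{sec:prelims}. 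A subtlety worth flagging explicitly is the $\pm 1$ phase ambiguity (we use unsigned Pauli webs), so the correspondence is really with $S_t$ up to signs --- which is consistent with the remark at the start of \autoref{sec:pauli-webs-proofs}.
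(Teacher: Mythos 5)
Your proposal is correct and takes essentially the same route as the paper's own proof: well-definedness because a detecting region $d$ highlights no boundary edges so $out(w \Delta d) = out(w)$, injectivity because two stabilising webs with equal outputs differ by a web that highlights no boundary edges and is therefore a detecting region, and surjectivity deferred to the firing correspondence of~\textcite{borghansZXcalculusQuantumStabilizer2019}. The one place you anticipate trouble (surjectivity) is exactly where the paper also simply cites that reference rather than reconstructing the web from a rewrite sequence, so your ``obstacle'' is already absorbed by the citation and the extra reverse-engineering sketch is unnecessary.
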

\begin{proof}
  First, we show that the map $b$ is well-defined.
  Let $w_i, w_j$ be stabilising webs in the same equivalence class $E \in S / D$.
  Then, by~\parencite{borghansZXcalculusQuantumStabilizer2019}, we know that $out(w_i), out(w_j) \in S_t$.
  It remains to be shown that $out(w_i) = out(w_j)$.
  As $w_i, w_j \in E$, we know there exists a detecting region $d$ such that $w_i \otimes d = w_j$.
  But then $out(w_j) = out(w_i \otimes d) = out(w_i) \otimes out(d)$.
  As $out(d) = I$, we have $out(w_i) = out(w_j)$.
  Therefore, the map is well-defined.

  Next, we show that the map is injective.
  Let $w_i \in E_1, w_j \in E_2$ be stabilising webs such that $out(w_i) = out(w_j)$.
  Then $w_i \otimes w_j$ is a Pauli web which highlights no inputs (as neither $w_i$ nor $w_j$ does) and no outputs (as $out(w_i) = out(w_j)$).
  But then $d = w_i \otimes w_j$ is a detecting region such that $w_i \otimes d = w_j$.
  Therefore, $E_1 = E_2$ and thus $b$ is injective.

  But by~\parencite{borghansZXcalculusQuantumStabilizer2019}, we know that for each stabiliser in $s \in S_t$ there must exist at least one stabilising web $w$ with $out(w) = s$.
  Thus, $b$ is also surjective and therefore bijective.
\end{proof}

\subsection{Logical operators}
\label{subsec:logical-operators}

Similarly, we can formalise the relationship between logicals and Pauli webs. 
For this, we need two auxiliary propositions.
Firstly, we have:
\begin{proposition}
  \label{prop:circuit-decomposition}
  Let $C$ be a circuit consisting of Pauli measurements and unitaries which stabilises the Pauli strings $S$.
  Then $C$ can be written as $C = \Pi \circ U$ with a projector $\Pi$ that stabilises exactly $S$ and some unitary $U$.
\end{proposition}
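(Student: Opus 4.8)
The plan is to induct on the number of operations in $C$, which amounts to sliding every measurement projector towards the output end of the circuit. Writing $C_j$ for the sub-circuit of the first $j$ operations, I would carry the strengthened invariant that $C_j = \lambda_j\,\Pi_{S_j}\circ U_j$ up to a nonzero scalar $\lambda_j$, where $U_j$ is a Clifford unitary, $S_j := \{\, s \text{ Pauli} : s\,C_j = C_j \,\}$ is the group of Paulis stabilising $C_j$, and $\Pi_{S_j} := \tfrac{1}{|S_j|}\sum_{s\in S_j} s$ is the \emph{canonical} projector onto the joint $+1$-eigenspace of $S_j$. (As throughout the paper's Clifford setting, "unitary" here means Clifford unitary; for genuinely non-Clifford unitaries the statement fails.) This invariant suffices: the canonical projector stabilises exactly $S_j$ — from the standard fact that $s\,\Pi_T = \Pi_T$ forces $s\in T$ — so applying it to $C = C_N$ gives $C = \Pi_S\circ U$ with $\Pi$ stabilising exactly $S$, and the leftover scalar is harmless under the up-to-scalar convention of Remark~\ref{rem:scalar-zx}. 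The base case $C_0 = I = \Pi_{\{I\}}\circ I$ is immediate.

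For the inductive step write $C_j = O_j\circ C_{j-1}$ with $C_{j-1} = \lambda\,\Pi_{S_{j-1}}U_{j-1}$. If $O_j = V$ is a Clifford unitary, then $C_j = (V\Pi_{S_{j-1}}V^\dagger)(V U_{j-1}) = \Pi_{V S_{j-1} V^\dagger}\circ(V U_{j-1})$, since conjugating a canonical projector by a Clifford gives the canonical projector of the conjugated group; cancelling the invertible $V U_{j-1}$ shows $S_j = V S_{j-1} V^\dagger$, so the invariant persists. If $O_j = P = \tfrac12(I + \epsilon Q)$ is a Pauli measurement with the outcome fixed so that $C_j\neq 0$, I would evaluate $P\,\Pi_{S_{j-1}}$ by cases: if $\epsilon Q\in S_{j-1}$ then $P\Pi_{S_{j-1}} = \Pi_{S_{j-1}}$ and nothing changes; if $Q$ commutes with $S_{j-1}$ but $\pm Q\notin S_{j-1}$ then $P\Pi_{S_{j-1}} = \Pi_{\langle S_{j-1},\,\epsilon Q\rangle}$ is again a canonical projector; and if $Q$ anticommutes with some element of $S_{j-1}$ then (see below) $P\Pi_{S_{j-1}} = \tfrac{1}{\sqrt2}\,\Pi_{S'}\circ W$ for a stabiliser group $S'$ with $|S'| = |S_{j-1}|$ and a Clifford $W$, so $C_j = \lambda'\,\Pi_{S'}(W U_{j-1})$ and, cancelling $W U_{j-1}$, $S_j = S'$.

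The main obstacle is this last, anticommuting sub-case: it is where the unitary $U$ is genuinely forced to appear and where preserving "$\Pi$ stabilises \emph{exactly} $S$" is most delicate. I would handle it by choosing generators $g_1,\dots,g_k$ of $S_{j-1}$ normalised so that $Q$ anticommutes with $g_1$ and commutes with $g_2,\dots,g_k$; then $\epsilon Q, g_1, g_2, \dots, g_k$ are independent Paulis with the same commutation relations as $Z_1, X_1, Z_2, \dots, Z_k$, so a Clifford $W_0$ maps the former list to the latter. Conjugating by $W_0$ reduces the computation to the single-qubit identity
\[
  \tfrac12(I + Z)\,\ket{+}\!\bra{+} \;=\; \tfrac{1}{\sqrt2}\,\ket{0}\!\bra{0}\,H
\]
(tensored with $\ket{0\cdots0}\!\bra{0\cdots0}$ on qubits $2,\dots,k$ and the identity elsewhere), and conjugating back gives $P\,\Pi_{S_{j-1}} = \tfrac{1}{\sqrt2}\,\Pi_{S'}\,(W_0^\dagger H_1 W_0)$ with $S' = W_0^\dagger\langle Z_1,\dots,Z_k\rangle W_0$, which closes the induction.

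As a less computational alternative, one can argue directly: $C$ stabilising $S$ already gives $\Pi_S C = C$; a Clifford-plus-measurement circuit is a scalar times a partial isometry whose image is exactly $\mathrm{Im}(\Pi_S)$ (each measurement at most halves the rank and keeps images stabiliser subspaces), so $\operatorname{rank}(C) = \operatorname{rank}(\Pi_S)$; hence $\operatorname{Im}(C^\dagger)^{\perp}$ and $\mathrm{Im}(\Pi_S)^{\perp}$ have equal dimension and the partial isometry underlying $C$ extends to a unitary $U$ with $U\big(\operatorname{Im}(C^\dagger)^{\perp}\big)\subseteq\mathrm{Im}(\Pi_S)^{\perp}$, whence $\Pi_S U \propto C$. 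The rank bookkeeping there plays exactly the role of the case analysis above.
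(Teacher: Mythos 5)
Your proof takes the same inductive route as the paper --- build up the circuit one operation at a time, conjugating the projector past unitaries and absorbing measurements into it --- but you are noticeably more careful at exactly the step where the paper is terse. The paper's induction for the measurement case writes $C_i = O_i\circ\Pi_{i-1}\circ U_{i-1}$ and then asserts, citing Gottesman and Delfosse, that $O_i\circ\Pi_{i-1}$ is \emph{a new projector} $\Pi_i$, keeping $U_i = U_{i-1}$. As stated, that is imprecise: when the measured Pauli $Q$ anticommutes with the current stabiliser group, $O_i\Pi_{i-1}$ is not a projector but (up to $1/\sqrt2$) a projector composed with a nontrivial Clifford, which is exactly the sub-case you isolate with the identity $\tfrac12(I+Z)\ket{+}\!\bra{+} = \tfrac{1}{\sqrt2}\ket{0}\!\bra{0}H$ conjugated by a symplectic Clifford. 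Your write-up thus closes a small gap the paper leaves implicit behind the citation. Two minor differences are worth noting. First, you carry a strictly stronger invariant (the projector is the \emph{canonical} stabiliser projector $\Pi_{S_j}$ and $U_j$ is Clifford), which requires the unitaries to be Clifford; the paper's weaker invariant (``some projector, some unitary'') works for arbitrary unitaries because $O_i\Pi_{i-1}O_i^\dagger$ is always a projector and the ``stabilises exactly $S$'' conclusion follows from $U$ being unitary --- so the paper's statement, once the measurement case is patched, is actually a touch more general than what you prove. Second, you also sketch a non-inductive alternative via rank-tracking and partial isometries; the paper does not do this, but it is a legitimate and arguably cleaner route.
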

\begin{proof}
  Let $C$ be a circuit with $n$ operations.
  Then we can iteratively build up the circuit and bring it back into the desired form after each new operation.
  We refer to $C_i$ for $i \in [0, n]$ as the circuit one gets after applying the first $i$ operations of $C$.

  $C_0$ is the identity, so $C_0 = \Pi_0 \circ U_0$ for $\Pi_0 = U_0 = I$.
  Then for any following step, let us assume that $C_{i - 1}$ can be written as $C_{i - 1} = \Pi_{i - 1} \circ U_{i - 1}$.
  Then, if the $i$-th operation $O_i$ is a projector, we can write:
  \[C_{i} = O_i \circ C_{i - 1} = O_i \circ \Pi_{i - 1} \circ U_{i - 1}\]
  But by~\parencite{gottesmanHeisenbergRepresentation1998,delfosseSpacetimeCodes2023}, we know that $O_i \circ \Pi_{i - 1}$ is a new projector $\Pi_i$, so $C_{i} = \Pi_{i} \circ U_{i}$ with $U_i = U_{i - 1}$.

  If $O_i$ is a unitary, then we have:
  \[
    C_{i} = O_i \circ C_{i - 1} = O_i \circ \Pi_{i - 1} \circ U_{i - 1} = O_i \circ \Pi_{i - 1} \circ O_i^\dagger \circ O_i \circ U_{i - 1}
  \]
  But we know that $\Pi_i = O_i \circ \Pi_{i - 1} \circ O_i^\dagger$ is a projector and that $U_i = O_i \circ U_{i - 1}$, the composition of two unitaries, is a unitary.
  Thus, we can write $C_{i} = \Pi_{i} \circ U_{i}$.

  But then, we can always write $C_i$ as $\Pi_i \circ U_i$.
  As $U_i$ has no stabilisers, all the stabilisers of $C_i$ must also be stabilised by $\Pi_i$.
\end{proof}

Secondly, we have: 
\begin{proposition}
  \label{prop:commuting-webs}
  Let $\Pi$ be a ZX diagram for a Pauli projector, then for all Pauli webs $w$ on $\Pi$, $out(w)$ commutes with $\Pi$.
\end{proposition}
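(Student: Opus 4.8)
The plan is to reduce \autoref{prop:commuting-webs} to the ``spider-firing'' behaviour of Pauli webs together with the two defining properties of a Pauli projector, namely that $\interp{\Pi}$ is Hermitian and idempotent. Firing every spider of $\Pi$ at once according to $w$ --- exactly as in the proof of \autoref{thm:detecting-region} --- introduces a $\pi$-spider of the web-assigned type on each highlighted edge; on every internal edge the two introduced spiders cancel by \TextFusion, \TextXElim and \TextZElim, while on the boundary edges one is left with the Pauli string $in(w)$ on the inputs and $out(w)$ on the outputs, up to a global sign $c\in\{+1,-1\}$ produced by \TextPiCommute. At the level of linear maps this is the identity
\[ out(w)\,\interp{\Pi} \;=\; c\,\interp{\Pi}\,in(w),\qquad c\in\{+1,-1\}. \]

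Starting from this identity I would first take adjoints: since $\interp{\Pi}^\dagger=\interp{\Pi}$, $out(w)$ and $in(w)$ are Hermitian involutions, and $c$ is real, we also get $\interp{\Pi}\,out(w)=c\,in(w)\,\interp{\Pi}$. Next I would sandwich: multiplying the first identity on the left by $\interp{\Pi}$ and the second on the right by $\interp{\Pi}$, and using $\interp{\Pi}^2=\interp{\Pi}$, both sides equal $\interp{\Pi}\,out(w)\,\interp{\Pi}$, so $c\,\interp{\Pi}\,in(w)=c\,in(w)\,\interp{\Pi}$; hence $in(w)$ commutes with $\interp{\Pi}$. Feeding this back, $out(w)\,\interp{\Pi}=c\,\interp{\Pi}\,in(w)=c\,in(w)\,\interp{\Pi}=\interp{\Pi}\,out(w)$, which is the claim (the sign $c$ is irrelevant, consistent with the remark that Pauli webs are unsigned).

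I do not expect a genuine obstacle: the one delicate point is justifying the firing identity at the level of linear maps for an \emph{arbitrary} diagram representing $\Pi$ rather than for some normal form, but this is precisely the content of the ``firing a spider according to a Pauli web'' discussion applied simultaneously to all spiders, and it is already the mechanism used in the proof of \autoref{thm:detecting-region}. As a cross-check one can also argue geometrically: the displayed identity shows $out(w)$ maps $\mathrm{im}\,\interp{\Pi}$ into itself, and since $out(w)$ is unitary it then preserves both $\mathrm{im}\,\interp{\Pi}$ and its orthogonal complement, so it commutes with the orthogonal projector $\interp{\Pi}$.
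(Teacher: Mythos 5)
Your proof is correct, but it takes a genuinely different route from the paper. The paper proves the claim by a counting argument: it applies process--state duality to turn $\Pi$ into a state $\Pi'$ on $2n$ qubits, which as a stabiliser state has exactly $2n$ independent stabilisers; it then explicitly constructs all $2n$ of them from the stabilisers of $\Pi$ (using self-adjointness to obtain co-stabilisers) and from representatives of $N(S)/S$, observes that each of these has a boundary Pauli that commutes with $\Pi$, and invokes the bijection between stabilisers of $\Pi'$ and Pauli webs on $\Pi$ (modulo detecting regions) to conclude that \emph{every} Pauli web's output commutes with $\Pi$. Your argument instead takes the firing identity $out(w)\,\interp{\Pi} = c\,\interp{\Pi}\,in(w)$ as the single input, and then closes purely algebraically from Hermiticity and idempotency of $\interp{\Pi}$, or --- even more cleanly, as in your cross-check --- from the observation that a unitary preserving $\operatorname{im}\interp{\Pi}$ must commute with the orthogonal projector onto it. This is shorter, avoids the explicit construction of a spanning family of Pauli webs, and does not rely on the bijection result; in exchange it places more weight on the firing identity holding for arbitrary (not just stabilising) Pauli webs, which the paper delegates to \parencite{borghansZXcalculusQuantumStabilizer2019} rather than spelling out. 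Two small points to tighten: unsigned Pauli strings are only Hermitian involutions up to a sign (e.g.\ $(XZ)^\dagger = -XZ$), so after taking adjoints the constant $c$ may pick up a sign; your sandwich step then yields either that $in(w)$ commutes with $\interp{\Pi}$ or that it anticommutes, and the anticommuting case must be excluded separately (conjugating the positive operator $\interp{\Pi}$ by a unitary cannot give $-\interp{\Pi}$ unless $\interp{\Pi}=0$). Your geometric cross-check sidesteps this sign bookkeeping entirely and is the cleaner way to finish.
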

\begin{proof}
  We prove this via a counting argument.
  We provide a set of Pauli webs that spans all Pauli webs of $\Pi$ and argue that for each of them, $out(w)$ commutes with $\Pi$.

  Let $\Pi$ be a ZX diagram for a Pauli projector on $n$ qubits.
  First, we use process-state duality to create the following state on $2n$ qubits:
  \[
    \tikzfig{PauliWeb/process-state}
  \]

  This creates a state $\Pi'$ on $2n$ qubits.
  As it is on $2n$ qubits, we know it has $2n$ independent stabilisers.
  Let $S$ be the stabilisers of $\Pi$ spanned by $m$ generators.

  As $\Pi$ is self-adjoint, it is also co-stabilised by $S$, i.e.\@ for all $s \in S$, $\Pi \circ s = \Pi = s \circ \Pi$.

  But then, we can construct $2m$ independent stabilisers of $\Pi'$, namely:
  \[\tikzfig{PauliWeb/state-stabilisers}\]
  By \textcite{borghansZXcalculusQuantumStabilizer2019}, we know that we have corresponding Pauli webs and trivially the output of these Pauli webs, restricted to the first $n$ qubits commutes with $\Pi$.

  Furthermore, we know that $N(S) / S$ is spanned by $2n - 2m$ generators based on which we can construct $2n - 2m$ stabilisers of $\Pi'$
  \[
    \tikzfig{PauliWeb/state-logicals}
  \]
  where in the last step, we use the fact that Pauli strings are self-inverse, i.e.\@ $l \circ l = id$.

  But then we have given $2n$ stabilisers of $\Pi'$, therefore spanning all possible stabilisers and thus all possible Pauli webs of $\Pi$ up to detecting regions.
  As detecting regions do not affect the output of the Pauli webs and all outputs of the provided Pauli webs commute with $\Pi$, we can conclude that the output of all Pauli webs on $\Pi$ commute with $\Pi$.
\end{proof}

Now, we can prove:
\begin{restatable}{proposition}{webTwo}
  \label{prop:logicals}
  Let $C_t$ be the measurement circuit of the first $t$ time steps of a Floquet code with Pauli webs $P$, stabilising Pauli webs $S$ and co-stabilising webs $S_c$.
  Then the equivalence classes $P / (S \cup S_c)$ are in bijective correspondence with $N(S_t) / S_t$ observed by the map $b: P / (S \cup S_c) \to N(S_t) / S_t \dblcolon E \in P / (S \cup S_c) \mapsto out(e) \text{ for some } e \in E$.
\end{restatable}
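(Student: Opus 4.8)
The plan is to mirror the structure of the proof of \autoref{prop:webBijectiveStabs}, but now modulo both stabilising and co-stabilising webs, and to use the two auxiliary propositions (\autoref{prop:circuit-decomposition} and \autoref{prop:commuting-webs}) to handle the two new facts that were automatic in the stabiliser case: that $out(w)$ actually \emph{normalises} $S_t$ (rather than lying in it), and that every logical coset is hit. First I would check that $b$ is well-defined: if $w_i, w_j \in P$ lie in the same class of $P/(S\cup S_c)$, then $w_j = w_i \,\Delta\, u$ for some $u$ that is a product of stabilising and co-stabilising webs, so $out(w_j) = out(w_i)\,out(u)$; since co-stabilising webs highlight no output edges and stabilising webs contribute an element of $S_t$, we get $out(w_j) \equiv out(w_i)$ in $N(S_t)/S_t$. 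To see that $out(w_i)$ genuinely lies in $N(S_t)$, I would invoke \autoref{prop:circuit-decomposition} to write $C_t = \Pi \circ U$ with $\Pi$ a projector stabilising exactly $S_t$; a Pauli web on $C_t$ restricts/pushes through $U$ to a Pauli web on $\Pi$, and by \autoref{prop:commuting-webs} its output commutes with $\Pi$, hence (conjugating back by $U$ is irrelevant on the output side) $out(w_i)$ commutes with every element of $S_t$, i.e.\ $out(w_i) \in N(S_t)$.

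Next I would prove injectivity, following the stabiliser-case template. Suppose $w_i \in E_1$, $w_j \in E_2$ with $out(w_i) \equiv out(w_j)$ mod $S_t$, so $out(w_i)\,out(w_j) \in S_t$. By \autoref{prop:webBijectiveStabs} (surjectivity of $b$ there) pick a stabilising web $s$ with $out(s) = out(w_i)\,out(w_j)$. Then $u \coloneqq w_i \,\Delta\, w_j \,\Delta\, s$ is a Pauli web whose output is trivial; hence $u$ highlights no output edges, i.e.\ $u$ is co-stabilising, so $u \in S_c$. Rearranging, $w_j = w_i \,\Delta\, s \,\Delta\, u$ with $s \in S$ and $u \in S_c$, which shows $w_i$ and $w_j$ differ by an element of $\langle S \cup S_c\rangle$ and therefore $E_1 = E_2$. (I would spell out here that products of stabilising webs with co-stabilising webs stay inside the relevant classes so that the quotient $P/(S\cup S_c)$ is well-formed in the sense of the \emph{Quotient equivalence classes} definition; this is the analogue of the check done implicitly in \autoref{prop:webBijectiveStabs}.)

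For surjectivity, take any logical coset represented by $l \in N(S_t)$. Using $C_t = \Pi\circ U$ again and the ``process--state'' trick from the proof of \autoref{prop:commuting-webs}: an element of $N(S)/S$ for the doubled state $\Pi'$ yields, via \textcite{borghansZXcalculusQuantumStabilizer2019}, a stabiliser of $\Pi'$ and hence a Pauli web on $\Pi$ whose restriction to the output qubits is $l$; pushing this web back through $U$ gives a Pauli web $w$ on $C_t$ with $out(w) = l$. Finally I would argue that $w$ is not a combination of stabilising and co-stabilising webs — if it were, $out(w) = l$ would already lie in $S_t$, contradicting that $l$ represents a non-trivial logical — so $w \in P$ genuinely, and $b([w]) = [l]$. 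Assembling well-definedness, injectivity and surjectivity gives the bijection.

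The main obstacle I anticipate is the surjectivity argument: the clean correspondence ``Pauli web on a projector $\leftrightarrow$ stabiliser of the doubled state'' lives at the level of the projector $\Pi$, and transporting webs back and forth across the unitary factor $U$ (and across the process--state duality) while keeping precise track of which edges count as ``output'' edges and of the $\pm 1$ phase caveat requires care. A secondary subtlety is verifying that the three sets $P$, $S$, $S_c$ satisfy the closure hypothesis needed for $P/(S\cup S_c)$ to be a legitimate quotient of equivalence classes — i.e.\ that $\Delta$-multiplying a general web by a stabilising or co-stabilising web stays in $P$ — which follows from the fact (\textcite{bombinUnifyingFlavorsFault2024}) that $\Delta$ of two Pauli webs is a Pauli web, but should be stated explicitly.
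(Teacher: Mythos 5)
Your proposal is correct and follows the same overall architecture as the paper's proof: establish well-definedness via \autoref{prop:circuit-decomposition} and \autoref{prop:commuting-webs}, then injectivity, then surjectivity via the correspondence of \textcite{borghansZXcalculusQuantumStabilizer2019}. Two places diverge. For injectivity, the paper's text starts from logicals $l_1, l_2$ in \emph{different} cosets of $S_t$ and deduces that the corresponding webs are in different classes of $P/(S\cup S_c)$; as written, that is the contrapositive of well-definedness rather than of injectivity. Your argument --- assume $out(w_i)\,out(w_j)\in S_t$, pull back a stabilising web $s$ with that output using \autoref{prop:webBijectiveStabs}, and observe that $w_i\,\Delta\,w_j\,\Delta\,s$ has trivial output and hence is co-stabilising --- directly establishes the implication $b(E_1)=b(E_2)\Rightarrow E_1=E_2$ and is the cleaner route. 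For surjectivity, you re-run the doubled-state/process--state construction from the proof of \autoref{prop:commuting-webs}, whereas the paper argues more directly: since $l$ commutes with $\Pi$, the circuit $\Pi\circ U$ stabilises with $U l U^\dagger$ on the inputs and $l$ on the outputs, which \textcite{borghansZXcalculusQuantumStabilizer2019} turns into a Pauli web $w$ with $out(w)=l$. Both work; the direct route avoids transporting webs back and forth across the bent-wire Choi picture. One small clean-up: your final step checking that $w$ is not a combination of stabilising and co-stabilising webs is unnecessary, because $P$ in the statement denotes the set of \emph{all} Pauli webs, not only the logical ones --- the quotient by $S\cup S_c$ already handles that. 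Finally, note (as you do) that $S\cup S_c$ is not itself $\Delta$-closed, so the quotient should be read as being by the subgroup it generates; the paper's own well-definedness step uses $w_i\,\Delta\,s\,\Delta\,s_c$ and thus implicitly makes the same reading.
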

\begin{proof}
  First, we show that the map $b$ is well-defined.
  Let $w_i$ be a Pauli web.
  By \autoref{prop:circuit-decomposition}, we know that $C_t$ can be written as $\Pi \circ U$ for a projector $\Pi$ with stabilisers $S_t$ and some unitary $U$.
  But then there exists an equivalent Pauli web $w'_i$ on $\Pi \circ U$ with the same inputs and outputs.
  By \autoref{prop:commuting-webs}, we know that $out(w_i)$ must commute with $\Pi$.
  But then $out(w_i) \in N(S_t)$.
  Next, we assume that Pauli webs $w_i, w_j$ are in the same equivalence class $E \in P / (S \cup S_c)$.
  Thus, we know that there exists a $s \in S, s_c \in S_c$ such that $w_i \otimes s \otimes s_c = w_j$.
  But then $out(w_j) = out(w_i \otimes s \otimes s_c)$.
  But as $out(s_c) = I$, we know there exists a stabiliser $out(s)$ such that $out(w_i) \otimes out(s) = out(w_j)$.
  Thus, $w_i$ and $w_j$ map to the same equivalence class and therefore, the map is well defined.

  Next, we show that the map is injective.
  Let the logical operators $l_1 \in L_1, l_2 \in L_2$ be in different quotient groups $L_1, L_2 \in N(S_t) / S_t$.
  Then for all $s \in S_t$, $l_1 \otimes s \not= l_2$.
  But then, there exists no stabilising Pauli web that makes the output of the two corresponding Pauli webs the same.
  As the only type of Pauli webs in $S \cup S_c$ that can change the outputs are stabilising Pauli webs, no Pauli web in $S \cup S_c$ can make the webs of $l_1$ and $l_2$ the same and thus, they must be in different equivalence classes.

  Next, we show that the map is surjective.
  Let $l$ be a logical operator.
  We can write $C_t$ as $\Pi \circ U$ for a projector $\Pi$ with stabilisers $S_t$ and some unitary $U$.
  As $l$ is a logical operator, we know that it commutes with $\Pi$.
  But then:
  \[
    \tikzfig{prerequisites/logicals-proof}
  \]
  Thus, the diagram stabilises when we place $UlU^\dagger$ on the inputs and $l$ on the outputs.
  By~\parencite{borghansZXcalculusQuantumStabilizer2019}, there exists a corresponding Pauli web $w$.
  As we have $out(w) = l$, $b$ is surjective and therefore bijective.
\end{proof}

\section{Distance-preserving rewrites --- Deferred proofs}
\label{appendix:distance-preserving-rewrites}

\inductivePlusDistRewrite*
\begin{proof}
  Given that we have $n$ four-legged spiders, each of which has two internal edges, we have a total of $2n$ internal edges and, therefore, $4n$ potential independent edge flips to consider.
  We observe that any detecting region has to include an even number of the four-legged boundary edges.

  We can construct a basis of size $n - 1$ for all detecting regions by considering the regions that include the first outer spider and the $i$-th outer spider for $i \in [2, n]$.
  We have:
  \[
    \tikzfig{rewrites/recursive-spiders/detecting-regions}
  \]
  As there are $n - 1$ such regions forming the parity-check matrix $P$, it has a nullspace of size $4n - (n - 1) = 3n + 1$.

  We now give a basis for this null space, prove that any basis vector can be pushed out without increasing its weight and then argue that any combination of basis vectors can be pushed out without increasing their weight.

  The first $2n$ basis vectors $\vec{g_1}, \dots, \vec{g_{2n}}$ are single edge flips of type Z\@.
  Since all detecting regions are green, these cannot be detected and, thus, live in the null space of $P$.
  As we can push spiders of the same type past each other, we can push these flips, no matter where they occur, to the boundary without increasing their weight.

  Secondly, we consider the error $\vec{r_0}$:
         \[\tikzfig{rewrites/recursive-spiders/error-2}\]
  Since each detecting region overlaps with this error on an even number of legs, it is not detectable.
  However, this error is trivial, as firing $s$ removes all edge flips.

  The final set of errors $\vec{r_1}, \dots, \vec{r_{n}}$ consists of errors on both the internal edges of the four-legged spiders.
         \[\tikzfig{rewrites/recursive-spiders/error-1}\]
  Each detecting region either highlights both internal edges of an outer spider or neither, therefore, this error is not detectable.
  However, by firing the corresponding spider, we can push the error to the boundary edges.

  We now have $2n + 1 + n = 3n + 1$ independent errors, thereby spanning the nullspace of $P$.
  We now argue that any undetectable error can be pushed to the boundary edges.
  We know that any error $\vec{v}$ in the nullspace of $P$ can be expressed as a linear combination of the basis vectors, i.e.\@ $\vec{v} = \lambda_{g_1} \vec{g_1} + \dots +  \lambda_{g_{2n}} \vec{g_{2n}} + \lambda_{r_0} \vec{r_0} + \lambda_{r_1} \vec{r_1} + \dots +  \lambda_{r_n} \vec{r_n}$.

  As for the proofs above, we will first consider errors that only consist of X flips.
  First, we consider errors that are linear combinations of $\vec{r_1}, \dots, \vec{r_{n}}$.
  Since these do not have shared edge flips, $|\lambda_{r_1} \vec{r_1} + \dots +  \lambda_{r_{n}} \vec{r_{n}}| = \lambda_{r_1} |\vec{r_1}| + \dots +  \lambda_{r_{n}} |\vec{r_{n}}|$.
  Then, since we can push the individual errors to the boundary, we can also push the combined error to the boundary without increasing the distance.

  For errors, of the form $\vec{r_0} + \lambda_{r_1} \vec{r_1} + \dots +  \lambda_{r_{n}} \vec{r_{n}}$, we observe that they all have weight $n$.
  For any $\vec{r_i}$ ($i > 0$) added to $\vec{r_0}$, we add one edge flip and remove one, therefore preserving the weight of the total error.

  Since such an error is composed of basis elements, we know that we can push it out.
  If $\sum \lambda_{r_i} \leq \frac{n}{2}$, that is, if at most half of $r_1, \dots, r_{n}$ are included, then we can push the errors of each basis out individually.
  This creates an error of weight at most $n$.

  However, if $\sum \lambda_{r_i} > \frac{n}{2}$, that is, if strictly more than half of $r_1, \dots, r_{n}$ are included, then pushing the errors individually creates an error of weight at least $n + 2$.
  Now, if you fire the spiders according to the Pauli-web that witnesses the all $X$ stabiliser, i.e.\@ fire all spiders in red, you get an edge flip for each boundary edge.
  This flips the errors on each external edge, resulting in at most $(2n + 1) - \left(n + 2\right) = n - 1$ edge flips on the boundary.
  Therefore, such an error can also be pushed to the boundary without increasing its weight.

  As we only have green spiders and the X errors can be pushed out, we know that any combination of errors can therefore be pushed to the outside without increasing their weight.
  Therefore, the recursive rewrite for $(2n + 1)$-legged spiders satisfies \autoref{prop:pushing-out}.
  As the LHS has no internal edges, it also satisfies \autoref{prop:pushing-out} in the other direction and so the rewrite is distance-preserving.
\end{proof}

\singleQubitPauli*
\begin{proof}
  The RHS has one red detecting region on all the internal edges.
  Therefore, we only have to consider $Z$ errors of even weight:
  \[\tikzfig{floquetification/rewrite-proof}\]
  As such, we can push undetectable $Z$ errors to the boundary nodes without increasing their weight.
  $X$ errors on the internal edges are stabilised by the $Z$ spiders or can be pushed to the boundary without increasing in weight.
  Similarly, $Y$ errors do not increase in weight.
  As the LHS has no internal edges, this rewrite is distance-preserving.
\end{proof}
\section{Measurement-circuit flow --- Deferred proofs}

\begin{proposition}
    \label{appendix:proof-f-bound}
    Let 
    \[
 f(n) =
    \begin{dcases}
 0 & \text{if } n = 4\\
 f\! \left(\frac{n}{2}\right) & \text{if } n \bmod 4 = 0\\
 1 + f\left(n + 2\right) & \text{if } n \bmod 4 = 2\\
    \end{dcases}
\]
We have $f(n) \leq \log_2(n)$.
\end{proposition}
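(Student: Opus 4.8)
The plan is to read the recursion as a single ``trajectory'' and bound its length. Note first that a naive strong induction with the target bound $f(n)\le\log_2 n$ does \emph{not} close: in the case $n\equiv 2\pmod 4$ one gets $f(n)=1+f\!\left(\tfrac{n+2}{2}\right)\le 1+\log_2\!\tfrac{n+2}{2}=\log_2(n+2)>\log_2 n$. So instead of a step-by-step induction I would bound the total number of recursion steps at once.

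First I would record that $f$ is defined exactly on the even integers $n\ge 4$, and unroll the third case by one step: if $n\equiv 2\pmod 4$ then $n+2\equiv 0\pmod 4$, hence $f(n)=1+f(n+2)=1+f\!\left(\tfrac{n+2}{2}\right)$. Defining $\sigma(n)=n/2$ when $4\mid n$ and $\sigma(n)=(n+2)/2$ when $n\equiv 2\pmod 4$, one checks in the two arithmetic cases that $\sigma(n)$ is again an even integer with $4\le\sigma(n)\le n/2+1<n$, and that $f(n)\le 1+f(\sigma(n))$ for every even $n>4$ (with equality in the second case and $f(n)=f(\sigma(n))$ in the first).

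Next I would iterate: set $n_0=n$ and $n_{t+1}=\sigma(n_t)$ until the value $4$ is reached, say at step $T$. This is well-defined because $(n_t)$ is strictly decreasing, consists of even integers that are $\ge 4$, and from any even value $>4$ a further step exists; hence the sequence must reach exactly $4$. Telescoping $f(n_t)\le 1+f(n_{t+1})$ down to $f(4)=0$ then gives $f(n)\le T$.

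Finally I would bound $T$. From $n_{t+1}=\sigma(n_t)\le n_t/2+1$, a one-line induction yields $n_t\le n/2^t+2$ for all $t\le T$; and for every $t<T$ we have $n_t\ne 4$, hence $n_t\ge 6$ (it is even and $\ge 4$), so $6\le n/2^t+2$, i.e.\ $2^t\le n/4$, i.e.\ $t\le\log_2 n-2$. Applying this with $t=T-1$ gives $T\le\log_2 n-1$, so $f(n)\le T<\log_2 n$; the case $n=4$ is immediate since $f(4)=0\le 2$. The only delicate part is the bookkeeping around $\sigma$ in the second step (checking it stays even and in $[4,n)$, so the trajectory is well-defined and terminates at exactly $4$); everything after that is the routine ``geometric decay plus additive constant'' estimate.
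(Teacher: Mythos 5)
Your proof is correct, and it pursues the same idea as the paper's: bound $f(n)$ by the number of halving steps in the recursion trajectory from $n$ down to $4$, which is at most about $\log_2 n$. You are more careful than the paper, however; in particular your $\sigma$-formalization, the explicit inequality $n_t \le n/2^t + 2$, and the termination argument rigorously handle the ``$+2$'' perturbation from case three that the paper's sketch (``$n$ will be larger than $4\cdot 2^i$'') glosses over and, taken literally, gets slightly wrong (e.g.\ $n=6$, $i=1$). Your opening remark that a direct strong induction on the target bound does not close is also a correct and useful observation.
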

\begin{proof}
    We observe that only the third case increases the value of $f(n)$. 
    Therefore, $f(n)$ is largest if this case occurs the most often which is at most every other time.
    Working backwards from the base case, the most often we can get case three is by alternating between case two and case three.
    If we do this for $i$ steps, $n$ will be larger than $4 \cdot 2^i$.
    But then we have:
    \begin{align*}
        \log_2(n) &\geq \log_2(4 * 2^i) \\
        &= i + 2 \geq i = f(n)
    \end{align*}
    Thus, even in the worst case scenario, the bound holds and therefore it must hold in all cases. 
\end{proof}

\begin{proposition}
    \label{appendix:proof-g-bound}
    Let 
    \[
    g(n) =
        \begin{dcases}
    2 & \text{if } n = 4\\
    n + 2g\!\left(\frac{n}{2}\right) & \text{if } n \bmod 4 = 0\\
    g\left(n + 2\right) & \text{if } n \bmod 4 = 2\\
        \end{dcases}
    \]
We have $g(n) \leq 2 n \log_2(n)$.
\end{proposition}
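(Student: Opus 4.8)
The plan is to prove, by strong induction on the even integers $n \ge 4$, an invariant slightly stronger than the stated bound, and then relax it. The bound $2n\log_2(n)$ is not itself inductive: in the case $n \equiv 2 \pmod 4$ the recursion gives $g(n) = g(n+2)$, and any bound strictly increasing in $n$ is smaller at $n$ than at $n+2$, so one cannot close the induction directly. Instead I would carry the invariant $g(n) \le 2^{K}(K-1)$ with $K := \lceil \log_2(n) \rceil$. This quantity is constant on each pair $\{n, n+2\}$ with $n \equiv 2 \pmod 4$: such an $n$ is $\ge 6$ and hence not a power of two, so the least power of two that is $\ge n$ is also $\ge n+2$ (both are even), i.e.\@ equals the least power of two $\ge n+2$ — which is exactly what lets the invariant survive that step.

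To make the induction well-founded I would first unfold the recursion in the case $n \equiv 2 \pmod 4$ (so $n \ge 6$) one step further, writing $g(n) = g(n+2) = (n+2) + 2\,g(m)$ with $m := (n+2)/2$, where $m$ is even, $4 \le m < n$, and $\lceil \log_2(m) \rceil = \lceil\log_2(n+2)\rceil - 1 = K-1$. Then the argument runs: the base case $n=4$ is immediate since $g(4) = 2 \le 4 = 2^{2}(2-1)$; for $n \equiv 0 \pmod 4$ with $n \ge 8$ one has $\lceil\log_2(n/2)\rceil = K-1$, so the induction hypothesis gives $g(n) = n + 2g(n/2) \le n + 2^{K}(K-2) \le 2^{K} + 2^{K}(K-2) = 2^{K}(K-1)$, using $n \le 2^{K}$; and for $n \equiv 2 \pmod 4$, the hypothesis applied to $m$ gives $g(n) = (n+2) + 2g(m) \le (n+2) + 2^{K}(K-2) \le 2^{K} + 2^{K}(K-2) = 2^{K}(K-1)$, now using $n+2 \le 2^{K}$, which holds because $n$ is even and not a power of two. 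Finally the stated bound follows from the invariant: $K = \lceil\log_2(n)\rceil$ gives $2^{K-1} < n$, hence $2^{K} \le 2n$, and $K < \log_2(n) + 1$, hence $K - 1 \le \log_2(n)$; since $n \ge 4$ forces $K \ge 2$, both $2^{K}$ and $K-1$ are positive, so $g(n) \le 2^{K}(K-1) \le 2n\log_2(n)$.

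The main obstacle is recognizing that the target bound is too slack to propagate through the $n \equiv 2 \pmod 4$ step and then choosing an invariant that is simultaneously inductive there and strong enough to imply the target; once $2^{\lceil\log_2(n)\rceil}(\lceil\log_2(n)\rceil - 1)$ is in hand, the remaining work is routine arithmetic (the facts $n \le 2^{K}$, and for $n \equiv 2 \pmod 4$ that $n + 2 \le 2^{K}$ and $\lceil\log_2((n+2)/2)\rceil = K - 1$). As a fallback one could instead prove separately that $g$ is non-decreasing on the even integers $\ge 4$ (an easy strong induction: $g(n) = g(n+2)$ when $n \equiv 2 \pmod 4$, and $g(n+4) - g(n) = 4 + 2(g(n/2+2) - g(n/2)) \ge 0$ when $n \equiv 0 \pmod 4$) together with $g(2^{k}) \le 2^{k}(k-1)$ for $k \ge 2$, and then bound $g(n)$ by $g$ evaluated at the least power of two $\ge n$.
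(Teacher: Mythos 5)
Your proof is correct, and it takes a genuinely different route from the paper's. The paper performs a direct strong induction on the target bound itself, checking $n = 4, 6, 8, 10, 12$ as base cases and splitting on $n \bmod 4$. In its $n \equiv 2 \pmod 4$ case, the paper unfolds $g(n) = g(n+2) = (n+2) + 2g\bigl(\tfrac{n+2}{2}\bigr)$ and then substitutes the strengthened intermediate bound $g(m) \leq 2m\log_2 m - m$ for $m = \tfrac{n+2}{2}$, flagged only as ``same reasoning as above''; several lines of arithmetic then recover $g(n) \leq 2n\log_2 n$ for $n > 12$. That intermediate bound is strictly stronger than the induction hypothesis and the ``above'' algebra delivers it directly only when $m \equiv 0 \pmod 4$; when $m \equiv 2 \pmod 4$ the paper's own Case 2 chain yields only the weaker target bound for $m$, so the argument really wants the strengthened statement carried as the invariant throughout (or, equivalently, a separate monotonicity lemma). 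Your proposal sidesteps this cleanly by choosing a genuinely inductive invariant $g(n) \leq 2^{K}(K-1)$ with $K = \lceil\log_2 n\rceil$: the two key observations --- that $K$ is unchanged under $n \mapsto n+2$ when $n$ is even and not a power of two, and that $\lceil\log_2((n+2)/2)\rceil = K-1$ --- let both inductive cases close in one line using $n \leq 2^K$ (resp.\ $n+2 \leq 2^K$), with a single base case $n=4$ and no residual thresholds such as the paper's ``$n > 12$''. The final relaxation $2^{K}(K-1) \leq 2n\log_2 n$ is routine. As a bonus, your invariant is a tighter bound than the stated one, which is consistent with the paper's own remark that it did not optimise the constant.
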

\begin{proof}
    We prove this inductively. \\
    Base case: \\
    For $n = 4, 6, 8, 10, 12$, we can manually check that the equation holds. \\\\
    Inductive step: \\
    Let's assume the IH holds for all $n' < n$. Then we will show that it also holds for $n$. \\
    Case 1: $n \bmod 4 = 0$.\\
    We have:
    \begin{align*}
        g(n) &= n + 2g\!\left(\frac{n}{2}\right) \\
        &\leq n + 2 \times \left(2 \times \frac{n}{2} \times \log_2(\frac{n}{2})\right) & \text{(IH)} \\
        &= n + \left( 2 \times n \times (\log_2(\frac{n}{2}))\right) \\
        &= n + \left( 2 \times n \times (\log_2(n) - \log_2(2))\right) \\
        &= n + \left( 2 \times n \times (\log_2(n) - 1)\right) \\
        &= n + \left( 2 \times n \log_2(n) - 2 \times n \right) \\
        &= 2 \times n \log_2(n) - n \\
        &\leq  2 \times n \log_2(n)
    \end{align*}
    Case 2: $n \bmod 4 = 2$.\\
    We have:
    \begin{align*}
        g(n) &= g(n + 2) \\
        &= n + 2 + 2g\!\left(\frac{n + 2}{2}\right) \\
        &\leq n + 2 + 2 \times \left(2 \times \frac{n + 2}{2} \log_2(\frac{n + 2}{2}) - \frac{n + 2}{2}\right) & \text{(same reasoning as above)} \\
        &= n + 2 + \left(2 \times (n + 2) \log_2(\frac{n + 2}{2}) - (n + 2)\right) \\
        &= 2 \times (n + 2) \log_2(\frac{n + 2}{2}) \\
        &= 2 \times (n + 2) (\log_2(n + 2) - 1) \\
        &= 2 \times n (\log_2(n + 2) - 1) + 4 (\log_2(n + 2) - 1)\\
        &= 2 \times n \log_2(n + 2) - 2n + 4 \log_2(n + 2) - 4\\
        &\leq 2 \times n \log_2(n) + n - 2n + 4 \log_2(n + 2) - 4\\
        &= 2 \times n \log_2(n) - n + 4 \log_2(n + 2) - 4\\
        &\leq 2 \times n \log_2(n) & \text{for } n > 12\\
    \end{align*}
    But then we have proven that the bound holds. 
    We could prove this bound for a lower constant, however, as we did not optimise our procedure for gate count, we omit this here. 
\end{proof}

\section{Floquetification --- Deferred proofs}
\begin{proposition}
    \label{prop:removing-swaps-422}
    We have: 
    \begin{center}
        \resizebox{\columnwidth}{!}{
            \tikzfig{floquetification-2/422-removing-swaps/422-removing-swaps-1}
             = \tikzfig{floquetification-2/422-removing-swaps/422-removing-swaps-4}
    }
    \end{center}
\end{proposition}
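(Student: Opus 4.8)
The plan is to reduce everything to the two algebraic laws of the infinity notation, \eqref{infty-rewrite-unroll} and \eqref{infty-rewrite-reorder}, together with OCM. The key structural observation is that one iteration of the left-hand measurement schedule can be written as a swap-free block $g$ (collecting all the single- and two-qubit Pauli measurements and local Cliffords) followed by a residual wire permutation $\sigma$, where $\sigma$ is a $3$-cycle on three of the six wires, so that $\sigma^{3} = \mathrm{id}$. First I would apply \eqref{infty-rewrite-unroll} with $k = 3$ to replace the infinite repetition of $\sigma \circ g$ by the infinite repetition of $(\sigma \circ g)^{3}$, which only lengthens the schedule but leaves the circuit it represents unchanged.

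Next I would push the three copies of $\sigma$ to the back of the unrolled three-iteration block. Commuting a wire permutation $\sigma$ past a block of operations $O$ replaces $O$ by its relabelled copy $O^{\sigma}$ (the same operations acting on permuted wires); since $\sigma$ is a permutation, $O^{\sigma}$ is again a composition of one- and two-qubit operations, so the circuit stays in low-weight Clifford and measurement form, and this move does not change the interpreted map (it is just a relabelling of wires, i.e.\@ repeated use of OCM and \eqref{infty-rewrite-reorder}). Carrying this out for the first two copies of $\sigma$ turns $(\sigma \circ g)^{3}$ into $g \circ g^{\sigma} \circ g^{\sigma^{2}}$ followed by $\sigma^{3}$, and since $\sigma^{3} = \mathrm{id}$ the trailing permutation disappears. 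The result is a swap-free block which, after an OCM rearrangement, is exactly the circuit on the right-hand side of \autoref{prop:removing-swaps-422}; the final cosmetic step is to recognise that this block already has internal period equal to half an iteration (as noted in the discussion preceding the proposition) and present it in the compact form, which is one more application of \eqref{infty-rewrite-unroll} read backwards together with OCM.

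The main obstacle is purely combinatorial bookkeeping: one must tabulate precisely how conjugation by the $3$-cycle relabels the wires of every gate in each of the three unrolled copies, and check that the three residual swaps genuinely compose to the identity rather than to some other permutation. Getting the direction of \eqref{infty-rewrite-reorder} right also requires care, since it only permits operations to be pulled off the \emph{front} of the infinitely repeating gadget, never appended at the back, so the swaps must be migrated forward rather than accumulated at the end directly. Once this relabelling is made explicit, soundness is immediate, because every step in the derivation is either one of the infinity-notation equalities, an OCM rearrangement, or a relabelling of wires by a permutation — none of which alters $\interp{\cdot}$.
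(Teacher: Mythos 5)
Your proposal matches the paper's proof in structure and substance: the paper also unrolls three times via \eqref{infty-rewrite-unroll}, uses OCM to conjugate the residual permutation through the swap-free blocks so that $\sigma^3 = \mathrm{id}$ cancels it, and then compacts the result with a second (reverse) application of \eqref{infty-rewrite-unroll}, exploiting that the swap-free unrolled block repeats with a shorter period. Your observation that the swap cancellation should be carried out entirely within one unrolled period (rather than accumulated at the tail of the infinite repetition) is exactly why the paper's proof needs only \eqref{infty-rewrite-unroll} and OCM and never invokes \eqref{infty-rewrite-reorder}.
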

\begin{proof}
    We have:
    \begin{align*}
        &\scalebox{.55}{\tikzfig{floquetification-2/422-removing-swaps/422-removing-swaps-1}} \\
        &\overset{\eqref{infty-rewrite-unroll}}{=}
        \scalebox{.55}{\tikzfig{floquetification-2/422-removing-swaps/422-removing-swaps-2}} \\
        &\overset{(\hyperlink{eq:OCM}{OCM})}{=}
        \scalebox{.55}{\tikzfig{floquetification-2/422-removing-swaps/422-removing-swaps-3}}
        \overset{\eqref{infty-rewrite-unroll}}{=}
        \scalebox{.55}{\tikzfig{floquetification-2/422-removing-swaps/422-removing-swaps-4}} \\
    \end{align*}
\end{proof}
\section{Relationship to previous Floquetification procedures}
\label{appendix:floquetification-teague}
The work in this paper builds upon the Floquetification procedure proposed by~\textcite{townsend-teagueFloquetifyingColourCode2023}.
We can gain a new perspective on their proposed Floquetification of the $\interp{4, 2, 2}$ code using our distance-preserving rewrites and our local perspective of individually Floquetified measurements composed together.
Reframing their work in our notation gives a different, less topological perspective and showcases the differences.
Their procedure results in a $\interp{12, 2, 2}$ code, whereas our method results in a $\interp{6, 2, 2}$ code.

\subsection{Step 1 - Writing out the measurement circuit}
Similar to our procedure, \textcite{townsend-teagueFloquetifyingColourCode2023} start by writing out the measurement schedule of the $\interp{4, 2, 2}$ code as a ZX diagram. 
However, instead of writing the $XXXX$ measurement as a conjugated $ZZZZ$ measurement, they make use of the \TextColour-rule to remove the Hadamard boxes: 
\[\tikzfig{floquetification-2/422-operation-circuit} \quad = \quad \tikzfig{floquetification-teague/teague-1}\]

\subsection{Step 2.1 - Rewriting measurements}
They expand the measurements:
\setlength{\jot}{18pt}
\begin{align*}
  \scalebox{.9}{\tikzfig{floquetification-teague/teague-1}} \ \ \ =\ \ &\scalebox{.9}{\tikzfig{floquetification-teague/teague-2}}\\
  \overset{\text\tiny(\hyperlink{eq:OCM}{OCM})}{=}\ \ &\scalebox{.9}{\tikzfig{floquetification-teague/teague-3}}
\end{align*}
This procedure expands the four-legged spiders into a single weight-two measurement instead of two weight-two measurements. 
While this is not distance-preserving, it happens to be distance-preserving in their particular setting. 
This is due to the fact that they assume that measurements are error-free. 
As we had to repeat the weight-two measurement to catch measurement errors, in their case, a single weight-two measurement does not decrease the distance. 
This relationship between distance-preservation and the assumed noise model is further explored and formalised in \textcite{rodatzFaultTolerance2025}.

\subsection{Step 2.2 - Composing measurements}
They then merge the destructive measurements and state preparations. 
For this, they also pull the state preparations through the repetition using $\eqref{infty-rewrite-reorder}$.
However, in contrast to our procedure, they apply $\eqref{infty-rewrite-reorder}$ to all measurements instead of just the last ones. 
Thus, while we only pull through two measurements, they pull through four, and they have to do this in both bases.
This leads to them having a qubit overhead of eight qubits instead of the two qubit overhead incurred by our procedure:
\begin{align*}
  &\scalebox{.9}{\tikzfig{floquetification-teague/teague-3}}
  \overset{\text\tiny\eqref{infty-rewrite-reorder}}{=}
  \scalebox{.9}{\tikzfig{floquetification-teague/teague-4}} \\
  &\overset{\text\tiny\eqref{one-pauli}}{=}
  \scalebox{.9}{\tikzfig{floquetification-teague/teague-5}}
  \overset{\text\tiny\eqref{one-pauli}}{=}
  \scalebox{.9}{\tikzfig{floquetification-teague/teague-6}}
\end{align*}
Then, they introduce the single-qubit measurements:
\begin{align*}
    &\scalebox{.9}{\tikzfig{floquetification-teague/teague-7}} = \scalebox{.9}{\tikzfig{floquetification-teague/teague-8}}
\end{align*}

Similar to above, they do not repeat the measurement twice, which is ok in their setting where they assume measurements to be error-free.

An interesting difference that emerges between the two procedures is that by initially removing the Hadamards using \TextColour, all Pauli operators they obtain are CSS, i.e.\@ only $X$-type or only $Z$-type Paulis.
We could adapt our method in a similar manner but that would also increase our qubit overhead from two to four, as we would now have to pull through all the destructive measurements instead of just the measurements of one type. 

To remove the swaps, they unroll and reorder:
\begin{align*}
    &\scalebox{.9}{\tikzfig{floquetification-teague/teague-8}} \\
    &\overset{\text\tiny\eqref{infty-rewrite-unroll}}{=} \scalebox{.9}{\tikzfig{floquetification-teague/teague-9}} \\
    &\overset{\text\tiny(\hyperlink{eq:OCM}{OCM})}{=} \scalebox{.8}{\tikzfig{floquetification-teague/teague-10}}
\end{align*}

\textcite{townsend-teagueFloquetifyingColourCode2023} perform one more step that does not feature in our procedure: they observe that they can reorder the measurements as many of them are non-overlapping:

\begin{align*}
    &\scalebox{.8}{\tikzfig{floquetification-teague/teague-10}} \\
    &\overset{\text\tiny(\hyperlink{eq:OCM}{OCM})}{=} \scalebox{.8}{\tikzfig{floquetification-teague/teague-11}}
\end{align*}
\noindent
This reordering of measurements corresponds to the shift in spacetime or reordering of world-lines, discussed in their paper.
This is enabled by the fact that they have more overhead, leading to these non-overlapping measurements. 
In our Floquetification of the $\interp{4, 2, 2}$ code, this is not possible.

\subsection{Step 3 - Extracting the new code}
As \textcite{townsend-teagueFloquetifyingColourCode2023} removed the Hadamards before their Floquetification procedure, they do not have any single-qubit Cliffords left in the circuit and can immediately read off the measurement schedule of the corresponding Floquet code.

\subsection{Floquetifying the colour code}
\textcite{townsend-teagueFloquetifyingColourCode2023} also Floquetify the bulk of the hexagonal colour code.
However, their expansion of the weight-six measurement is not distance-preserving, even if you assume measurements to be error-free as it consists of the following step:
\[\tikzfig{floquetification-teague/teague-weight-six}\]
Here, a single $Z$ error on the vertical wires will spread out to become an error of weight two or even three.
Therefore, we hypothesise that the resulting Floquet code will likely have a lower distance. 
\section{Fault equivalence vs distance preservation}
\label{appendix:fe-equals-dist-pres}
In a follow-up work to this paper \parencite{rodatzFaultTolerance2025}, we introduce fault-equivalent rewrites as:
\begin{definition}
  \label{def:fault-equvialent}
 A semantic preserving rewrite $r : D_1 \to D_2$ is distance-preserving if, for any error $E_2$ in $D_2$, either
    \begin{itemize}
        \item $E_2$ is detectable in $D_2$, or
        \item there exists an error $E_1$ in $D_1$ such that $|E_1| \leq |E_2|$ and $\interp{D_1^{E_1}} = \interp{D_2^{E_2}}$.
    \end{itemize}
 and similarly for any error $E_1$ in $D_1$ the same condition holds.
\end{definition}

These rewrites are exactly the ones that satisfy the conditions for \autoref{prop:dist-preservation}. 
In \autoref{prop:dist-preservation}, we show that all fault-equivalent rewrites are distance-preserving. 
Here, we will additionally show that all distance-preserving rewrites are fault-equivalent, i.e.\@ that the two concepts are equivalent.

For this proof, we have to introduce one more idea from \parencite{rodatzFaultTolerance2025}, namely, idealised edges.
Idealised edges are edges of a ZX diagram that are idealised as being error-free. 
Thus, valid errors on that diagram may not include edge-flips that involve idealised edges. 
All other definitions, such as ZX distance, remain the same; however, now with some edges being idealised. 
We draw idealised edges in purple. 
We have: 
\begin{theorem}
 All distance-preserving rewrites are fault-equivalent. 
\end{theorem}
\begin{proof}
 We prove the claim by contrapositive.  
 Assume that a rewrite \( r: D_1 \to D_2 \) is \emph{not} fault-equivalent.  
 We will show that \( r \) cannot be distance-preserving.  
 To do so, we construct a context \( D \) such that the composed diagrams \( D(D_1) \) and \( D(D_2) \) differ in distance.

 Since \( r \) is not fault-equivalent, there exists a non-trivial, undetectable error on one of the diagrams that has no corresponding error of equal or smaller weight on the other.  
 Without loss of generality, let \( E_1 \) be the smallest, undetectable error on \( D_1 \) such that there is no \( E_2 \) on \( D_2 \) satisfying \( |E_2| \le |E_1| \) and \( \interp{D_2^{E_2}} = \interp{D_1^{E_1}} \).

 We now construct a context \( D \) such that the only non-trivial, undetectable errors on \( D(D_1) \) are those equivalent to \( E_1 \).  
 Using the Choi-Jamiołkowski isomorphism, let \( \psi \) denote the state obtained from \( D_1 \) by bending its input wires, and let \( \psi_{E_1} \) be the corresponding state from \( D_1^{E_1} \):
    \[
 \tikzfig{psi-definition}
    \]
 Let \( S \) be any quantum error-correcting code encoding one logical qubit with stabiliser generators \( S_1, \dots, S_m \).  

 Next, we construct a Clifford map \( C \) that maps \( \psi \) to \( \overline{\ket{0}} \) and \( \psi_{E_1} \) to \( \overline{\ket{1}} \).  
 Observe that errors do not change the stabilisers of the state, only whether the state lies in the \( +1 \) or \( -1 \) eigenspace of those stabilisers \parencite{ruschCompletenessFault2025a}.
 Therefore, \( \psi \) and \( \psi_{E_1} \) have the same stabilisers, except that they might live in different eigenspaces of those stabilisers. 
 We now construct a basis $P_1, \dots, P_n$, such that \( \psi \) and \( \psi_{E_1} \) all live in the same eigenspaces of $P_1, \dots, P_{n-1}$ and in a different eigenspace of $P_n$. 
 We can do this by taking any other basis $P'_1, \dots, P'_n$. 
 There must exist at least some stabiliser that the two states do not agree on. 
 Without loss of generality, we will assume that this is $P'_n$. 
 Then we can take $P_n = P'_n$. 
 For all other $i$, we take $P_i = P'_i$ if \( \psi \) and \( \psi_{E_1} \) live in the same eigenspace of $P'_i$ and $P_i = P'_iP_n$ otherwise. 
 Clearly, this forms a basis. 

 Now we can construct the desired Clifford map. 
 We chose \( S_1, \dots, S_m \) to be the stabilisers of the quantum error correction code $S$. 
 As $S$ has one logical qubit, we know that $m = n - 1$. 
 Let $S_X$ be the stabiliser of $\overline{\ket{0}}$.
 Then, we can construct $C$ such that it maps $P_i$ to $S_i$ for all $i \in [1, n-1]$ and $P_n$ to $S_X$.
 Clearly, this map maps \( \psi \) to \( \overline{\ket{0}} \) and \( \psi_{E_1} \) to \( \overline{\ket{1}} \).
 
 Then consider the following diagram:
    \[
 \tikzfig{dist-pres-context}
    \]
 We now consider $D$ to be the diagram in the dotted box, displaying $D(D_1)$ above.
 We depict \( C \) and the measurements of \( S_1, \dots, S_m \) in purple to indicate that they are idealised as error-free.

 In the error-free case, this diagram is non-zero, corresponding to $\overline{\ket{0}}$.
 If $E_1$ occurs, the diagram corresponds to $\overline{\ket{1}}$; thus, an undetectable, non-trivial error has occurred.

 We now show that all other non-trivial errors on \( D_1 \) are detectable.  
 Errors do not alter the stabilisers of the state, but only whether the state lies in the \( +1 \) or \( -1 \) eigenspace of those stabilisers \parencite{ruschCompletenessFault2025a}.
 Let some error $E'$ flip different stabilisers than $E_1$. 
 Then it must flip at least $P_i$ for $i \not= n$.
 But then $C\ket{\psi_{E'}}$ lives in the \(-1\) eigenspace of $S_i$. 
 Therefore, when measuring $S_i$, $E'$ is detected. 
 Thus, all non-trivial errors that are not equivalent to $E_1$ must be detectable. 
 Hence, the only non-trivial, undetectable errors on \( D(D_1) \) are those equivalent to \( E_1 \), and the distance of \( D(D_1) \) is \( wt(E_1) \) as we assumed $E_1$ to be the smallest error in its equivalence class.

 But then, if we now place $D_2$ into the same context, then, by assumption, either no equivalent error exists or all equivalent errors have a higher weight than $E_1$. 
 In the first case, $D(D_2)$ has distance infinity, and in the latter case, it has distance larger than $E_2$. 
 Either way, replacing $D_1$ by $D_2$ changes the distance of the overall diagram and therefore, non-fault-equivalent rewrites cannot be distance-preserving.
\end{proof}

\end{document}